\documentclass[a4paper,english,cleveref, autoref, thm-restate]{lipics-v2021}

\newcommand{\ignore}[1]{}
\newcommand{\ceq}{\coloneqq}
\newcommand{\enex}{\hfill{$\triangle$}}
\newcommand{\enrem}{\hfill{$\triangle$}}
\usepackage{xspace,todonotes,stmaryrd} 
\usepackage{stmaryrd}
\usepackage{mathtools,hyperref}

\usepackage{tikz-cd}

\newcommand{\Dom}{C}
\newcommand{\atuple}[1]{{\langle #1 \rangle}}

\newtheorem{question}[theorem]{Question}

\usepackage{tikz,hyperref}

\bibliographystyle{plainurl}

\title{Datalog-Expressibility for Monadic and Guarded Second-Order Logic} 


\titlerunning{Datalog for Guarded Second-Order Logic} 

\relatedversion{}
\relatedversiondetails{Previous Version}{https://arxiv.org/abs/2010.05677}

\author{Manuel Bodirsky}{TU Dresden, Institut f\"ur  Algebra, Germany 
\and \url{https://tu-dresden.de/mn/math/algebra/bodirsky}}{manuel.bodirsky@tu-dresden.de}{https://orcid.org/0000-0001-8228-3611}{
Funded by the European Union (project POCOCOP, ERC Synergy grant No. 101071674). Views and opinions expressed are however those of the author(s) only and do not necessarily reflect those of the European Union or the European Research Council Executive Agency. Neither the European Union nor the granting authority can be held responsible for them.}

\author{Simon Kn\"auer}{TU Dresden, Institut f\"ur  Algebra, Germany 
\and \url{https://tu-dresden.de/mn/math/algebra/das-institut/beschaeftigte/simon-knaeuer}}{simon.knaeuer@tu-dresden.de}{}{The author was supported by DFG Graduiertenkolleg 1763 (QuantLA).}

\author{Sebastian Rudolph}{TU Dresden, 
Computational Logic Group, Germany
\and \url{http://clgroup.inf.tu-dresden.de}}{sebastian.rudolph@tu-dresden.de}{https://orcid.org/0000-0002-1609-2080}{
The author has received funding from the European Research Council (Grant Agreement no. 771779, DeciGUT), the Bundesministerium für Bildung und Forschung (BMBF, Federal Ministry of Education and Research) in the Center for Scalable Data Analytics and Artificial Intelligence (ScaDS.AI), and by BMBF and DAAD (German Academic Exchange Service) in project 57616814 (SECAI, School of Embedded and Composite AI).}

\authorrunning{M. Bodirsky, S. Kn\"auer, S. Rudolph} 

\Copyright{Manuel Bodirsky and Simon K\"auer and Sebastian Rudolph} 

\begin{CCSXML}
<ccs2012>
<concept>
<concept_id>10003752.10003790.10003799</concept_id>
<concept_desc>Theory of computation~Finite Model Theory</concept_desc>
<concept_significance>500</concept_significance>
</concept>
</ccs2012>
\end{CCSXML}

\ccsdesc[500]{Theory of computation~Finite Model Theory}


\keywords{Monadic Second-order Logic, Guarded Second-order Logic, Datalog, constraint satisfaction, homomorphism-closed, conjunctive query, primitive positive formula, pebble game, $\omega$-categoricity} 

\category{} 

\relatedversion{} 

\supplement{}



\nolinenumbers 

\hideLIPIcs  

\EventEditors{John Q. Open and Joan R. Access}
\EventNoEds{2}
\EventLongTitle{42nd Conference on Very Important Topics (CVIT 2016)}
\EventShortTitle{CVIT 2016}
\EventAcronym{CVIT}
\EventYear{2016}
\EventDate{December 24--27, 2016}
\EventLocation{Little Whinging, United Kingdom}
\EventLogo{}
\SeriesVolume{42}
\ArticleNo{23}

\newcommand{\red}[1]{\textcolor{red}{#1}}

\newcommand{\bA}{\mathfrak{A}}
\newcommand{\bB}{\mathfrak{B}}
\newcommand{\bC}{\mathfrak{C}}
\newcommand{\bD}{\mathfrak{D}}

\newcommand{\bI}{\mathfrak{I}}
\newcommand{\bP}{\mathfrak{P}}
\newcommand{\bH}{\mathfrak{H}}
\newcommand{\bS}{\mathfrak{S}}
\newcommand{\bT}{\mathfrak{T}}

\DeclareMathOperator{\GSO}{GSO}

\DeclareMathOperator{\dom}{dom}

\DeclareMathOperator{\Aut}{Aut}

\newcommand{\ar}{\mathop{ar}}
\newcommand{\goal}{\ensuremath{{\sf goal}}\xspace}
\DeclareMathOperator{\Csp}{CSP}


\begin{document}

\maketitle

\begin{abstract}
We characterise the sentences in Monadic Second-order Logic (MSO) that are over finite structures equivalent to a Datalog program, in terms of an existential pebble game.  
We also show that for every class ${\mathcal C}$ of finite structures that can be expressed in MSO and is closed under homomorphisms, 
and for all $\ell,k \in {\mathbb N}$, there exists a \emph{canonical} Datalog program $\Pi$ of width $(\ell,k)$ in the sense of Feder and Vardi. 
The same characterisations also hold for Guarded Second-order Logic (GSO), which properly extends MSO. To prove our results, we show that every
class ${\mathcal C}$ in GSO whose complement is closed under homomorphisms is a finite union of constraint satisfaction problems (CSPs) of $\omega$-categorical structures.  
The intersection 
of MSO and Datalog is known to contain
the class of \emph{nested monadically defined queries (Nemodeq)}
; likewise, we show that the intersection 
of GSO and Datalog contains all problems that can be expressed by the more expressive language of 
\emph{nested guarded queries (GQ$^+$)}. 
Yet, by exploiting our results, we can show that neither of the two query languages can serve as a characterization, as we exhibit a CSP whose complement corresponds to a query in the intersection 
of MSO and Datalog 
that is not expressible in 
GQ$^+$.
\end{abstract}

\tableofcontents

\section{Introduction}
\emph{Monadic Second-order Logic (MSO)} is a logic of great importance in theoretical computer science. While it significantly exceeds the expressive capabilities of First-order Logic (allowing for expressing crucial structural properties like reachability or connectedness), it is still computationally reasonably well-behaved: 
By B\"uchi's theorem (see, e.g.,~\cite{Libkin}), the formal languages definable in MSO are precisely the regular ones; by Courcelle's theorem~\cite{EngelfrietCourcelle}, MSO sentences can be evaluated
in polynomial time on classes of structures whose treewidth is bounded by a constant. 
The latter result even holds for the more expressive logic of \emph{Guarded Second-order Logic (GSO)}~\cite{GraedelHirschOtto,Grohe-GSO}, which extends First-order Logic by second-order quantifiers over \emph{guarded relations}. 

Another fundamental formalism in theoretical computer science, 
which is particularly heavily studied in database theory and logic-based knowledge representation, is \emph{Datalog} (see, e.g.,~\cite{DantsinEGV01,Libkin}). Every fixed Datalog program can be evaluated on given finite structures in polynomial time. If a linear order is present and if the available relations are closed under complement, Datalog can even express \emph{all} polytime-computable queries \cite{AbiteboulHV95}.     
That is, like MSO and GSO, Datalog strikes a good balance between expressivity and good mathematical and computational properties.


Neither of the two formalisms subsumes the other in terms of what can be expressed. 
Yet, in various scenarios, we are interested in simultaneously having
the good computational properties of expressibility in Datalog 
\emph{and} having the good computational properties of expressibility in MSO or GSO. 
A wide variety of popular query formalisms (among them (unions of) conjunctive queries, (2-way conjunctive) regular path queries, monadic Datalog, the recently introduced almost monadic Datalog ~\cite{CateDalmauOprsal}, guarded Datalog, monadically defined queries, or nested monadically defined queries) are known to be both in Datalog and GSO~\cite{RK2013}. Query languages expressible both in Datalog and MSO have been shown to warrant decidable query entailment over logical theories exhibiting universal models of finite treewidth and cliquewidth~\cite{FLOR23a,FLORb}.
Also, all these formalisms have favourable properties when it comes to static analysis, most notably decidable query containment~\cite{RK2013,BourhisKroetzschRudolph}. 
Note that on the contrary, query containment in unrestricted Datalog is undecidable, as is query containment in unrestricted MSO / GSO\footnote{This follows from Trakhtenbrot's theorem which states that satisfiability of first-order logic in the finite is undecidable.}. The same holds for query entailment under logical theories, even if the latter are extremely simple. So it is really the interplay of the restrictions imposed by both formalisms that is required to ensure decidability of  central tasks in databases and knowledge representation. This makes the semantic intersection of Datalog and MSO / GSO so interesting and worthwhile investigating.

\medskip

In this article, we investigate two questions, which will 
turn out to be closely related: 
\begin{enumerate}
\item Which classes of finite structures are expressible both in MSO/GSO and in Datalog? 
\item Which \emph{constraint satisfaction problems (CSPs)} 
can be expressed in MSO/GSO? 
\end{enumerate}
Indeed, as our investigation reveals, the versatile and vibrant discipline of constraint satisfaction problems offers many suitable notions and tools for our endeavour to understand ``MSO/GSO $\cap$ Datalog''.
While this might come as a surprise at the first glance, interesting correspondences connecting CSPs with expressivity characterizations of logical formalisms from database theory and knowledge representation have been observed before \cite{BienvenuCLW14,FeierKL19}.

\medskip

We recall that, for a structure $\bB$ with a finite relational signature $\tau$, the \emph{constraint satisfaction problem for $\bB$} is the class of all finite $\tau$-structures that homomorphically map to~$\bB$. 
It is well known that, whenever $\bB$ is finite, its constraint satisfaction problem can already be expressed in a small fragment of MSO, called monotone monadic SNP (MMSNP,~\cite{FederVardi}). 
Yet, this correspondence does not hold for infinite $\bB$. 

\begin{example}\label{expl:Q}
The constraint satisfaction problem for the structure $({\mathbb Q};<)$, which is the class of all finite acyclic digraphs $(V;E)$, 
 cannot be expressed in MMSNP~\cite{Book}. It can, however, be expressed in MSO over the class of all finite digraphs by the sentence
$$\forall X \neq \emptyset \; \exists x \in X \; \forall y \in X \colon \neg E(x,y).$$
To see this, note that if $(V;E)$ is a finite digraph such that there exists a non-empty $X \subseteq V$ such that for every $x \in X$ there is a $y \in Z$ with $E(x,y)$, then $X$ contains a directed cycle, and hence has no homomorphism to $({\mathbb Q};<)$. Conversely, if $(V;E)$ 
is a finite digraph and contains no directed cycle, then every non-empty subset of $V$ must contain a sink, i.e., a vertex $x$ with no outgoing $E$-edges, and hence satisfies the given sentence.
\enex
\end{example}

The class of CSPs of arbitrary infinite structures $\bB$ is quite large; 
the following is easy to see (see, e.g.,~\cite[Lemma 1.1.8]{Book}). 

\begin{remark}\label{rem:CSP} 
For every finite relational signature $\tau$, a class $\mathcal D$ of finite 
$\tau$-structures is the CSP of some countably infinite structure if and only if 
 \begin{itemize}
 \item it is closed under disjoint unions, and 
 \item it contains any $\bA$ that 
maps homomorphically to some $\bA' \in \mathcal D$.  
\end{itemize} 
A class of finite relational structures that satisfies these two conditions is simply called \emph{a CSP}. 
\end{remark}

The second condition in Remark~\ref{rem:CSP}, which is sometimes also referred to as \emph{closure under inverse homomorphisms}, can be equivalently rephrased by requiring that the \emph{complement}\footnote{Whenever in this paper we refer to the complement of some class $\mathcal D$ of finite $\tau$-structures, we mean the class of all finite $\tau$-structures not contained in $\mathcal D$ -- we will make $\tau$ explicit or it will be clear from the context.} of ${\mathcal D}$ is 
 \emph{closed under homomorphisms}, where
 a class $\mathcal C$ is closed under homomorphisms 
if for any structure $\bA \in \cal C$ that maps homomorphically to some $\bC$ we have $\bC \in \cal C$.
Examples of classes of structures that are closed under homomorphisms
naturally arise from Datalog. 
We say that a class $\mathcal C$ of finite $\tau$-structures \emph{is definable in Datalog} (or shorter: \emph{is in Datalog})\footnote{Warning: 
Feder and Vardi~\cite{FederVardi} say that a CSP is in Datalog if its \emph{complement} in the class of all finite $\tau$-structures is definable in Datalog in our sense.}
if there exists a Datalog program $\Pi$ with a distinguished nullary predicate 
 \goal\  such that $\Pi$ derives  \goal\  on a finite $\tau$-structure if and only if the structure is in $\mathcal C$; in this case, we may denote $\mathcal C$ by $\llbracket \Pi \rrbracket$. 
%
While each class of $\tau$-structures that is definable in Datalog is closed under homomorphisms, not every such class corresponds to the complement of a CSP. 


\begin{example}
Consider, for unary predicates $R$ and $B$, the class $\mathcal C_{R,B}$ of finite $\{R,B\}$-structures $\bA$ such that $R^{\bA}$ is empty or $B^{\bA}$ is empty. 
Clearly, ${\mathcal C}_{R,B}$ is not closed under disjoint unions, so it cannot correspond to the CSP of any structure. Yet, it is easy to see that a finite structure is in ${\mathcal C}_{R,B}$ if and only if the Datalog program consisting of just the one rule 
$$ \goal \; {:}{-} \; R(x), B(y) $$
does not derive \goal\  on that structure.
\enex
\end{example}

A noteworthy subclass of all CSPs are the CSPs of structures $\bB$ that
are countably infinite and \emph{$\omega$-categorical}. 
A structure $\bB$ is \emph{$\omega$-categorical} if all countable models of $\bB$'s first-order theory are isomorphic to $\bB$. A well-known example of an $\omega$-categorical structure is $({\mathbb Q};<)$, a result going back to Cantor~\cite{CantorDicht}. 
We would like to mention that 
constraint satisfaction problems
of $\omega$-categorical structures
 can be evaluated in polynomial time on any class of structures whose treewidth is bounded by some constant $k \in {\mathbb N}$, by a result of Bodirsky and Dalmau~\cite{BodDalJournal}, using Datalog programs; however, the concept of bounded treewidth is not needed in the present article. 
 
Two important parameters of a Datalog program $\Pi$ are the maximal arity $\ell$ of its auxiliary predicates (IDBs), and the maximal number $k$ of variables per rule in $\Pi$ -- we then say that $\Pi$ has 
\emph{width $(\ell,k)$}, following the terminology of Feder and Vardi~\cite{FederVardi}. 
The special case of $\ell=1$ is typically referred to as \emph{monadic Datalog}. 
The parameters $\ell$ and $k$ are important in theory, but they also bear some practical relevance:
When evaluating $\Pi$ on a given structure $\bA$ with domain $A$, the memory needed is bounded by $O(|A|^\ell)$ and the number of computation steps by $O(|A|^k)$. 
The polynomial-time algorithm for instances of an $\omega$-categorical CSP of treewidth at most $k$
 presented by Bodirsky and Dalmau is in fact a Datalog program of width $(k-1,k)$.  
 A Datalog program $\Pi$ is called \emph{sound} for a class of $\tau$-structures ${\mathcal C}$ if 
 $\llbracket \Pi \rrbracket \subseteq {\mathcal C}$. 
 Bodirsky and Dalmau showed that if ${\mathcal C}$ is the complement of the CSP of an $\omega$-categorical $\tau$-structure~$\bB$
 then there exists, for any $\ell,k \in {\mathbb N}$, a Datalog program $\Pi$ of width $(\ell,k)$ such that 
 \begin{itemize}
 \item $\Pi$ is sound for ${\mathcal C}$, and 
\item $\llbracket  \Pi' \rrbracket  \subseteq \llbracket \Pi \rrbracket$ for 
every Datalog program $\Pi'$ of width $(\ell,k)$
which is sound for ${\mathcal C}$. 
\end{itemize}
This $\Pi$ (which is unique up to immaterial syntactic variations) is then referred to as the \emph{canonical Datalog program of width $(\ell,k)$ for ${\mathcal C}$}.

Interestingly and conveniently, there is a known game-theoretic characterisation capturing whether the canonical Datalog program of width $(\ell,k)$ for ${\mathcal C}$ derives \goal\ on a given $\tau$-structure $\bA$~\cite{BodDalJournal}. 
This characterisation is based on the existential pebble game from finite model theory, which is played on a pair $(\bA,\bB)$ of structures. In more detail, the \emph{existential $(\ell,k)$-pebble game} is played by two players, called \emph{Spoiler} and \emph{Duplicator} (see, e.g., \cite{DalmauKolaitisVardi,FederVardi,KolaitisVardiDatalog}). Spoiler starts by placing $k$ pebbles on elements $a_1,\dots,a_k$ of $\bA$, and Duplicator responds by placing $k$ pebbles $b_1,\dots,b_k$ on $\bB$. If the map that sends $a_1,\dots,a_k$ to $b_1,\dots,b_k$ is not a partial homomorphism from $\bA$ to $\bB$, then the game is over and Spoiler wins. Otherwise, 
Spoiler removes all but at most $\ell$ pebbles from $\bA$, and Duplicator has to respond by removing the corresponding pebbles from $\bB$. Then Spoiler can again place all his pebbles on $\bA$, and Duplicator must again respond by placing her pebbles on $\bB$. If the game continues forever,
then Duplicator wins. 
If $\bB$ is a finite, or more generally a countable $\omega$-categorical structure then Spoiler has a winning strategy for the existential $(\ell,k)$-pebble game on $(\bA,\bB)$ if and only if the canonical Datalog program of width $(\ell,k)$ for $\Csp(\bB)$ derives \goal\ on $\bA$ (Theorem~\ref{thm:BD}). 
This connection played an essential role in proving Datalog inexpressibility results, for example for the class of finite-domain CSPs~\cite{AtseriasBulatovDawar} (leading to a complete classification of those finite structures $\bB$ such that the complement of $\Csp(\bB)$ can be expressed in Datalog~\cite{BoundedWidthJournal}).


\subsection*{Results and Consequences}
In this article, we prove that 
every class of finite structures in GSO 
whose complement is closed under 
homomorphisms is a finite union 
of CSPs that can also be expressed in GSO (Theorem~\ref{thm:decomp};
an analogous statement holds for MSO). 
Moreover, every CSP in GSO is the CSP of  
a countable $\omega$-categorical structure (Corollary~\ref{cor:omega-cat}). 
We highlight that this result elegantly generalises known results: 
\begin{itemize}
\item our result generalises the fact that every CSP in FO is the CSP of a countable $\omega$-categorical structure, which can be seen from combining Rossmann's theorem~\cite{Rossman08} with a generalisation of the theorem of Cherlin, Shelah, and Shi~\cite{CherlinShelahShi} from graphs to general relational structures. 
\item Our result also generalises the fact that every CSP in the logic MMSNP (for \emph{monotone monadic strict NP}~\cite{FederVardi}) is the CSP of a countable $\omega$-categorical structure~\cite{BodDalJournal}.
\end{itemize} 
It follows that every class 
of finite structures that can be expressed both  
in GSO and in Datalog 
is an intersection of \emph{finitely many} complemented CSPs of $\omega$-categorical structures. 
In contrast, it is \emph{not} generally true that a Datalog program describes a finite intersection of complements of CSPs (we present a counterexample in Example~\ref{expl:datalog-completement}). 

Next, we use the connection between GSO and CSPs to present a characterisation of those GSO sentences 
$\Phi$ that are over finite structures equivalent to a Datalog program (Section~\ref{sect:main}). 
 Our characterisation involves a variant of the existential pebble game from finite model theory, which we call the \emph{$(\ell,k)$-game}.
This game is defined for a homomorphism-closed class ${\mathcal C}$ of finite $\tau$-structures, and it is played by the two players Spoiler and Duplicator on a finite $\tau$-structure $\bA$ as follows. 
\begin{itemize}
\item Duplicator picks a countable $\tau$-structure $\bB$ such that $\Csp(\bB) \cap {\mathcal C} = \emptyset$. 
\item The game then continues as the existential $(\ell,k)$-pebble game played by Spoiler and Duplicator on $(\bA,\bB)$, as described above. 
\end{itemize}
We use results from~\cite{BodDalJournal} to show that a GSO sentence $\Phi$ is over finite structures equivalent to a Datalog program of width $(\ell,k)$ if and only if 
\begin{itemize}
\item $\llbracket \Phi \rrbracket$ is closed under homomorphisms, and 
\item Spoiler wins the existential $(\ell,k)$-game for $\llbracket \Phi \rrbracket$ on $\bA$ if and only if $\bA \models \Phi$. 
\end{itemize}
We also show that for every class of finite models ${\mathcal C}$ which is closed under homomorphisms and expressible in GSO, 
and for all $\ell,k \in {\mathbb N}$, there exists a canonical Datalog program $\Pi$ of width $(\ell,k)$ for $\mathcal C$ (Theorem~\ref{thm:main}). 

Our next series of results concern the most expressive syntactically defined formalism known from the literature that is contained in both Datalog and MSO, namely \emph{nested monadically defined queries}~\cite{RK2013}, as well as 
the most expressive syntactically defined formalism known from the literature that is 
 contained in both Datalog and GSO, namely \emph{nested guarded queries}~\cite{BourhisKroetzschRudolph}.

We prove that there are problems in the intersection of Datalog and GSO that cannot be expressed as a nested guarded query (Section~\ref{sect:GQ}). 
To prove this result, we introduce a modified version of the existential pebble game, which we call the \emph{nested guarded pebble game}, and which captures precisely the expressiveness of nested guarded queries (Theorem~\ref{thm:multi}). 
We also present an example of a problem which lies in the intersection of Datalog and MSO that cannot be expressed by nested monadically defined queries (and not even by nested guarded queries;  Corollary~\ref{cor:gqplus}).

Finally, we provide evidence that the class of CSPs that can be expressed in MSO is \emph{not} contained in the intensively studied class of CSPs for reducts of finitely bounded homogeneous structures: 
in Section~\ref{sect:coNP} we present an example of a CSP which is expressible in MSO and coNP-complete, and hence not the CSP of a reduct of a finitely bounded homogeneous structure, unless NP=coNP  (Proposition~\ref{prop:couterexpl}). As an illustration of the results of Section~\ref{sect:main}, we also prove that the given MSO sentence is not equivalent to a Datalog program.  

Some of the results of this article until Section~\ref{sect:main} have been announced  in a conference paper with the title ``Datalog-Expressibility for Monadic and Guarded Second-Order Logic'' in the proceedings of ICALP'21~\cite{BKR}; the results in Section~\ref{sect:GQ} about nested monadic and nested guarded queries were not yet present in the conference version.

\section{Preliminaries}
In the entire text, $\tau$ denotes a finite signature containing relation symbols and sometimes also constant symbols. 
If $R \in \tau$ is a relation symbol, we write
$\ar(R)$ for its arity. 
If $\bA$ is a $\tau$-structure we use the corresponding roman capital letter $A$ to denote the domain of $\bA$; the domains of structures are assumed to be 
non-empty.  If $R \in \tau$, then $R^{\bA} \subseteq A^{\ar(R)}$ denotes the corresponding relation of $\bA$.

A \emph{primitive positive $\tau$-formula} (in database theory also referred to as \emph{conjunctive query}) is a 
first-order $\tau$-formula without disjunction, negation, and universal quantification. Every 
primitive positive formula is equivalent to a formula
of the form $$\exists x_1,\dots,x_n (\psi_1 \wedge \cdots \wedge \psi_m)$$
where $\psi_1,\dots,\psi_m$ are atomic $\tau$-formulas, i.e., formulas built from relation symbols in $\tau$ or equality. 
An \emph{existential positive $\tau$-formula} 
is a first-order $\tau$-formula without negation and universal quantification. 
We write $\psi(x_1\dots,x_n)$ if the free variables of $\psi$ are from $x_1,\dots,x_n$. 
If $\bA$ is a $\tau$-structure
and $\psi(x_1,\dots,x_n)$ is a $\tau$-formula, then the relation $$R \ceq \{(a_1,\dots,a_n) \mid \bA \models \psi(a_1,\dots,a_n) \}$$ is
called the relation \emph{defined by $\psi$ over $\bA$}; if $\psi$ can be chosen to be primitive positive (or existential positive)
then $R$ is called \emph{primitively positively definable} (or \emph{existentially positively definable}, respectively).

For all logics over the signature $\tau$ considered in this text, we say that 
two formulas $\phi(x_1,\dots,x_n)$ and $\psi(x_1,\dots,x_n)$ are \emph{equivalent (over finite structures)}  if for all (finite) $\tau$-structures $\bA$ and all $a_1,\dots,a_n \in A$ we have 
$$ \bA \models \phi(a_1,\dots,a_n) \Leftrightarrow \bA \models \psi(a_1,\dots,a_n).$$
It is easy to see that every
existential positive $\tau$-formula is a disjunction of primitive positive $\tau$-formulas (and hence referred to as a \emph{union of conjunctive queries} in database theory). 
If $\phi$ is a primitive positive $\tau$-formula
without equality for a relational signature $\tau$, 
then the \emph{canonical database} is the $\tau$-structure $\bA$ whose domain consists of all the variables of $\phi$, and where 
$R^{\bA}$, for $R \in \tau$ of arity $k$, consists of all tuples $(x_1,\dots,x_k)$ for which $\phi$ contains the conjunct $R(x_1,\dots,x_k)$.

Formulas without free variables are called \emph{sentences}; in database theory,
formulas are often called \emph{queries}
and sentences are often called \emph{Boolean queries}.  If $\phi$ is a sentence, we let $\llbracket \phi \rrbracket$ denote the class of all finite models of $\phi$.

A \emph{reduct} of a relational structure 
$\bA$ is a structure $\bA'$ obtained from $\bA$ by dropping some of the relations. $\bA$ is then also called an \emph{expansion} of $\bA'$.

\subsection{Datalog}
\label{sect:datalog}
In this section we consider a finite set $\tau$ of constant and relation symbols, the latter being referred to as \emph{EDBs} (for \emph{extensional database predicates}). 
Let $\rho$ be a finite set of new relation symbols, called the \emph{IDBs} (for \emph{intensional database predicates}). 
A Datalog program is a set of rules of the form 
$$\psi_0 \; {:}{-} \; \psi_1,\dots,\psi_n$$
where $\psi_0$ (called the \emph{head} of the rule) is an atomic $\rho$-formula
and $\psi_1,\dots,\psi_n$ (jointly referred to as the \emph{body} of the rule)
are atomic $(\rho \cup \tau)$-formulas;
we also assume that all rules are \emph{safe}, i.e., that every variable that appears in the head also appears in the body. 
If $\bA$ is a $\tau$-structure, and $\Pi$ is a
Datalog program with EDBs from $\tau$ and IDBs $\rho$, then a $(\tau \cup \rho)$-expansion $\bA'$ of $\bA$ is called a \emph{fixed point of $\Pi$ 
on $\bA$} if $\bA'$ satisfies the sentence
$$ \forall \bar x (\psi_0 \vee \neg \psi_1 \vee \dots \vee \neg \psi_n) $$
for each rule $\psi_0 \; {:}{-} \; \psi_1,\dots,\psi_n$. 
If $\bA_1$ and $\bA_2$ are two $(\rho \cup \tau)$-structures with the same domain $A$, we let $\bA_1 \cap \bA_2$ denote the $(\rho \cup \tau)$-structure with domain $A$ such that 
\begin{itemize}
\item $R^{\bA_1 \cap \bA_2} \ceq R^{\bA_1} \cap R^{\bA_2}$, and
\item $c^{\bA_1 \cap \bA_2} = c^{\bA_1} = c^{\bA_2}$.
\end{itemize}
Note that if $\bA_1$ and $\bA_2$ are two fixed points of $\Pi$ on $\bA$, then $\bA_1 \cap \bA_2$ is a fixed point of $\Pi$ on $\bA$, too. 
Hence, there exists a unique smallest (with respect to inclusion) fixed point of $\Pi$ on $\bA$, which we denote by $\Pi(\bA)$. 

\begin{remark}\label{rem:op-dat} 
There is an equivalent `operational' definition of the semantics of Datalog, and in particular of $\Pi(\bA)$; roughly speaking, we compute $R^{\Pi(\bA)}$ for $R \in \rho$  `bottom up', starting from the empty relation, and adding tuples that must be contained in all fixed points, until we reach a (the) smallest fixed point (for details see, e.g., Libkin's book~\cite[Section 10.5]{Libkin}; an explicit treatment of the equivalence can e.g.\ be found in~\cite[Theorem 8.1.6]{Book}). 
From this equivalent description of $\Pi(\bA)$
it is apparent that if $\bA$ is a finite structure then $\Pi(\bA)$ can be computed in polynomial time in the size of $\bA$.
\enrem 
\end{remark}

If $R \in \rho$, 
we also say that $\Pi$ \emph{defines $R^{\Pi(\bA)}$ on $\bA$}. A Datalog program together with a distinguished predicate $R \in \rho$ may also be viewed as a formula, which we also call a \emph{Datalog query}, and which over a given $\tau$-structure $\bA$ denotes the relation 
$R^{\Pi(\bA)}$. 
If the distinguished predicate has arity $0$, 
we often call it the \emph{goal predicate}; we say that \emph{$\Pi$ derives \goal\ on $\bA$} if 
$\goal^{\Pi(\bA)} = \{()\}$. 
The class ${\mathcal C}$ of finite $\tau$-structures $\bA$ such that $\Pi$ derives \goal\ on $\bA$ is called \emph{the class of finite $\tau$-structures defined by $\Pi$}, and denoted by $\llbracket \Pi \rrbracket$. 
Note that this class ${\mathcal C}$ is definable in  universal second-order logic (we have to express that in every expansion of the input by relations for the IDBs that satisfies all the rules of the Datalog program the goal predicate is non-empty). 

\subsection{Second-Order Logic and Monadic Second-Order Logic}
\emph{Second-order logic} is the extension of first-order logic which additionally allows 
existential and universal quantification over relations;
that is, if $R$ is a relation symbol and $\phi$ is a second-order $\tau \cup \{R\}$-formula, then $\exists R. \phi$ and $\forall R. \phi$
are second-order $\tau$-formulas. 
 If $\bA$ is a $\tau$-structure and $\Phi$ is a second-order $\tau$-sentence, we write $\bA \models \Phi$ (and say that $\bA$ is a model of $\Phi$) if
 $\bA$ satisfies $\Phi$, which is defined in the usual Tarskian style. 
 We write 
  $\llbracket \Phi \rrbracket$ for the class of all finite models of $\Phi$. 
   A second-order formula is called 
 \emph{monadic} if all second-order variables are unary. We use syntactic sugar and also write 
 $\forall x \in X \colon \psi$ instead of 
 $\forall x (X(x) \Rightarrow \psi)$
 and $\exists x \in X \colon \psi$ instead of 
 $\exists x (X(x) \wedge \psi)$. 
 
\emph{Monadic second-order logic (MSO)} is surprisingly powerful for expressing CSPs, which we illustrate with the following example. 

\begin{example}\label{expl:eq-mso}
Let $E := \{(x,x) \mid x \in {\mathbb N}\}$ be the binary equality relation on ${\mathbb N}$ and let 
$D := \{(x,y) \in {\mathbb N}^2 \mid x \neq y\}$ be the binary disequality relation on ${\mathbb N}$. 
The problem $\Csp({\mathbb N};E,D)$ is expressible in MSO. To see this, let $\phi_X$, for a unary relation symbol $X$, be the following $\{E,D,X\}$-sentence.
$$ \exists x. X(x) \wedge \forall x,y \big (X(x) \wedge E(x,y) \Rightarrow X(y) \big) \wedge (\big (X(x) \wedge E(y,x) \Rightarrow X(y) \big)$$
For unary relation symbols $X$ and $Y$ we write $Y \subsetneq X$ as a  shortcut for $$\forall x (Y(x) \Rightarrow X(x)) \wedge \exists x (X(x) \wedge \neg Y(x)).$$ 
Let $\psi_X$ be the sentence that states that $X$ is a smallest set which satisfies $\phi_X$: 
$$ \phi_X \wedge \forall Y (Y \subsetneq X \Rightarrow \neg \phi_Y)$$
Note that if $\bA$ is an $\{E,D\}$-structure
and $X \subseteq A$ satisfies $\psi_X$,
then any homomorphism from $\bA$ to $(\mathbb{N};E,D)$ 
must be constant on $X$. Also note that if
$X,Y \subseteq A$ satisfy $\psi_X$ and $\psi_Y$ 
and $X \cap Y \neq \emptyset$, then $Z := X \cap Y$ satisfies $\psi_Z$. Hence, for every element $a \in A$ there exists a unique smallest set $X = X_a$ that satisfies $\psi_X$ and contains $a$.

We claim that the CSP above can be expressed by
\begin{align}
 \forall X \big (\psi_X 
 \Rightarrow \forall x,y (D(x,y) \Rightarrow (\neg X(x) \vee \neg X(y)) \big).  \label{eq:eq-mso}
 \end{align} 
Indeed, suppose that $\bA$ is an $\{E,D\}$-structure and let $X \subseteq A$
be smallest with the property that it satisfies $\phi_X$. If there are $x,y \in X$ such that $(x,y) \in D^{\bA}$ then clearly there is no homomorphism from $\bA$ to $({\mathbb N};E,D)$. On the other hand, if $\bA$ satisfies~\eqref{eq:eq-mso}, then any map $h \colon A \to {\mathbb N}$ such that  $h^{-1}(h(a)) = X_a$ for all $x \in A$ is a homomorphism from $\bA$ to $({\mathbb N};E,D)$. Clearly, there exists such a map $h$, which proves the claim.
\enex 
\end{example}


\subsection{Guarded Second-Order Logic} 
\emph{Guarded Second-order Logic (GSO)}, introduced by Gr\"adel, Hirsch, and Otto~\cite{GraedelHirschOtto}, 
is the extension of \emph{guarded first-order logic} by second-order quantifiers.  
Guarded (first-order) $\tau$-formulas are defined inductively by the following rules~\cite{Guarded}:
\begin{enumerate}
\item all atomic $\tau$-formulas are guarded $\tau$-formulas;
\item if $\phi$ and $\psi$ are guarded $\tau$-formulas, then so are $\phi \wedge \psi$, $\phi \vee \psi$, and $\neg \phi$. 
\item if $\psi(\bar x, \bar y)$ is a guarded $\tau$-formula and $\alpha(\bar x, \bar y)$ is an atomic $\tau$-formula such that all free variables of $\psi$ occur in $\alpha$ then 
$\exists \bar y \big ( \alpha(\bar x, \bar y) \wedge \psi(\bar x, \bar y) \big )$
and $\forall \bar y \big ( \alpha(\bar x, \bar y) \Rightarrow \psi(\bar x, \bar y) \big )$ 
are guarded $\tau$-formulas. 
\end{enumerate} 
Guarded second-order formulas are defined similarly, but we additionally allow (unrestricted) second-order quantification; GSO generalises Courcelle's logic MSO$_2$ from graphs to general relational structures. 

\begin{definition}\label{def:guarded}
A second-order $\tau$-formula is called \emph{guarded} if it is defined inductively by the rules (1)-(3) for guarded first-order logic and additionally by second-order quantification. 
\end{definition}


There are many semantically equivalent ways of introducing GSO~\cite{GraedelHirschOtto}. 
Let $\bB$ be a $\tau$-structure. Then a tuple $(t_1,\dots,t_n) \in B^n$
is called \emph{guarded in $\bB$} if there exists 
an atomic $\tau$-formula $\phi$ and $b_1,\dots,b_k$
such that $\bB \models \phi(b_1,\dots,b_k)$ and $\{t_1,\dots,t_n\} \subseteq \{b_1,\dots,b_{k} \}$. 
 A relation $R \subseteq B^n$ is called \emph{guarded} if all tuples in $R$ are guarded. 
Note that (for $n = 1$) every element of $B$ is guarded (because of the valid atomic formula $x=x$), and consequently 
all unary relations are guarded. 
If $\Psi$ is an arbitrary second-order sentence, 
we say that a structure $\bB$ \emph{satisfies $\Psi$ with guarded semantics}, in symbols $\bB \models_g \Psi$, if 
all second-order quantifiers in $\Psi$ are evaluated 
over guarded relations only. Note that for MSO sentences, the usual semantics and the guarded semantics coincide.

\begin{proposition}[see Proposition 3.9 in~\cite{GraedelHirschOtto}]
\label{prop:guarded-semantics-equiv}
Guarded Second-order Logic and 
full Second-order Logic with guarded semantics are equally expressive. 
\end{proposition}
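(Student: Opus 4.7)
The plan is to prove both inclusions. The inclusion from GSO into full second-order logic with guarded semantics is essentially syntactic: a GSO sentence $\Phi$ is, as a formula, already a second-order sentence; its guarded first-order quantifiers are a syntactic restriction of ordinary first-order quantifiers and so are interpreted identically under both semantics, and its guarded second-order quantifiers by definition range over guarded relations, which is exactly what the guarded semantics of full second-order logic prescribes. Hence $\bA \models_g \Phi$ and $\bA \models \Phi$ agree for every finite $\tau$-structure $\bA$.

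For the converse direction, I would take an arbitrary second-order sentence $\Psi$ interpreted under guarded semantics, bring it into prenex normal form $Q_1 R_1 \cdots Q_m R_m \, \phi(R_1,\ldots,R_m)$ with $\phi$ first-order, and translate inductively from the outside in. Each second-order quantifier $Q_i R_i$, being restricted to guarded relations by hypothesis, is directly rendered by the corresponding guarded second-order quantifier of GSO. The substantive task is then to rewrite the first-order kernel $\phi$ as an equivalent guarded first-order formula over a signature possibly enriched by further guarded second-order variables introduced during the translation.

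The main obstacle is precisely this rewriting of the first-order kernel, since ordinary first-order quantifiers carry no guarding requirement, whereas rule~(3) of the guarded syntax demands a single atomic guard covering all free variables of the quantified subformula. Following the construction of Gr\"adel, Hirsch, and Otto, I would decompose each unrestricted quantifier $\forall x$ or $\exists x$ as a finite disjunction indexed by the atoms of $\tau$ that could serve as a guard for the tuple currently in scope, with a separate treatment of the \emph{unguarded} case in which no atom of $\tau$ holds of that tuple. Auxiliary existentially quantified guarded second-order variables, introduced near the top of the translated formula, code the extensions of intermediate subformulas on guarded parameter tuples and drive the inductive rewriting to termination. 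Correctness hinges on the hypothesis that the second-order quantifiers in $\Psi$ are restricted to guarded relations, which guarantees that the values of the first-order kernel on non-guarded tuples never influence the overall truth value, so that the disjunctive case analysis can safely disregard them.
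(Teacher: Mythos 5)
The paper offers no proof of this proposition at all: it is imported verbatim from Gr\"adel, Hirsch, and Otto, so there is no in-paper argument to compare against, and your proposal has to stand on its own. It does not, for two reasons.

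First, the ``easy'' direction is not purely syntactic under the paper's Definition~\ref{def:guarded}. There, second-order quantification in GSO is explicitly \emph{unrestricted}: the second-order variables range over all relations, not over guarded ones. So your claim that GSO's second-order quantifiers ``by definition range over guarded relations'' misreads the definition, and the two semantics do not trivially coincide. What actually has to be shown is that a formula whose \emph{first-order} quantifiers are all guarded cannot distinguish a quantified relation $R$ from its guarded part, because every atom occurring in such a formula is only ever evaluated at subtuples of guarded tuples; this is a short induction, but it is an argument, not an observation, and it is sensitive to whether second-order variables are themselves admitted as guards in rule~(3).

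Second, and more seriously, the correctness claim on which your hard direction rests --- that ``the values of the first-order kernel on non-guarded tuples never influence the overall truth value'' --- is false. Guarded semantics restricts only the \emph{second-order} quantifiers; the first-order kernel of a prenex sentence is an arbitrary first-order formula whose truth can depend essentially on unguarded tuples (already $\exists x\, \exists y\, \bigl(P(x) \wedge Q(y) \wedge \neg E(x,y)\bigr)$ is evaluated on pairs $(x,y)$ that need not be guarded). The entire content of the Gr\"adel--Hirsch--Otto theorem is that unrestricted first-order quantification can nonetheless be \emph{simulated} by guarded first-order quantification together with second-order quantification over guarded relations; your sketch assumes this difficulty away rather than resolving it, and the ``disjunction over possible guards plus an unguarded case'' device has no justification once the unguarded case genuinely matters. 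As written, the proposal would not yield a proof; one should either reproduce the actual GHO translation or, as the paper does, simply cite it.
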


It follows that GSO is at least as expressive as MSO. 
There are many Datalog programs that are equivalent to a GSO sentence, but not to an MSO sentence. However, since MSO is surprisingly expressive (see Example~\ref{expl:eq-mso}) it may be quite challenging to \emph{prove} that a specific problem cannot be expressed in MSO.
For example, it is an open problem whether $\Csp({\mathbb Q};B)$
is expressible in MSO, 
where $$B = \{(x,y,z) \in {\mathbb Q}^3 \mid x < y < z \vee z < y < x\}$$ is the so-called \emph{betweenness relation}~\cite{EngelfrietCourcelle}. 
One such MSO-inexpressibility result is part of the following proposition whose proof  is based on a variant of an example of a Datalog query in GSO given in~\cite{BourhisKroetzschRudolph} (Example 2). 

\begin{proposition}\label{prop:not-mso}
There is a Datalog query that can be expressed in  GSO but not in MSO. 
\end{proposition}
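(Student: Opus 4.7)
The plan is to exhibit a concrete Datalog program $\Pi$ over a signature $\tau$ containing a relation symbol of arity at least two, such that $\llbracket \Pi \rrbracket$ is in GSO but not in MSO. Following the hint in the excerpt, we take $\Pi$ to be a variant of Example~2 in \cite{BourhisKroetzschRudolph}, and we ensure that $\Pi$ is \emph{syntactically guarded} in the sense that every rule contains an EDB atom covering all variables appearing in the rule. This is the structural property that bridges Datalog with guarded second-order logic.

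For GSO-expressibility, we use the observation from Section~\ref{sect:datalog} that ``$\Pi$ derives $\goal$ on $\bA$'' is equivalent to ``every fixed point of $\Pi$ on $\bA$ assigns $\{()\}$ to $\goal$''. Under syntactic guardedness, a routine induction on the stages of the least-fixed-point computation shows that every tuple derived for an IDB is already guarded by an EDB atom, so the least fixed point $\Pi(\bA)$ is itself a guarded expansion of $\bA$. Consequently, ``every fixed point assigns $\{()\}$ to $\goal$'' is equivalent to ``every \emph{guarded} fixed point does'', and this latter statement translates directly into a GSO sentence: universally quantify, in guarded fashion, over the $\rho$-expansion, impose the rule clauses $\forall \bar x(\psi_0 \vee \neg\psi_1 \vee \cdots \vee \neg\psi_n)$ (each of which is itself guarded, because the body contains a guard for $\bar x$), and demand non-emptiness of $\goal$.

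For non-MSO-expressibility, we use an Ehrenfeucht-Fra\"iss\'e-style argument. For every $n \in \mathbb{N}$, we construct a pair of finite $\tau$-structures $(\bA_n,\bB_n)$ such that (i) Duplicator wins the $n$-round MSO Ehrenfeucht-Fra\"iss\'e game on $(\bA_n,\bB_n)$, and hence $\bA_n$ and $\bB_n$ satisfy exactly the same MSO sentences of quantifier rank $n$, while (ii) $\bA_n \in \llbracket \Pi \rrbracket$ and $\bB_n \notin \llbracket \Pi \rrbracket$. The MSO equivalence is obtained by the standard composition method, which reduces the game on $(\bA_n,\bB_n)$ to games on matching local pieces of the structures. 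Item (ii) is engineered by exploiting the global binary propagation of the guarded IDBs of $\Pi$: the two families differ in a long-range, pair-based pattern that $\Pi$ detects via iterated application of its guarded rules.

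The main obstacle is the MSO lower bound. Since $\llbracket \Pi \rrbracket$ is automatically closed under homomorphisms, the classical MSO-separators (Hamiltonicity, edge-colorability, etc.) are unavailable, and we cannot invoke any off-the-shelf separation of $\mathrm{MSO}_1$ from $\mathrm{MSO}_2$. The combinatorial design of $(\bA_n,\bB_n)$, together with a careful rank-by-rank analysis of the MSO EF-game, is therefore the crux of the argument; the bookkeeping must balance enough local uniformity that MSO cannot distinguish the two families, with enough global structural difference that the transitive closure computed by the guarded IDB of $\Pi$ picks it up.
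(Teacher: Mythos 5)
Your proposal is only half a proof, and the missing half is the essential one. For the MSO lower bound you describe the \emph{shape} of an argument (families $(\bA_n,\bB_n)$ that are $\equiv^{\mathrm{MSO}}_n$-equivalent yet separated by $\llbracket \Pi \rrbracket$, to be handled by the composition method) but you construct nothing and verify nothing; you yourself call this ``the crux'' and leave it open. As it stands this is a proof obligation, not a proof. The paper avoids the EF-game bookkeeping entirely by a reduction: it fixes first-order formulas $\phi_S,\phi_T,\phi_R,\phi_N$ that interpret the $\{S,T,R,N\}$-signature inside word structures over $\{P_a,P_b,<\}$, so that a word of the form $a^*b^*$ lands in $\llbracket \Pi\rrbracket$ under this interpretation exactly when it equals $a^nb^n$. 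An MSO sentence for $\llbracket \Pi \rrbracket$ would therefore yield an MSO definition of the non-regular language $\{a^nb^n \mid n\ge 1\}$, contradicting B\"uchi's theorem. If you want to salvage your route, you should either carry out the composition argument in full or switch to such an interpretation into words; the latter is far more economical, and it also answers your own (correct) worry that homomorphism-closure rules out the standard off-the-shelf separating examples.

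On the GSO half: your idea is viable and is genuinely different from the paper's, which simply writes down an ad hoc GSO sentence existentially asserting the ladder-shaped derivation pattern via guarded relations $V,W,R'\subseteq R,N'\subseteq N$. Your universal translation (``every guarded fixed point derives \goal'') is more systematic and would generalise to a whole syntactic class of programs. But your guardedness hypothesis is miscalibrated: demanding an EDB atom covering \emph{all} variables of each rule is both too strong (the intended recursive rule $U(x',y') \; {:}{-} \; U(x,y), N(x,x'), N(y,y'), R(x',y')$ has four variables and only binary EDBs, so no such program in this signature can do the required long-range pair propagation) and unnecessary. What you actually need is that each \emph{head} tuple is guarded by an EDB atom in the body, so that the least fixed point is a guarded expansion; the first-order matrix of the rule clauses need not be syntactically guarded, because GSO coincides with full second-order logic under guarded semantics, where first-order quantification is unrestricted. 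With that correction your fixed-point argument (least fixed point guarded, hence ``all fixed points derive \goal'' iff ``all guarded fixed points derive \goal'') goes through.
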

\begin{proof}
Let $\tau$ be the signature consisting of the binary relation symbols $S,T,R,N$, and let $\mathcal C$ be the class of finite $\tau$-structures such that the following Datalog program with one binary IDB $U$ derives \goal.
\begin{align*}
U(x,y) & \; {:}{-} S(x,y) \\
U(x',y') & \; {:}{-} U(x,y), N(x,x'), N(y,y'), R(x',y') \\
\goal & \; {:}{-} U(x,y), T(x,y) 
\end{align*}
On the left of Figure~\ref{fig:leiter} one can find an example of a 
$\{S,T,R,N\}$-structure $\bB$ where the given Datalog program derives \goal. 
Note that the Datalog program derives \goal on a given finite structure if and only if a `ladder' structure as shown on the left of Figure~\ref{fig:leiter} 
homomorphically maps to the structure.  

\begin{figure}
\begin{subfigure}[b]{5.8cm}
	\begin{tikzpicture}[scale=2.6]
		
		\node (A1) at (0,0) [circle,inner sep=1pt, outer sep=2pt, label=center:{$v_1$}] {};
		\node (A2) at (0.5,0)[circle,inner sep=1pt, outer sep=2pt, label=center:{$v_2$}] {};
		\node (A3) at (1,0)[circle,inner sep=1pt, outer sep=2pt, label=center:{$v_3$}] {};
		\node (A4) at (1.5,0)[circle,inner sep=1pt , outer sep=2pt, label=center:{$v_4$}] {};
		\node (a1) at (0,0.5)[circle,inner sep=1pt , outer sep=2pt, label=center:{$w_1$}] {};
		\node (a2) at (0.5,0.5)[circle,inner sep=1pt, outer sep=2pt, label=center:{$w_2$}] {};
		\node (a3) at (1,0.5)[circle,inner sep=1pt , outer sep=2pt, label=center:{$w_3$}] {};
		\node (a4) at (1.5,0.5)[circle,inner sep=1pt , outer sep=2pt, label=center:{$w_4$}] {};
		
		\node () at (0,-0.23)[] {};
		
		%
		%
		%
		%
		%
		
		\node () at (-0.12,0.25)[] {{$S$}};
		\node () at (1.38,0.25)[] {{$T$}};
		\node () at (0.38,0.25)[] {{$R$}};
		\node () at (0.88,0.25)[] {{$R$}};
		\node () at (0.25,0.6)[] {{$N$}};
		\node () at (0.75,0.6)[] {{$N$}};
		\node () at (1.25,0.6)[] {{$N$}};
		\node () at (0.25,-0.1)[] {{$N$}};
		\node () at (0.75,-0.1)[] {{$N$}};
		\node () at (1.25,-0.1)[] {{$N$}};

		\draw[->, >=latex,dotted,  line width=1pt, 	shorten >=4pt, shorten <=4pt]  (A1) to (a1);
		\draw[->, >=latex, line width=0.5pt, shorten >=4pt, shorten <=4pt]  (A2) to (a2);
		\draw[->, >=latex, line width=0.5pt, shorten >=4pt, shorten <=4pt]  (A3) to (a3);
		\draw[->, >=latex, dashed ,  line width=1pt,shorten >=4pt, shorten <=4pt]  (A4) to (a4);

		\draw[->, >=latex, line width=1.2pt, shorten >=5pt, shorten <=5pt]  (A1) to (A2);
		\draw[->, >=latex, line width=1.2pt, shorten >=5pt, shorten <=5pt]  (A2) to (A3);
		\draw[->, >=latex, line width=1.2pt, shorten >=5pt, shorten <=5pt]  (A3) to (A4);
		
		\draw[->, >=latex, line width=1.2pt, shorten >=5pt, shorten <=5pt]  (a1) to (a2);
		\draw[->, >=latex, line width=1.2pt, shorten >=5pt, shorten <=5pt]  (a2) to (a3);
		\draw[->, >=latex, line width=1.2pt,shorten >=5pt, shorten <=5pt]  (a3) to (a4);

	\end{tikzpicture}
	\subcaption{Structure $\mathfrak{B}$}
\end{subfigure}
\begin{subfigure}[b]{5.5cm}
	\begin{tikzpicture}[scale=2.4]
		
		\node (A1) at (0,0) [circle,inner sep=1pt, outer sep=2pt, label=center:{$v_1$}] {};
		\node (A2) at (0.5,0)[circle,inner sep=1pt, outer sep=2pt, label=center:{$v_2$}] {};
		\node (A3) at (1,0)[circle,inner sep=1pt, outer sep=2pt, label=center:{$v_3$}] {};
		\node (A4) at (1.5,0)[circle,inner sep=1pt , outer sep=2pt, label=center:{$v_4$}] {};
		\node (a1) at (0,0.5)[circle,inner sep=1pt , outer sep=2pt, label=center:{$w_1$}] {};
		\node (a2) at (0.5,0.5)[circle,inner sep=1pt, outer sep=2pt, label=center:{$w_2$}] {};
		\node (a3) at (1,0.5)[circle,inner sep=1pt , outer sep=2pt, label=center:{$w_3$}] {};
		\node (a4) at (1.5,0.5)[circle,inner sep=1pt , outer sep=2pt, label=center:{$w_4$}] {};
		
		%
		\node () at (0.25,0)[] {{\Large $<$}};
		\node () at (0.75,0)[] {{\Large $<$}};
		\node () at (1.25,0)[] {{\Large $<$}};
		
		\node () at (0.25,0.5)[] {{\Large $<$}};
		\node () at (0.75,0.5)[] {{\Large $<$}};
		\node () at (1.25,0.5)[] {{\Large $<$}};
		
		\node (x) at (0.75,0.25)[] {{\Large $>$}};

		\draw[
		shorten >=3pt, shorten <=3pt
		]  (x) to[out=180, in=-90]  (a1);
		
		\draw[
		shorten >=4pt, shorten <=3pt
		]  (x) to[out=0, in=90]  (A4);

		\node () at (0,0.7)[] {\textcolor{blue}{$P_b$}};
		\node () at (0.5,0.7)[] {\textcolor{blue}{$ P_b$}};
		\node () at (1.5,0.7)[] {\textcolor{blue}{$ P_b$}};
		\node () at (1.0,0.7)[] {\textcolor{blue}{$ P_b$}};
		
		\node () at (0,-0.2)[]{\textcolor{red}{$P_a$}}; 
		\node () at (0.5,-0.2)[]{\textcolor{red}{$P_a$}}; 
		\node () at (1,-0.2)[]{\textcolor{red}{$P_a$}}; 
		\node () at (1.5,-0.2)[]{\textcolor{red}{$P_a$}};

	\end{tikzpicture}
	\subcaption{Structure $\mathfrak{A}$}
\end{subfigure}
\begin{subfigure}[b]{2cm}
	\begin{tikzpicture}[scale=2.2]
		
		\node () at (0,0)[] {{\Large {\textcolor{red}{$aaaa$}\textcolor{blue}{$bbbb$}}}};

		\node () at (0,-0.7)[] {};

	\end{tikzpicture}
	\subcaption{Word $w_\mathfrak{A}$}
\end{subfigure}

\caption{An example of an $\{S,T,R,N\}$-structure $\bB$ in the class $\mathcal C$ of Proposition~\ref{prop:not-mso}.} 
\label{fig:leiter}
\end{figure}
To show that ${\mathcal C}$ is not MSO-expressible, suppose toward a contradiction
suppose for contradiction that there exists an MSO sentence $\Phi$ such that $\llbracket \Phi \rrbracket = {\mathcal C}$. We use $\Phi$ to construct an MSO sentence $\Psi$ 
which holds on a finite word $w \in \{a,b\}^*$ (represented as a structure with the signature $\{P_a,P_b,<\}$ in the usual way~\cite{Libkin}) if and only if $w \in \{a^nb^n \mid n \geq 1\}$; this contradicts the theorem of B\"uchi-Elgot-Trakhtenbrot (see, e.g.,~\cite{Libkin}),
which states that a language of finite words can be expressed in MSO if and only if the language is regular. 
Let $\Psi_1$ be the MSO sentence obtained from $\Phi$ by replacing all subformulas of $\Phi$ of the form 
\begin{itemize}
\item $S(x,y)$ by a formula $\phi_S(x,y)$ that states that $x$ is the smallest element with respect to $<$, that $P_b(y)$ holds, and that there is no $z < y$ in $P_b$;
\item $T(x,y)$ by a formula $\phi_T(x,y)$ that states that $P_a(x)$ holds, that 
there is no $z > x$ in $P_a$, and that $y$ is the largest element with respect to $<$;
\item $R(x,y)$ by the formula $\phi_R(x,y)$ given by $x < y$;
\item $N(x,y)$ by a formula $\phi_N(x,y)$ stating that $y$ is the next element after $x$ w.r.t.~$<$. 
\end{itemize}
The resulting MSO sentence $\Psi_1$ has the signature
$\{P_a,P_b,<\}$.
Let $\Psi$ be the conjunction of $\Psi_1$ with the sentence $\Psi_2$ which states that 
for all $x,y \in A$, if $x<y$ and $P_a(y)$ then
$P_a(x)$. 

{\bf Claim.} If $\bA$ is a $\{<,P_a,P_b\}$-structure that represents a word $w_\bA \in \{a,b\}^*$, then $\bA \models \Psi$ if and only if $w_\bA$ is of the form $a^nb^n$ for some $n \geq 1$. 

Let $\bB$ be the $\{S,T,R,N\}$-structure with the same domain as $\bA$ such that for $X \in \{S,T,R,N\}$ we have $X^{\bB} := \{(x,y) \mid \bA \models \phi_X(x,y)\}$. 
See Figure~\ref{fig:leiter} for an example of a structure $\bA$ such that $w_{\bA} = a^4b^4$
and the corresponding $\{S,T,R,N\}$-structure $\bB$. 
Note that $\bB \models \Phi$ if and only if $\bA \models \Psi_1$. 
If $w_\bA$ is of the form $a^nb^n$ for some $n \geq 1$, then $\bA$ clearly satisfies $\Psi_2$. To show that it also satisfies $\Psi_1$, let $v_1,\dots,v_n,w_1,\dots,w_n \in A$ be such that $\{v_1,\dots,v_n\} = P_a^{\bA}$
and $\{w_1,\dots,w_n\} = P_b^{\bA}$ such that 
for all $i,j \in \{1,\dots,n\}$, if $i<j$ then $v_i<^{\bA} v_j$ and $w_i <^{\bA} w_j$. Then 
\begin{align}
(v_1,w_1) \in S^{\bB} & , \quad \quad (v_n,w_n) \in T^{\bB}, \nonumber \\
(v_i,w_i) \in R^{\bB} & \text{ for all } i \in \{2,\dots,n-1\}, \label{eq:leiter} \\
(v_i,v_{i+1}),(w_i,w_{i+1}) \in N^{\bB} & \text{ for all } i \in \{1,\dots,n-1\}. \nonumber 
\end{align} 
It follows that $\bB$ satisfies $\Phi$ and therefore $\bA \models \Psi_1$. 

For the converse direction, suppose that $\bA \models \Psi$. 
Clearly, $w_{\bA} \in a^*b^*$ because $\bA \models \Psi_2$. Moreover, since $\bA \models \Psi_1$ we have that $\bB \models \Phi$, 
and hence there exist $n \in {\mathbb N}$ and elements $v_1,\dots,v_n,w_1,\dots,w_n \in A$ such that 
$\bB$ satisfies (\ref{eq:leiter}). We first prove that $P_a^{\bA} = \{v_1,\dots,v_n\}$ and $|P_a^{\bA}|=n$. 
Since $(v_n,w_n) \in T^{\bB}$ we have $\phi_T(v_n,w_n)$ and hence $v_n \in P_a^{\bA}$. 
Since $\bB \models N(v_1,v_2),\dots,N(v_{n-1},v_n)$ we have that $v_1 < v_2 < \cdots < v_{n-1} < v_n$ holds in $\bA$ and it also follows that $|P_a^{\bA}| \geq n$. 
Then for every $i \in n$ we have that 
$v_i \in P_a^{\bA}$ because $v_i \leq v_n$, 
$v_n \in P_a^{\bA}$, and
$w_{\bA} \in a^*b^*$. 
Now suppose for contradiction that there exists $x \in P^{\bA}_a \setminus \{v_1,\dots,v_n\}$; choose $x$ largest with respect to $<^{\bA}$. Since $(v_n,w_n) \in T^{\bB}$ and $x \in P_a^{\bA}$ we must have $x \leq v_n$, and hence $x < v_n$ since $x \notin \{v_1,\dots,v_n\}$. 
Then there exists $y \in A$ such that $\phi_N(x,y)$ holds in $\bA$. 
Since $y \leq v_n$, $v_n \in P_a^{\bA}$,
and $w_{\bA} \in a^*b^*$, we must have $y \in P_a^{\bA}$. 
By the maximal choice of $x$ we get that $y=v_i$ 
for some $i \in \{1,\dots,n\}$. But then $\phi_N(x,v_i)$ implies that $x \in \{v_1,\dots,v_{n-1}\}$, a contradiction. 
Similarly, one can prove that $P^{\bA}_b = \{w_1,\dots,w_n\}$ and that $|P^{\bA}_b| = n$. 
This implies that $w_{\bA} = a^nb^n$.

We finally have to prove that ${\mathcal C}$ is in GSO. Let $\Phi$ be the GSO $\{S,T,R,N\}$ sentence with existentially quantified unary relations $V,W$, and existentially quantified binary relations $R' \subseteq R$ and $N' \subseteq N$,
which states that
\begin{itemize}
\item there are elements $v_1,v_n \in V$ and $w_1,w_n \in W$ such that $S(v_1,w_1)$ and $T(v_n,w_n)$ hold;
\item for every $x \in V \setminus \{v_1\}$ there is a unique element $y \in V \setminus \{v_n\}$ such that $N'(y,x)$ holds;
\item for every $x \in V \setminus \{v_n\}$ there is a unique element $y \in V \setminus \{v_1\}$ such that $N'(x,y)$ holds;
\item for every $x \in W \setminus \{w_1\}$ there is a unique element $y \in W \setminus \{w_n\}$ such that $N'(y,x)$ holds;
\item for every $x \in W \setminus \{w_n\}$ there is a unique element $y \in W \setminus \{w_1\}$ such that $N'(x,y)$ holds;
\item for all $v \in V$ and $w \in W$ we have that $N'(v_1,v) \wedge N'(w_1,w)$ implies $R'(v,w)$. 
\item for all $v,v' \in V \setminus \{v_1,v_n\}$ and $w,w' \in W \setminus \{w_1,w_n\}$ we have that $R'(v,w) \wedge N'(v,v') \wedge N'(w,w')$ implies $R'(v,w)$. 
\item For all $v \in V$ and $w \in W$ we have that $N'(v,v_n) \wedge N'(w,w_n)$ implies $R'(v,w)$. 
\end{itemize}
Then $\Phi$ holds on a finite $\{S,T,R,N\}$-structure $\bB$ if and only if $B$ has elements $v_1,\dots,v_n$, $w_1,\dots,w_n$ satisfying 
$(\ref{eq:leiter})$, which is the case if and only if $\bB \in {\mathcal C}$, which can be seen from the inductive computation of the fixed point by the Datalog program, see Remark~\ref{rem:op-dat}.
\end{proof}

Sometimes, we will also use the term GSO (MSO, Datalog) to denote all problems (i.e., all classes of structures) that can be expressed in the formalism. In particular, this justifies to say that (the complement of) a certain CSP is \emph{in} GSO (MSO, Datalog).


\section{Homomorphism-Closed GSO}
\label{sect:mgso}
In this section, we prove that the class of finite models of a GSO sentence is a finite union of CSPs of $\omega$-categorical structures whenever its complement is closed under homomorphisms.
In particular, every 
CSP in GSO (and therefore every CSP in MSO) is the CSP of an 
$\omega$-categorical structure $\bB$. 
This result will be essential for our proof of the existence of canonical Datalog programs for GSO in Section~\ref{sect:main}. 
In our proof, we use an equivalent  characterisation 
of the CSPs that can be formulated
as the CSP of an $\omega$-categorical structure from~\cite{BodHilsMartin}; this characterisation will be recalled in Section~\ref{sect:csps}. 

\begin{remark} 
This remark is addressed to readers familiar with the universal-algebraic approach to CSPs (see~\cite{Book} for an introduction), and can be skipped. Recall   
that whether the
CSP of an $\omega$-categorical structure $\bB$ is in Datalog only depends on the polymorphism clone of $\bB$ (see~\cite[Corollary 8.3.4]{Book} for an even stronger statement). 
Therefore, our result opens up the repertoire of methods from universal algebra for further studies of the intersection of GSO and Datalog. 
\end{remark} 

\subsection{CSPs for Countably Categorical Structures} 
\label{sect:csps}
Given permutation group $G$ on a set $B$, an \emph{orbit} of $G$
	is a set of the form $\{g(b) \mid g \in G\}$ for some $b \in B$. 
By the theorem of Ryll-Nardzewski, 
a countable structure 
 $\bB$ is $\omega$-categorical if and only if
 for every $n \in {\mathbb N}$ 
there are finitely many orbits of the componentwise action 
of the automorphism group $\Aut(\bB)$ of $\bB$ on $B^n$ (see, e.g.,~\cite{Hodges}). 
It follows that if a structure is $\omega$-categorical, then all of its reducts are $\omega$-categorical as well~\cite{Hodges}.

There is a known condition that characterises classes of structures that are CSPs of $\omega$-categorical structures. 
Let $\tau$ be a relational signature and let ${\mathcal C}$ be a class of finite $\tau$-structures. Let $\Lambda_n$ be the class of primitive positive $\tau$-formulas with 
free variables $x_1,\dots,x_n$ whose canonical
database is in ${\mathcal C}$. 
We define $\sim^{\mathcal C}_n$ to be the equivalence relation on $\Lambda_n$ such that 
$\phi_1 \sim^{\mathcal C}_n \phi_2$ holds if for
all primitive positive $\tau$-formulas $\psi(x_1,\dots,x_n)$ we have that
$\phi_1(x_1,\dots,x_n) \wedge \psi(x_1,\dots,x_n)$
is satisfiable in a structure from ${\mathcal C}$
if and only if 
$\phi_2(x_1,\dots,x_n) \wedge \psi(x_1,\dots,x_n)$
is satisfiable in a structure from ${\mathcal C}$.\footnote{A closely related concept was considered for graph classes in parametrised complexity~\cite[Definition 6.7.5]{DowneyFellows}; we thank Antoine Mottet for pointing this out to us.} 
The \emph{index} of an equivalence relation is the number of its equivalence classes.
If $\sim$ is an equivalence relation on a set $A$, and $a \in A$, then $[a]_{\sim}$ denotes the equivalence class of 
$a$; if the reference to $\sim$ is clear, we omit $\sim$ in this notation. 
 
 \begin{theorem}[Bodirsky, Hils, Martin~\cite{BodHilsMartin}, Theorem 4.27]\label{thm:bodhilsmartin}
 Let ${\mathcal C}$ be a constraint satisfaction problem. Then there is 
 an $\omega$-categorical structure $\bB$ such that
 ${\mathcal C} = \Csp(\bB)$ iff 
 $\sim^{\mathcal C}_n$ has finite index for all $n$. 
 Moreover, the structure $\bB$ can be chosen so that for all $n \in {\mathbb N}$ 
the orbits of the componentwise action 
of 
$\Aut(\bB)$ 
on $B^n$ are primitively positively definable in $\bB$. 
 \end{theorem}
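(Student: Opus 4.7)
I would prove the two directions separately, with most of the work in $(\Leftarrow)$.

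For $(\Rightarrow)$, assume $\mathcal{C}=\Csp(\bB)$ with $\bB$ $\omega$-categorical, and for $\phi\in\Lambda_n$ let $[\phi]_\bB$ denote the relation that $\phi$ pp-defines in $\bB$. The canonical-database correspondence gives
\[
\bS_{\phi\wedge\psi}\in\mathcal{C} \;\iff\; \bS_{\phi\wedge\psi}\to\bB \;\iff\; [\phi]_\bB\cap[\psi]_\bB\neq\emptyset,
\]
so $\phi_1\sim^{\mathcal{C}}_n\phi_2$ iff the relations $[\phi_i]_\bB$ meet the same pp-definable subsets of $B^n$. Every $[\phi]_\bB$ is $\mathrm{Aut}(\bB)$-invariant, hence a union of orbits; by Ryll--Nardzewski there are only finitely many orbits on $B^n$, so $\sim^{\mathcal{C}}_n$ has at most $2^{k_n}$ classes, where $k_n$ is the number of orbits.

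For $(\Leftarrow)$, fix for each $n$ a list $\phi^n_1,\dots,\phi^n_{k_n}$ of representatives of the $\sim^{\mathcal{C}}_n$-classes, each chosen \emph{complete} within its class as the (finite) conjunction of all pp-formulas that separate it from the other $k_n-1$ classes; finite index of $\sim^{\mathcal{C}}_n$ makes this possible. I would then construct $\bB$ via an expanded-signature compactness argument: enrich $\tau$ to $\tau^*$ by a fresh $n$-ary symbol $U^n_i$ for each $i,n$, and set up a first-order $\tau^*$-theory $T$ whose axioms require (a) the family $\{U^n_i\}_i$ to partition $B^n$, (b) $U^n_i(\bar x)\Rightarrow\phi^n_i(\bar x)$, (c) subtuple coherence of labels, and (d) an extension axiom for every consistent pp-extension request, stating that any labelled tuple extends to witnesses of the request with the dictated labels. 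Consistency of $T$ follows by finite satisfiability using closure of $\mathcal{C}$ under disjoint unions and under homomorphic preimages; any countable model of $T$ has only $k_n$ first-order $n$-types (one per label $U^n_i$), so its $\tau$-reduct $\bB$ is $\omega$-categorical by Ryll--Nardzewski. The extension axioms force every $\bA\in\mathcal{C}$ to map homomorphically into $\bB$, and the partition axioms force every finite substructure of $\bB$ to be in $\mathcal{C}$; thus $\Csp(\bB)=\mathcal{C}$. For the ``moreover'' part, the first-order orbits of $\mathrm{Aut}(\bB)$ on $B^n$ are by construction the sets defined by $U^n_i$, and with the \emph{complete} choice of $\phi^n_i$ the relation $[\phi^n_i]_\bB$ is exactly this orbit, since any tuple satisfying the separating conjunction $\phi^n_i$ must lie in the $\sim^{\mathcal{C}}_n$-class of $\phi^n_i$ and hence in $U^n_i$; so $\phi^n_i$ pp-defines the $i$-th orbit.

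The main obstacle will be the consistency step for $T$: one has to guarantee that every finite sub-theory is realised by some finite $\tau^*$-structure whose $\tau$-reduct is in $\mathcal{C}$. This demands a finite amalgamation lemma showing that two consistently labelled finite structures agreeing on a common labelled tuple can be glued while keeping all labels coherent and the reduct in $\mathcal{C}$. Finite index is exactly what keeps the bookkeeping tractable, but checking that the glued reduct stays in $\mathcal{C}$ (using closure under disjoint unions and homomorphic preimages) and that projections of the glued labels still match the recipe dictated by $\sim^{\mathcal{C}}_{n+m}$ is the technical heart: it is precisely here that the definition of $\sim^{\mathcal{C}}_n$ via \emph{all} pp test-formulas, not merely via a single test-formula, is needed.
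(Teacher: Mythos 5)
The paper does not actually prove this theorem: it is imported verbatim from Bodirsky--Hils--Martin with a citation, so there is no in-paper argument to compare your attempt against. Your forward direction is correct and standard: every pp-definable relation of an $\omega$-categorical $\bB$ is $\mathrm{Aut}(\bB)$-invariant, hence a union of the finitely many orbits on $B^n$, and the map sending $\phi$ to the set of orbits contained in $[\phi]_\bB$ factors through $\sim^{\mathcal C}_n$, bounding its index by $2^{k_n}$.

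The backward direction, however, has several genuine gaps. First, the $\sim^{\mathcal C}_n$-classes cannot index a partition of $B^n$: a tuple realizing a formula from one class also realizes weaker formulas lying in other classes, so your axiom (a) is unsatisfiable as stated. What is needed are the \emph{minimal} classes in the order $[\phi\wedge\psi]\le[\phi]$ (these are the ones destined to become orbits), and your recipe of a ``complete representative within its class'' does not produce them: conjoining $\phi$ with a test formula $\psi$ that keeps the canonical database in $\mathcal C$ typically moves you to a strictly smaller class; finite index only guarantees that this descent stabilizes at a minimal class, which is a different statement from the one you make. Second, the $\omega$-categoricity step fails for an arbitrary countable model of $T$: having only $k_n$ labels does not give only $k_n$ complete first-order $n$-types. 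Ryll--Nardzewski requires finitely many automorphism orbits, which needs a back-and-forth argument showing that equally labelled tuples are automorphic; your extension axioms supply only the ``forth'' half, and only a specifically constructed homogeneous model (not every model handed to you by compactness) will be $\omega$-categorical. Third, $\Csp(\bB)\subseteq\mathcal C$ does not follow from axiom (b), which points the wrong way: one must ensure that \emph{every} pp-formula realized by a tuple of $\bB$ has its canonical database in $\mathcal C$, and nothing in (a)--(d) prevents a model from carrying extra $\tau$-atoms not accounted for by the labels. Finally, the step you correctly identify as the technical heart but defer---that gluing two structures of $\mathcal C$ over a tuple of the same $\sim^{\mathcal C}_n$-type stays in $\mathcal C$---is in fact immediate from the definition: the glue is the canonical database of $\phi_1\wedge\phi_2$, and $\phi_1\sim^{\mathcal C}_n\phi_2$ lets you replace $\phi_1$ by $\phi_2$ in the test. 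The known proof organizes a direct chain/back-and-forth construction of a homogeneous countable $\bB$ around exactly this lemma, rather than appealing to compactness over an unspecified model.
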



\begin{example}
The structure $\bB_1 \ceq ({\mathbb Z};<)$ is not $\omega$-categorical. However, $\sim_n^{\Csp(\bB_1)}$ has finite index for all $n$, and indeed 
$\Csp({\mathbb Z};<) = \Csp({\mathbb Q};<)$ and $({\mathbb Q};<)$ is $\omega$-categorical. On the other hand, for
$\bB_2 \ceq ({\mathbb Z};\text{Succ})$ 
we have that the index 
$\sim_2^{\Csp(\bB_2)}$ is infinite, 
and it follows that there is no $\omega$-categorical structure $\bB$ such that $\Csp(\bB_2) = \Csp(\bB)$; see~\cite{Book}. 
\enex
\end{example}

\begin{remark}
Since we make essential use of Theorem~\ref{thm:bodhilsmartin}, it might be helpful for the reader to have the following idea about its proof. We use the equivalence relation $\sim_n^{\mathcal C}$ to expand the structures in ${\mathcal C}$ by new relations; for the resulting class we can 
use Fra\"iss\'e's construction of homogeneous structures (Definition~\ref{def:homog}). The assumption that $\sim_n^{\mathcal C}$ has finite index is the reason that the resulting structure is also $\omega$-categorical. 
By taking a reduct to the original signature, we might lose homogeneity, but keep $\omega$-categoricity, and obtain the desired structure $\bB$. 
\enrem
\end{remark} 

\subsection{Quantifier Rank}
In order to construct $\omega$-categorial structures for a given CSP in GSO, we need to verify the condition given in Theorem~\ref{thm:bodhilsmartin}. In this context, 
 it will be convenient to work with signatures that also contain constant symbols. 
The \emph{quantifier rank}
of a second-order $\tau$-formula $\Phi$ is
the maximal number of nested (first-order or second-order) quantifiers in $\Phi$; for this definition, we view $\Phi$ as a second-order sentence with guarded semantics, just as in~\cite{Blumensath-MSO}.  
If $\bA$ and $\bB$ are $\tau$-structures and $q \in {\mathbb N}$ we write $\bA \equiv^{\GSO}_q \bB$ if $\bA$ and $\bB$ satisfy the same GSO $\tau$-sentences of quantifier rank at most $q$. 
The following lemma is a natural extension of a classical result of Fr\"aiss\'e and Hintikka on first-order logic (see, e.g.,~\cite{Hodges}). 

\begin{lemma}[Proposition 3.4  in~\cite{Blumensath-MSO-2}]\label{lem:finite}
Let $q \in {\mathbb N}$ and $\tau$ be a finite signature with relation and constant symbols. Then $\equiv^{\GSO}_q$ is an equivalence relation with finite
index on the class of all finite $\tau$-structures. 
Moreover, every class of $\equiv^{\GSO}_q$ can 
be defined by a single GSO sentence with quantifier rank $q$. The analogous statements hold for MSO as well. 
\end{lemma}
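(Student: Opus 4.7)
The plan is to reduce both claims to a single auxiliary statement: for any finite set $V$ consisting of first-order and bounded-arity second-order variables, there are, up to logical equivalence under guarded semantics, only finitely many GSO $\tau$-formulas of quantifier rank at most $q$ whose free variables lie in $V$. Once this is proved, the finite-index assertion for $\equiv^{\GSO}_q$ follows by specialising to $V = \emptyset$, since any two $\tau$-structures fall into the same class exactly when they agree on the finitely many (representatives of equivalence classes of) rank-$q$ sentences. The definability assertion follows by letting $\Phi_1,\dots,\Phi_N$ be a choice of representatives of all equivalence classes of GSO $\tau$-sentences of rank $\leq q$, and defining, for each $\equiv^{\GSO}_q$-class $\mathcal E$, the characteristic sentence $\Phi_{\mathcal E}$ as the conjunction of those $\Phi_i$ that hold on structures from $\mathcal E$ together with the negations of those that do not. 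Since conjunction and negation do not increase quantifier rank, $\Phi_{\mathcal E}$ has rank $\leq q$ and defines $\mathcal E$.

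I would establish the auxiliary claim by induction on $q$, with the induction hypothesis parametrised uniformly over $V$. For $q=0$, a formula is a Boolean combination of atomic formulas built from $\tau$ together with the variables in $V$; since both $\tau$ and $V$ are finite, there are only finitely many atoms, hence only finitely many Boolean combinations up to propositional equivalence. For the step from $q$ to $q+1$, every formula of rank $\leq q+1$ is a Boolean combination of formulas of the shapes $\exists x\, \psi$, $\forall x\, \psi$, $\exists X\, \psi$, $\forall X\, \psi$ where, after renaming bound variables from a fixed countable pool, $\psi$ has rank $\leq q$ and free variables in an enlarged finite set $V'$. The induction hypothesis applied to $V'$ yields finitely many possibilities for $\psi$ up to equivalence, hence finitely many for the quantified formulas and for any Boolean combination thereof. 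The corresponding statement for MSO is obtained by restricting all second-order quantifiers to unary variables throughout.

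The one subtle point, and what I expect to be the main obstacle, is handling the arity of second-order variables in the inductive step, since GSO syntactically permits second-order quantification of arbitrary arity, which would \emph{a priori} give an unbounded supply of variable shapes. The guarded semantics provides the needed bound: if $m$ is the largest arity among the relation symbols of $\tau$, then every guarded tuple has length at most $m$, so a second-order variable of arity exceeding $m$ is forced to range over $\emptyset$ and the corresponding quantifier is equivalent to a trivial one. Hence without loss of generality one restricts attention to second-order variables of arity $\leq m$, keeping the pool of variable shapes finite at every stage. For MSO this issue disappears entirely, since all second-order variables are unary and the guarded and standard semantics coincide; otherwise the proof is the same.
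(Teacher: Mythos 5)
The overall strategy---counting Hintikka-style formulas of quantifier rank at most $q$ up to equivalence by induction on $q$, with the variable pool as a parameter, and then extracting both the finite-index and the definability claims from that count---is the standard one, and for the MSO part of the lemma your argument is complete and correct. Note, however, that the paper does not prove this lemma at all: it imports it verbatim as Proposition~3.3 of the cited work of Blumensath, precisely because the GSO case is not routine.

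The GSO case is where your proof has a genuine gap, at exactly the point you flag as the main obstacle. Your proposed fix is that a second-order variable of arity exceeding $m$ (the maximal arity of a symbol in $\tau$) is forced to range over the empty relation under guarded semantics. That is false. Guardedness of a tuple $(t_1,\dots,t_n)$ only constrains the \emph{set} $\{t_1,\dots,t_n\}$ to be covered by the arguments of a true atom; repetitions are allowed and the length $n$ is unconstrained. For instance, $(a,a,\dots,a)$ is guarded for every length via the atom $a=a$, and if $E(a,b)$ holds then $(a,b,a,b,\dots,a)$ is a guarded tuple of any length. Hence nonempty guarded relations of every arity exist, the pool of second-order variable shapes in your induction is genuinely infinite, and the inductive step does not close: for each $q$ you would have infinitely many quantified formulas $\exists X^{(n)}\psi$, one family per arity $n$, with no a priori reason that they collapse to finitely many equivalence classes. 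The natural repair---coding an $n$-ary guarded relation by a family of at-most-$m$-ary ones, one for each surjection of $\{1,\dots,n\}$ onto at most $m$ distinct coordinates---replaces a single second-order quantifier by a block whose size grows with $n$, so it does not preserve quantifier rank and cannot simply be slotted into your induction. Establishing that high-arity guarded quantification adds no distinguishing power at a fixed rank (e.g., via the back-and-forth system of Lemma~\ref{lem:bnf}, showing that expansions by guarded relations of arbitrary arity realise only finitely many $\equiv^{\GSO}_q$-types) is precisely the substance of the cited proposition, and it requires a real argument rather than the one-line observation you give.
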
 

If $\bA$ is a $\tau$-structure and $\bar a$ is a $k$-tuple of elements of $A$, then 
 we write $(\bA,\bar a)$ for a $\tau \cup \{c_1,\dots,c_k\}$-structure expanding $\bA$ where $c_1,\dots,c_k$ denote fresh constant symbols being mapped to the corresponding entries of $\bar a$. 
If $\bA$ and $\bB$ are $\tau$-structures and $\bar a \in A^k$, $\bar b \in B^k$, and when writing 
 $(\bA,\bar a) \equiv^{\GSO}_q (\bB,\bar b)$, we implicitly assume that we have chosen the same constant symbols for $\bar a$ and for $\bar b$. The following lemma is a natural extension of classical results of Ehrenfeucht and Fr\"aiss\'e on first-order logic (see, e.g.,~\cite{Hodges}). 


\begin{lemma}[Proposition 3.6 in~\cite{Blumensath-MSO-2}]\label{lem:bnf}
Let $q \in {\mathbb N}$ and let $\bA$ and $\bB$ be $\tau$-structures. 
Then $\bA \equiv^{\GSO}_{q+1} \bB$ 
if and only if the following properties hold: 
\begin{itemize}
\item (first-order forth) For every $a \in A$, there exists $b \in B$ such that $(\bA,a) \equiv^{\GSO}_{q} (\bB,b)$. 
\item (first-order back) For every $b \in B$, there exists $a \in A$ such that $(\bA,a) \equiv^{\GSO}_{q} (\bB,b)$. 
\item (second-order forth) For every expansion $\bA'$ of $\bA$ by a guarded relation, there exists an expansion $\bB'$ of $\bB$ by a guarded relation  such that $\bA' \equiv^{\GSO}_{q} \bB'$. 
\item (second-order back) 
For every expansion $\bB'$ of $\bB$ by a guarded relation, there exists an expansion $\bA'$ of $\bA$ by a guarded relation  such that $\bA' \equiv^{\GSO}_{q} \bB'$. 
\end{itemize} 
\end{lemma}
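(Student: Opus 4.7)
The plan is to give the standard Ehrenfeucht-Fra\"iss\'e style proof, in both directions, with Lemma~\ref{lem:finite} as the key external input: for every fixed rank $q$ and every finite signature (possibly with added constants), there are only finitely many $\equiv^{\GSO}_q$-classes, each of which is defined by a single GSO sentence of quantifier rank $q$. For each $\tau$-structure $(\bA,\bar a)$ I write $\chi^q_{(\bA,\bar a)}$ for its characteristic rank-$q$ sentence, and analogously $\chi^q_{(\bA,X)}$ when $\bA$ has been expanded by a guarded relation named by $X$.

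For the forward direction, assume $\bA \equiv^{\GSO}_{q+1} \bB$. To verify first-order forth, pick $a \in A$ and let $\Psi(x)$ be obtained from $\chi^q_{(\bA,a)}$ by replacing the constant naming $a$ with a fresh first-order variable $x$. Then $\exists x\,\Psi(x)$ is a GSO sentence of quantifier rank $q+1$ satisfied by $\bA$, hence by $\bB$, which yields an element $b \in B$ with $(\bA,a) \equiv^{\GSO}_q (\bB,b)$. For second-order forth, fix a guarded expansion $(\bA,X)$, let $\Theta(Y)$ be obtained from $\chi^q_{(\bA,X)}$ by replacing the relation symbol naming $X$ by a free second-order variable $Y$, and apply the same argument to the rank-$(q+1)$ sentence $\exists Y\,\Theta(Y)$; guarded semantics ensures that the witnessing $Y$ in $\bB$ is interpreted by a guarded relation. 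The two back directions are symmetric.

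For the backward direction, assume all four conditions hold and prove by induction on sentence structure that $\bA$ and $\bB$ agree on every GSO sentence of quantifier rank at most $q+1$. Atomic and Boolean cases are immediate; universal quantifiers reduce to existentials via negation. For a first-order existential $\exists x\,\psi(x)$ of rank $q+1$ (so $\psi$ has rank $q$), any witness $a \in A$ is transported by first-order forth to $b \in B$ with $(\bA,a) \equiv^{\GSO}_q (\bB,b)$, which forces $(\bB,b) \models \psi$; the converse uses first-order back. For an existential $\exists Y\,\psi(Y)$ of rank $q+1$, any guarded witness $X$ in $\bA$ is matched via second-order forth by a guarded expansion of $\bB$ that agrees with $(\bA,X)$ on $\psi$.

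The step I expect to require the most care is the handling of the guarded first-order quantification blocks $\exists \bar y\,(\alpha(\bar x,\bar y) \wedge \psi)$ introduced by rule~(3) of Definition~\ref{def:guarded}, in which an entire $\alpha$-guarded tuple $\bar y$ must be exhibited simultaneously rather than one element at a time, so that naive elementwise forth/back may be insufficient. The clean way around this is to absorb such a block into a second-order guarded quantifier over the tuple named by $\alpha$: the second-order forth/back conditions then realise the full guarded tuple, after which elementwise first-order forth/back suffices once constants for the realised tuple have been named. The analogous MSO statement is strictly simpler, since first-order quantification in MSO is unrestricted and every unary relation is automatically guarded.
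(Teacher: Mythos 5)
The paper does not prove this lemma at all: it is imported verbatim as Proposition~3.4 of~\cite{Blumensath-MSO}, so there is no in-paper argument to compare yours against. Your back-and-forth proof is the standard one and is essentially correct. Two remarks. First, your forward direction leans on Lemma~\ref{lem:finite} to produce the characteristic rank-$q$ sentences $\chi^q_{(\bA,a)}$ and $\chi^q_{(\bA,X)}$; as stated in the paper, that lemma concerns the class of \emph{finite} $\tau$-structures, so your argument establishes the equivalence for finite $\bA$ and $\bB$ (which is all the paper uses, in Theorem~\ref{thm:disjoint-union}); for arbitrary structures one needs the Hintikka-sentence construction from~\cite{Blumensath-MSO} itself. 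Second, the difficulty you single out at the end --- guarded first-order quantifier blocks $\exists \bar y\,(\alpha(\bar x,\bar y)\wedge\psi)$ binding a whole tuple at once --- does not actually arise under the paper's conventions: quantifier rank is explicitly computed by viewing the sentence as a \emph{full} second-order sentence with guarded semantics, so first-order quantification is unrestricted and elementwise, and only second-order quantifiers are relativised to guarded relations. (Your proposed workaround of absorbing such a block into a guarded second-order quantifier would in any case distort the rank bookkeeping, since naming the realised tuple afterwards costs additional first-order quantifiers.) Under that convention, replacing the constant naming $a$ by a fresh first-order variable, and the symbol naming $X$ by a second-order variable, is syntactically unproblematic, the witness for $\exists Y$ in $\bB$ is guarded by definition of the semantics, and your argument for both directions goes through.
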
 

In the following, $\tau$ denotes a finite relational signature. 
Theorem~\ref{thm:disjoint-union} below states that the $\equiv_q^{\text{GSO}}$-equivalence class of the union of two structures which might overlap on some elements that are named by constants can be recovered from the equivalence classes of its components. 

\begin{definition}\label{def:Dn}
Let $\rho \ceq \{c_1,\dots,c_n\}$ be a finite set of constant symbols. 
Then $\mathcal D_n$ is defined to be the set of all pairs
$(\bA,\bB)$ of finite $(\tau \cup \rho)$-structures such that 
\begin{itemize}
\item $c^{\bA} = c^{\bB}$ for all constant symbols $c \in \rho$; 
\item $\{c_1^\bA,\dots,c_n^\bA\} = A \cap B = \{c_1^\bB,\dots,c_n^\bB\}$. 
\end{itemize}
Given a $(\bA,\bB) \in \mathcal D_n$, we  write $\bA \uplus \bB$ for the structure with domain $A \cup B$ such that $R^{\bA \uplus \bB} \ceq R^{\bA} \cup R^{\bB}$
for each relation symbol $R \in \tau$
and $c^{\bA \uplus \bB} = c^{\bA} = c^{\bB}$ for each constant symbol $c \in \rho$. 
\end{definition}

The special case of the following theorem for $n=0$ (i.e., for disjoint unions with no overlap) is Proposition 4.2 in~\cite{Blumensath-MSO-2}.

\begin{theorem}\label{thm:disjoint-union}
Let $q,n,r,s \in {\mathbb N}$, let $(\bA_1,\bB_1),(\bA_2,\bB_2) \in {\mathcal D}_n$, and let 
$\bar a_1 \in (A_1)^r$, $\bar a_2 \in (A_2)^r$, $\bar b_1 \in (B_1)^s$, $\bar b_2 \in (B_2)^s$ 
be such that 
 $(\bA_1,\bar a_1) \equiv^{\GSO}_q (\bA_2, \bar a_2)$ and 
$(\bB_1,\bar b_1) \equiv^{\GSO}_q (\bB_2, \bar b_2)$. Then $$(\bA_1 \uplus \bB_1, \bar a_1, \bar b_1)  \equiv^{\GSO}_q (\bA_2 \uplus \bB_2, \bar a_2, \bar b_2).$$ 
\end{theorem}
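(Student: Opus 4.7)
The plan is to proceed by induction on $q$, using the back-and-forth characterisation of $\equiv^{\GSO}_q$ from Lemma~\ref{lem:bnf}. The key structural observation driving the argument is that atomic $\tau$-formulas of $\bA_i \uplus \bB_i$ cannot mix ``genuine'' $A_i$- and $B_i$-elements: since $R^{\bA_i \uplus \bB_i} = R^{\bA_i} \cup R^{\bB_i}$, every interpreted tuple lies entirely in $A_i^{\ar(R)}$ or entirely in $B_i^{\ar(R)}$, and equality cannot identify an element of $A_i \setminus \rho^{\bA_i}$ with one of $B_i \setminus \rho^{\bB_i}$. Consequently, every guarded tuple of $\bA_i \uplus \bB_i$ lies in $A_i$ or in $B_i$ and is guarded in the corresponding component (tuples of pure $\rho$-elements may be guarded in either).

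For the base case $q = 0$, equivalence reduces to satisfying the same atomic sentences over $\tau \cup \rho$ enriched with constants for the tuples. The observation above implies that the truth of each such atomic sentence in $\bA_i \uplus \bB_i$ is determined by its truth in the unique component containing all its arguments, together with the information of which tuple constants coincide with a symbol of $\rho$. Both pieces of data are preserved by the rank-$0$ hypotheses on each side.

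For the induction step, assume the claim at rank $q$ and verify the four conditions of Lemma~\ref{lem:bnf} at rank $q+1$ for the disjoint unions. For \emph{first-order forth}, any $c \in A_1 \cup B_1$ lies in $A_1$ or $B_1$; in the former, first-order forth applied to $(\bA_1,\bar a_1) \equiv^{\GSO}_{q+1} (\bA_2,\bar a_2)$ yields a matching $c' \in A_2$ at rank $q$, and the induction hypothesis (using that $\equiv^{\GSO}_{q+1}$ implies $\equiv^{\GSO}_q$ on the $\bB$-side) delivers the desired equivalence; the case $c \in B_1$ and the back directions are symmetric. For \emph{second-order forth}, given a guarded relation $R'$ in $\bA_1 \uplus \bB_1$, use the structural observation to write $R' = R'_A \cup R'_B$ with $R'_A$ guarded in $\bA_1$ and $R'_B$ guarded in $\bB_1$ (pure $\rho$-tuples of $R'$ are placed in whichever component guards them). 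Apply second-order forth separately on each side to obtain guarded expansions $R''_A$ of $\bA_2$ and $R''_B$ of $\bB_2$, let $R'' \ceq R''_A \cup R''_B$, and invoke the induction hypothesis in the enriched signature; guardedness of $R''$ in $\bA_2 \uplus \bB_2$ follows because atomic facts of each component are preserved when passing to the disjoint union.

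The step I expect to be the main obstacle is verifying the structural observation about guarded relations with the necessary care around $\rho$-tuples: a tuple of $\rho$-elements may be guarded in $\bA_i \uplus \bB_i$ via witnesses from either side, and one must not lose it from the decomposition nor place it where it fails to be guarded. Resolving this amounts to a case analysis on where the guarding atomic formula is witnessed. Everything afterwards is a direct adaptation of the standard Feferman--Vaught-style argument for disjoint unions of MSO structures, transposed to the guarded second-order setting via this observation.
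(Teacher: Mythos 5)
Your proof follows essentially the same route as the paper's: induction on $q$, with the base case reduced to atomic formulas (whose witnessing tuples cannot mix the two components) and the inductive step verified via the back-and-forth conditions of Lemma~\ref{lem:bnf}, splitting a guarded relation of the disjoint union into its traces on the two components and recombining after applying second-order forth on each side. Your extra care about where to place guarded tuples lying entirely in the shared $\rho$-constants is a sound, slightly more careful refinement of the paper's decomposition into $R \cap A_1^k$ and $R \cap B_1^k$; otherwise the two arguments coincide.
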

\begin{proof}
Our proof is by induction on $q$. 
Every quantifier-free formula is a Boolean combination of atomic formulas, so for $q = 0$ it suffices to consider atomic formulas $\phi$. By symmetry, it suffices to show that 
if $(\bA_1 \uplus \bB_1,\bar a_1,\bar b_1) \models \phi$ then $(\bA_2 \uplus \bB_2, \bar a_2, \bar b_2) \models \phi$. 
Then $\phi$ is built using a relation symbol $R \in \tau$, and the tuple that witnesses the truth of $\phi$ in $\bA_1 \uplus \bB_1$ must be from
$R^{\bA_1}$ or from $R^{\bB_1}$, by the definition of $\bA_1 \uplus \bB_1$. 
We first consider the former case; the
latter case can be treated similarly. 
 If a constant that appears in $\phi$ is
 from $A_1 \cap B_1$, then by the definition of ${\mathcal D}_n$ this element is denoted by a constant symbol $c \in \rho$,
 and therefore we may assume without loss of generality that $\phi$ is a formula over the signature of $(\bA_1,\bar a_1)$. 
Hence, $(\bA_1, \bar a_1) \models \phi$
and by assumption $(\bA_2, \bar a_2) \models \phi$. 
 This in turn implies that  $(\bA_2 \uplus \bB_2, \bar a_2, \bar b_2) \models \phi$.
 
 For the inductive step, suppose that the claim holds for $q$, and that 
 $(\bA_1,\bar a_1) \equiv^{\GSO}_{q+1} (\bA_2, \bar a_2)$ and $(\bB_1,\bar b_1) \equiv^{\GSO}_{q+1} (\bB_2, \bar b_2)$.
  By symmetry and Lemma~\ref{lem:bnf} 
it suffices to verify the properties (first-order forth) and (second-order forth).  Let $c_1 \in A_1 \cup B_1$. We may assume that $c_1 \in A_1$; the case that $c_1 \in B_1$ can be shown similarly. 
By Lemma~\ref{lem:bnf}, there exists $c_2 \in A_2$ such that
$(\bA_1,\bar a_1, c_1) \equiv^{\GSO}_{q} (\bA_2,\bar a_2, c_2)$. By the inductive assumption, this implies that $$(\bA_1 \uplus \bB_1,\bar a_1,c_1,\bar b_1) \equiv^{\GSO}_{q} (\bA_2 \uplus \bB_2, \bar a_2, c_2,\bar b_2)$$ and concludes the proof of (first-order forth). 

Now let $R$ be a guarded relation of $\bA_1 \uplus \bB_1$ of arity $k$. Let $\bA'_1$ be the expansion of $\bA_1$ by the guarded relation $R \cap A_1^k$,
and $\bB_1'$ be the expansion of $\bB_1$ by the guarded relation $R \cap B_1^k$. 
By Lemma~\ref{lem:bnf}
there are expansions $\bA_2'$ of $\bA$ and
$\bB_2'$ of $\bB_2$ by guarded relations 
such that 
$(\bA_1',\bar a_1) \equiv_q^{\GSO} (\bA_2',\bar a_2)$ and 
$(\bB_1' ,\bar b_1) \equiv_q^{\GSO} (\bB_2',\bar b_2)$. By the inductive assumption, this implies that  
$(\bA_1' \uplus \bB_1',\bar a_1, \bar b_1) \equiv_q^{\GSO} (\bA_2' \uplus \bB_2',\bar a_2,\bar b_2)$, which completes the proof of (second-order forth). 
\end{proof}

\begin{corollary}\label{cor:omega-cat}
Let ${\mathcal C}$ be a CSP that can be expressed in GSO. Then there exists a countable $\omega$-categorical structure $\bB$ such that ${\mathcal C} = \Csp(\bB)$. 
\end{corollary}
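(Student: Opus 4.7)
The plan is to reduce to Theorem~\ref{thm:bodhilsmartin}: I need to verify that for every $n \in \mathbb{N}$ the relation $\sim^{\mathcal{C}}_n$ has finite index on the class $\Lambda_n$ of primitive positive $\tau$-formulas with free variables $x_1,\dots,x_n$ whose canonical database lies in ${\mathcal C}$. To this end, fix a GSO $\tau$-sentence $\Phi$ with $\llbracket \Phi \rrbracket = {\mathcal C}$ and let $q$ be its quantifier rank. Let $\rho \ceq \{c_1,\dots,c_n\}$ be a set of fresh constant symbols. Each $\phi \in \Lambda_n$ gives rise to a $(\tau \cup \rho)$-expansion of its canonical database by interpreting $c_i$ as the variable $x_i$ (adjoining isolated vertices for any $x_i$ that fails to appear in $\phi$); I still denote this expansion by $\bS_{\phi}$.

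The key step is the claim that $\bS_{\phi_1} \equiv^{\GSO}_q \bS_{\phi_2}$ implies $\phi_1 \sim^{\mathcal{C}}_n \phi_2$. To prove it, take any primitive positive $\tau$-formula $\psi(x_1,\dots,x_n)$, rename its existentially quantified variables so that they are disjoint from those of $\phi_1$ and $\phi_2$, and observe that the $(\tau \cup \rho)$-expansion of the canonical database of $\phi_i \wedge \psi$ is exactly $\bS_{\phi_i} \uplus \bS_{\psi}$ in the sense of Definition~\ref{def:Dn} (the shared elements are precisely the interpretations of $c_1,\dots,c_n$). Applying Theorem~\ref{thm:disjoint-union} with $r = s = 0$, $(\bA_1,\bB_1) = (\bS_{\phi_1}, \bS_\psi)$, and $(\bA_2,\bB_2) = (\bS_{\phi_2}, \bS_\psi)$ yields
$$\bS_{\phi_1} \uplus \bS_{\psi} \;\equiv^{\GSO}_q\; \bS_{\phi_2} \uplus \bS_{\psi}.$$
Since $\Phi$ has quantifier rank $q$ and its truth on a $(\tau \cup \rho)$-structure depends only on the $\tau$-reduct, we conclude that the canonical database of $\phi_1 \wedge \psi$ is in ${\mathcal C}$ iff that of $\phi_2 \wedge \psi$ is, establishing $\phi_1 \sim^{\mathcal{C}}_n \phi_2$.

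By Lemma~\ref{lem:finite} the relation $\equiv^{\GSO}_q$ has finite index on the class of finite $(\tau \cup \rho)$-structures; combining this with the claim, $\sim^{\mathcal{C}}_n$ has finite index as well. Theorem~\ref{thm:bodhilsmartin} then supplies an $\omega$-categorical structure $\bB$ with $\Csp(\bB) = {\mathcal C}$. The main obstacle is purely a bookkeeping one: matching the canonical database of the conjunction $\phi \wedge \psi$, where existential variables have been standardised apart, with the gluing operation $\uplus$ of Definition~\ref{def:Dn} so that Theorem~\ref{thm:disjoint-union} can be invoked cleanly. Once this identification is in place, the rest is a direct quantifier-rank count.
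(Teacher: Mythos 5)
Your proposal is correct and follows essentially the same route as the paper's own proof: reduce to Theorem~\ref{thm:bodhilsmartin}, expand canonical databases by constants $c_1,\dots,c_n$ naming the free variables, and use Theorem~\ref{thm:disjoint-union} together with Lemma~\ref{lem:finite} to bound the index of $\sim^{\mathcal C}_n$ by the index of $\equiv^{\GSO}_q$ on $(\tau\cup\rho)$-structures. Your remark about adjoining isolated vertices for free variables not occurring in $\phi$ is a small bookkeeping point the paper leaves implicit, but it does not change the argument.
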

\begin{proof}
Let $\tau$ be the signature of ${\mathcal C}$, 
and let $\Phi$ be a GSO $\tau$-formula with quantifier rank $q$ 
such that ${\mathcal C} = \llbracket \Phi \rrbracket$. 
By Theorem~\ref{thm:bodhilsmartin} it suffices to show that the equivalence relation $\sim^{\mathcal C}_n$ has finite index for every $n \in {\mathbb N}$. 
Let $\rho \ceq \{c_1,\dots,c_n\}$ be a set of new constant symbols. 
By Lemma~\ref{lem:finite}, there exists an $m \in {\mathbb N}$ such that 
$\equiv_q^{\GSO}$ has $m$ equivalence classes
on $(\tau \cup \rho)$-structures. 
If $\phi(x_1,\dots,x_n)$ is a primitive positive $\tau$-formula, then define ${\mathfrak S}_{\phi}$ to be the following 
$(\tau \cup \rho)$-structure:
\begin{itemize} 
\item the elements
are the equivalence classes of the smallest equivalence relation on the variables of $\phi$ that contains all pairs $x,y$ such that $\phi$ contains the conjunct $x=y$, and 
\item
 $(C_1,\dots,C_n) \in R^{\bS}$ for $R \in \tau$ if and only if 
 there are $y_1 \in C_1,\dots,y_n \in C_{n}$ such that $R(y_1,\dots,y_n)$ is a conjunct of $\phi$;
 \item 
finally, we set $c_i^{\bS_\phi} \ceq [x_i]$ for all $i \in \{1,\dots,n\}$. 
\end{itemize} 

We claim that if $\bS_\phi \equiv_q^{\GSO} \bS_\psi$, then $\phi \sim_n^{\mathcal C} \psi$. 
Let $\theta(x_1,\dots,x_n)$ be a primitive positive $\tau$-formula; we may assume that the existentially quantified variables of $\theta$ are disjoint from the existentially quantified variables of $\phi$ and of $\psi$, so that $(\bS_{\phi},\bS_{\theta}), (\bS_\psi,\bS_{\theta}) \in {\mathcal D}_n$. 
Since $\bS_\phi \equiv_q^{\GSO} \bS_\psi$ and $\bS_\theta \equiv_q^{\GSO} \bS_\theta$, we have 
$\bS_\phi \uplus \bS_\theta \equiv_q^{\GSO} \bS_\psi \uplus \bS_\theta$ by Theorem~\ref{thm:disjoint-union}. 
Now suppose that $\phi \wedge \theta$ is satisfiable in a model of $\Phi$.
This is the case if and only if $\bS_\phi \uplus \bS_\theta$ satisfies $\Phi$, which in turn implies that $\bS_\psi \uplus \bS_\theta$ satisfies $\Phi$ since $\Phi$ has quantifier rank $q$. 
This in turn is the case 
 if and only if
$\psi \wedge \theta$ is satisfiable in a model of $\Phi$, which proves the claim.

The claim implies that $\sim_n^{\mathcal C}$ has at most $m$ equivalence classes, concluding the proof. 
\end{proof}

\begin{example}\label{expl:ext}
Let $\Phi$ be the following MSO sentence $\Phi$. 
\begin{align*}
\forall X \big (( \exists x. X(x)) \Rightarrow
 \exists x,y \in X \; \forall z \in X (\neg E(x,z) \vee \neg E(y,z)) \big )
\end{align*}
Note that a finite digraph does not satisfy $\Phi$ if it has a non-empty subgraph with the following extension property:
 for any $x,y \in X$ there exists $z \in X$ such that $E(x,z)$ and $E(y,z)$. 

\begin{figure}
\begin{center}
\includegraphics[scale=0.5]{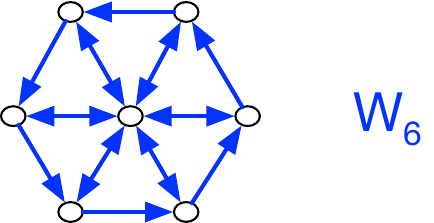}
\end{center}
\caption{An illustration of the digraph $W_6$ from Example~\ref{expl:ext}.}
\label{fig:W6}
\end{figure}

It is easy 
to see that $\llbracket \Phi \rrbracket$ is closed under disjoint unions and
that its complement is closed under homomorphisms. 
Corollary~\ref{cor:omega-cat} implies that there exists a countable $\omega$-categorical structure
with $\Csp(\bB) = \llbracket \Phi \rrbracket$.
\enex
\end{example}

\begin{remark} 
We would like to point out that
the existence of an $\omega$-categorical structure $\bB$ with the properties
from Example~\ref{expl:ext} does not simply follow from the theorem of Cherlin, Shelah, and Shi~\cite{CherlinShelahShi},
which states that for every finite set of finite connected structures $\mathcal F$ there exists an $\omega$-categorical structure $\bC$ such that a finite structure $\bA$ homomorphically maps to $\bC$ if and only if there is no homomorphism from a structure in $\mathcal F$ into $\bA$. 

Formally, we claim that there is no finite set $\mathcal F$ of finite structures ${\mathcal F}$ such that a finite structure $\bA$ has a homomorphism to $\bB$ if and only if no structure from ${\mathcal F}$ homomorphically maps to $\bA$. 
To see this, first note that the following graphs $W_i$, for $i \geq 2$, satisfy the mentioned extension property: the vertex set of $W_i$ is $\{0,1,\dots,i\}$, there is an edge from $p$ to $q$, for $p,q \in \{0,\dots,i-1\}$, if $p-q = 1 \mod i$, and additionally we have the edges $(p,i)$ and $(i,p)$ for every $p \in \{0,\dots,i-1\}$. See Figure~\ref{fig:W6} for an illustration of $W_6$. 
Note that for every $i \geq 2$ the graph $W_i$ is a core, and that they are pairwise homomorphically incomparable. Moreover, every subgraph of $W_i$ does not satisfy the extension property. 
\enrem
\end{remark} 


\subsection{Finite Unions of CSPs} 
In this section we prove that every class in GSO whose complement is closed under homomorphisms is a  
finite union of CSPs in GSO  (Theorem~\ref{thm:decomp}); the statement announced at the beginning of Section~\ref{sect:mgso} then follows (Corollary~\ref{cor:main}). 
Throughout this section, let $\tau$ be a relational signature and let ${\mathcal C}$ be a non-empty class of finite $\tau$-structures whose complement is closed under homomorphisms. In particular, ${\mathcal C}$ contains the structure $\bI$ with only one element where all relations are empty. 

\begin{definition}\label{def:sim}
Let $\sim$ be the equivalence relation defined on $\mathcal C$ by letting 
$\bA \sim \bB$ if for every $\bC \in \mathcal C$ we have $\bA \uplus \bC \in \mathcal C$ if and only if $\bB \uplus \bC \in \mathcal C$; here $\uplus$ denotes the usual disjoint union of structures, which is a special case of Definition~\ref{def:Dn} for $n=0$. 
\end{definition}

Note that the equivalence classes of $\sim$
are in one-to-one correspondence to  
the equivalence classes of $\sim_0^{\mathcal C}$. Also note that ${\mathcal C}$ is closed under disjoint unions if and only if 
$\sim$ has only one equivalence class.


If $\bA \in {\mathcal C}$, then we write
$[\bA]$ for the equivalence class of $\bA$ with respect to $\sim$. The following observations are immediate consequences from the definitions: 
\begin{enumerate}
\item 
each $\sim$-equivalence class is closed under homomorphic equivalence,\footnote{Two structures are said to be  \emph{homomorphically equivalent} if there are homomorphisms from $\bA$ to $\bB$ and from $\bB$ to $\bA$.}
\item each $\sim$-equivalence class is closed under disjoint unions. 
\item $\bA \in [\bI]$ if and only if 
$\bA \uplus \bB \in {\mathcal C}$ for all $\bB \in {\mathcal C}$. 
\end{enumerate}

\begin{lemma}\label{lem:general}
Let $\bA \in {\mathcal C}$ and 
let $\mathcal D$ be the smallest subclass of $\mathcal C$ that contains $[\bA]$ and whose complement is closed under homomorphisms.
Then 
\begin{enumerate}
\item $\mathcal D$ is a union of equivalence classes of $\sim$, and 
\item if $\sim$ has more than one equivalence class, then 
${\mathcal C} \setminus \mathcal D$ is non-empty. 
\end{enumerate}
\end{lemma}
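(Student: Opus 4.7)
The plan is to reduce both statements to a convenient reformulation of $\mathcal D$. Since the complement of $\mathcal C$ is closed under homomorphisms, $\mathcal C$ itself is closed under homomorphic preimages, and so the smallest subclass of $\mathcal C$ containing $[\bA]$ whose complement is closed under homomorphisms is simply
\[ \mathcal D = \{\bB \in \mathcal C \mid \bB \to \bB' \text{ for some } \bB' \in [\bA]\}. \]
My first step would be to show that this coincides with $\{\bB \in \mathcal C \mid \bB \uplus \bA \sim \bA\}$. One inclusion is immediate: if $\bB \uplus \bA \sim \bA$, then $\bB \uplus \bA \in [\bA]$ and the inclusion $\bB \hookrightarrow \bB \uplus \bA$ witnesses $\bB \in \mathcal D$. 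For the other inclusion, given $\bB \to \bB'$ with $\bB' \sim \bA$ and any $\bE \in \mathcal C$, I would verify the defining equivalence of $\bB \uplus \bA \sim \bA$ by cases: in the case $\bA \uplus \bE \in \mathcal C$, folding $\bA \uplus \bA \to \bA$ and applying $\bB' \sim \bA$ gives $\bB' \uplus \bA \uplus \bE \in \mathcal C$, from which the homomorphism $\bB \to \bB'$ yields $\bB \uplus \bA \uplus \bE \in \mathcal C$; in the case $\bA \uplus \bE \notin \mathcal C$, the inclusion $\bA \uplus \bE \hookrightarrow \bB \uplus \bA \uplus \bE$ together with closure of $\mathcal C^c$ under homomorphisms gives $\bB \uplus \bA \uplus \bE \notin \mathcal C$.

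For item (1), using this reformulation, it suffices to prove the sub-claim that $\sim$ is a congruence for ``$\uplus \bA$'': if $\bB \sim \bB'$, then $\bB \uplus \bA \sim \bB' \uplus \bA$. For a test structure $\bE \in \mathcal C$, I again split on whether $\bA \uplus \bE \in \mathcal C$. If yes, applying the definition of $\bB \sim \bB'$ with $\bC \ceq \bA \uplus \bE$ gives the required equivalence directly. If no, the inclusions $\bA \uplus \bE \hookrightarrow \bB \uplus \bA \uplus \bE$ and $\bA \uplus \bE \hookrightarrow \bB' \uplus \bA \uplus \bE$ together with closure of $\mathcal C^c$ under homomorphisms force both sides to lie outside $\mathcal C$. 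Transitivity of $\sim$ then combines the sub-claim with the hypothesis $\bB \uplus \bA \sim \bA$ to yield $\bB' \uplus \bA \sim \bA$, i.e.\ $\bB' \in \mathcal D$.

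For item (2), I would argue contrapositively: assuming $\mathcal D = \mathcal C$, I show $\mathcal C$ is closed under disjoint unions, which forces the defining equivalence of $\sim$ to be trivially satisfied and collapses all of $\mathcal C$ into a single $\sim$-class. Given $\bB_1, \bB_2 \in \mathcal C = \mathcal D$, I apply $\bB_1 \uplus \bA \sim \bA$ with test structure $\bB_2$ to reduce $\bB_1 \uplus \bA \uplus \bB_2 \in \mathcal C$ to $\bA \uplus \bB_2 \in \mathcal C$; the latter holds because $\bB_2 \in \mathcal D$ yields $\bA \uplus \bB_2 \sim \bA$ and hence $\bA \uplus \bB_2 \in \mathcal C$. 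The inclusion $\bB_1 \uplus \bB_2 \hookrightarrow \bB_1 \uplus \bA \uplus \bB_2$ together with closure of $\mathcal C$ under homomorphic preimages then delivers $\bB_1 \uplus \bB_2 \in \mathcal C$.

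I do not anticipate a serious obstacle; the only point requiring attention is that every structure of the form $\bX \uplus \bA$ to which I wish to apply $\sim$ must first be known to lie in $\mathcal C$, but this is routinely settled by the same homomorphism-preimage considerations.
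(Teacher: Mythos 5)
Your proposal is correct; every step checks out, including the two points where care is needed (that $\bX \uplus \bA$ lies in $\mathcal C$ before $\sim$ is invoked is indeed settled each time by mapping into $\bB' \uplus \bA$ with $\bB' \in [\bA]$ and using that $[\bA]$ is closed under disjoint unions, or by testing $\sim$ against $\bA$ itself). The underlying computations are the same as the paper's --- associativity of $\uplus$, folding $\bA \uplus \bA \to \bA$, and closure of $\mathcal C$ under homomorphic preimages --- but your organisation differs in two ways. First, you route everything through the explicit characterisation $\mathcal D = \{\bB \in \mathcal C \mid \bB \uplus \bA \sim \bA\}$ and the congruence property ``$\bB \sim \bB'$ implies $\bB \uplus \bA \sim \bB' \uplus \bA$''; the paper instead proves item (1) directly by taking $\bC \in [\bA]$, $\bB$ with a homomorphism to $\bC$, and $\bB' \sim \bB$, and chaining the equivalences to get $\bB' \uplus \bC \in [\bA]$. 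Your version makes the membership condition for $\mathcal D$ explicit, which is a cleaner invariant to carry around; the paper's chain absorbs the case $\bC \uplus \bD \notin \mathcal C$ implicitly in its first equivalence, whereas you do an explicit case split. Second, for item (2) the paper argues directly: it distinguishes $\bA \in [\bI]$ (where $\mathcal D = [\bI]$) from $\bA \notin [\bI]$ (where some $\bB$ with $\bA \uplus \bB \notin \mathcal C$ is shown to lie outside $\mathcal D$), thus exhibiting a concrete witness in $\mathcal C \setminus \mathcal D$; you argue contrapositively that $\mathcal D = \mathcal C$ forces closure under disjoint unions and hence a single $\sim$-class. Both are valid; the paper's route is more constructive, yours avoids the case distinction on $[\bI]$ entirely.
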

\begin{proof}
Let $\bC \in [\bA]$, let $\bB$ be a finite structure with a homomorphism to $\bC$, and let $\bB' \in [\bB]$. Since $\bB \uplus \bC$ and $\bC$ are homomorphically equivalent, we have that $\bB \uplus \bC \sim \bC$. We claim that 
$\bB' \uplus \bC \sim \bC$. To see this, let $\bD \in {\mathcal C}$. Then 
\begin{align*}
 \bC \uplus \bD \in {\mathcal C} 
\Leftrightarrow \; & (\bB \uplus \bC) \uplus \bD \in {\mathcal C} && \text{(since $\bB \uplus \bC \sim \bC$)} \\
\Leftrightarrow \; & \bB \uplus (\bC \uplus \bD) \in {\mathcal C} \\
\Leftrightarrow \; & \bB' \uplus (\bC \uplus \bD) \in {\mathcal C} && \text{(since $\bB \sim \bB'$)} \\
\Leftrightarrow \; & (\bB' \uplus \bC) \uplus \bD \in {\mathcal C}
\end{align*}
which shows the claim. So $\bB' \uplus \bC \in [\bC] = [\bA]$. Since $\bB'$ has a homomorphism to $\bB' \uplus \bC$ we obtain that $\bB' \in {\mathcal D}$; this proves the first statement. 

To prove the second statement,
first observe that the statement is clear if $\bA \in [\bI]$, since the complement of $[\bI]$ is closed under homomorphisms. The statement therefore follows from the assumption that $\sim$ has more than one equivalence class. 
Otherwise, if $\bA \notin [\bI]$, then there exists a structure 
$\bB \in {\mathcal C}$ such that $\bA \uplus \bB \notin {\mathcal C}$. 
Then $\bB \in \mathcal C \setminus {\mathcal D}$ can be shown indirectly as follows:
otherwise $\bB$ would have a homomorphism to a structure $\bA' \in [\bA]$.
Since $\bB \uplus \bA'$ is homomorphically equivalent to $\bA'$, we have 
$\bB \uplus \bA' \sim \bA' \sim \bA$ 
and in particular $\bB \uplus \bA' \in {\mathcal C}$. 
But $\bB \uplus \bA' \in {\mathcal C}$ if
and only if $\bB \uplus \bA \in {\mathcal C}$ 
since $\bA \sim \bA'$. 
This is in contradiction to our assumption on $\bB$. 
\end{proof}

\begin{example} We consider a signature $\tau \ceq \{R_1,R_2,R_3\}$ of unary relation symbols. 
Define for every $i \in \{1,2,3\}$ the $\tau$-structure 
$\bS_i$ to be a one-element structure
where $R_i$ is non-empty and $R_j$, for $j \neq i$, is empty. Let
$${\mathcal C} \ceq \Csp(\bS_1 \uplus \bS_2) \cup \Csp(\bS_2 \uplus \bS_3) \cup \Csp(\bS_3 \uplus \bS_1).$$
Clearly, the complement of ${\mathcal C}$ is closed under homomorphisms. 
The equivalence classes of $\sim$ can be described as follows. For distinct $i,j \in \{1,2,3\}$, 
\begin{align*}
[\bS_i \uplus \bS_j] & = \Csp(\bS_i \uplus \bS_j) \setminus (\Csp(\bS_i) \cup \Csp(\bS_j)) \\
[\bS_i] & = \Csp(\bS_i) \setminus [\bI]  \\
[\bI] & = \Csp(\bI) . 
 \end{align*} 

\vspace{-.8cm} 
\enex
\end{example}

Recall that Lemma~\ref{lem:finite} asserts that
the equivalence relation 
$\equiv^{\GSO}_q$ on the class of finite $\tau$-structures has finitely many equivalence classes $\mathcal C_1,\dots,\mathcal C_m$, and that each of the equivalence classes $\mathcal C_i$ can be defined by a single GSO $\tau$-sentence 
$\Psi_i$ with quantifier rank $q$; we write $T^{\tau}_q \ceq \{\Psi_1,\dots,\Psi_m\}$ for this set of GSO sentences. 

\begin{definition} 
Let $\Phi$ be a GSO $\tau$-sentence  of quantifier rank $q$. 
Let $J\subseteq \{1,\ldots,m\}$ be such that $\{\Psi_j \in T^{\tau}_q  \mid j \in J\}$ is exactly the set of all sentences in $T^{\tau}_q$ that imply $\Phi$.
 Then $|J|$ is called the \emph{degree} of $\Phi$. 
 \end{definition}
 
 It is easy to see that the degree of $\Phi$ is exactly the index of $\equiv^{\GSO}_q$ restricted to $\llbracket \Phi \rrbracket$.

	\begin{lemma}\label{lem:link}
	Let $\Phi$ be a GSO $\tau$-sentence of quantifier rank $q$ 
	such that 	the complement of ${\mathcal C} \ceq \llbracket \Phi \rrbracket$ is closed under homomorphisms. 
	Let $\sim$ be the equivalence relation from Definition~\ref{def:sim} for ${\mathcal C}$.
		Then
		for every $\sim$-class $\mathcal{D}$
		there exists $I\subseteq \{1,\ldots,m\}$ such  that ${\mathcal D} = \bigcup_{i \in I} \llbracket \Psi_i \rrbracket$. 
	\end{lemma}
	\begin{proof}
		As in the proof of Corollary~\ref{cor:omega-cat}
		one can use Theorem~\ref{thm:disjoint-union} 
		to show for all finite $\tau$-structures $\bA,\bB$
		that 
		if $\bA \equiv_q^{\GSO} \bB$, then $\bA \sim \bB $. 
		This means that $\mathcal{D}$ is a union of  $\equiv_q^{\GSO} $-classes and therefore there exists $I \subseteq J\subseteq \{1,\ldots,m\}$ such that ${\mathcal D} = \bigcup_{i \in I} \llbracket \Psi_i \rrbracket$. 
\end{proof}

From this lemma we get immediately the following corollary.

	\begin{corollary}\label{cor: index bound}
	Let $\Phi$ and $\sim$ be as in Lemma~\ref{lem:link}. 
		Then the index of $\sim$ is smaller than or equal to the degree of $\Phi$.
\end{corollary}

\begin{theorem}\label{thm:decomp}
Let $\Phi$ be a GSO sentence.
If the complement of $\llbracket \Phi \rrbracket$ is closed under homomorphisms, then there are 
finitely many GSO $\tau$-sentences $\Phi_1,\dots,\Phi_t$ 
each of which describes a CSP 
such that $\Phi$ is equivalent to
$\Phi_1 \vee \cdots \vee \Phi_t$. 
If $\Phi$ is an MSO sentence,
then $\Phi_1,\dots,\Phi_t$ can be chosen to be MSO sentences as well. 
\end{theorem} 
\begin{proof}
	We prove the statement by induction on 
	the degree $n$ of $\Phi$. 
	By Corollary~\ref{cor: index bound} the equivalence relation $\sim$ 
		has at most $n$ equivalence classes 
		on $\tau$-structures. Hence, if $n=1$, 
	then $\llbracket \Phi \rrbracket$ is closed under disjoint unions, and we are done by Remark~\ref{rem:CSP}. 
	
	Let $\bA_1,\dots,\bA_s$ be $\tau$-structures such that $\{[\bA_1],\dots,[\bA_s]\}$ is the set of all equivalence classes of $\sim$ that are distinct from $[\bI]$ and contained in $\llbracket \Phi \rrbracket$. 
	Let $\mathcal D_i$ be the smallest subclass of 
	$\llbracket \Phi \rrbracket$ that contains $[\bA_i]$ and whose complement is closed under homomorphisms.
	Note that $\llbracket \Phi \rrbracket = \bigcup_{i \leq s} {\mathcal D}_i$ since $[\bI]$ is contained in ${\mathcal D}_i$ for all $i \leq s$. 
{By Lemma~\ref{lem:general} (1), each $\mathcal D_i$ is a union of $\sim$-classes which are themselves a union of $\equiv_q^{\GSO} $-classes by Lemma~\ref{lem:link}.
		It follows that there exists $I_i\subseteq \{1,\ldots,m\}$ such that ${\mathcal D}_i = \bigcup_{j \in I_i} \llbracket \Psi_j\rrbracket$. We define $\Phi_i:=\bigvee_{j \in I_i} \Psi_j
		$. Note that the GSO sentence $\Phi_i$ is of quantifier rank $q$ such that ${\mathcal D}_i = \llbracket \Phi_i \rrbracket$.} Hence, $\Phi$ is equivalent to $\bigvee_{i \leq s} \Phi_i$.  
	Lemma~\ref{lem:general} (2) asserts that 
	$\llbracket \Phi \rrbracket \setminus {\mathcal D}_i$ is non-empty, and hence the degree of $\Phi_i$ must be strictly smaller than $n$ for all $i\in \{1,\ldots,s\}$. The statement now follows from the inductive assumption. The same argument applies to MSO as well. 
\end{proof}

By taking complements,  
Theorem~\ref{thm:decomp} together with Corollary~\ref{cor:omega-cat} implies the following. 

\begin{corollary}\label{cor:main}
Every GSO sentence whose finite models are closed under homomorphisms is equivalent to a finite conjunction of GSO sentences each of which describes the complement of a CSP of a countable $\omega$-categorical structure. The analogous statement holds for MSO. 
\end{corollary}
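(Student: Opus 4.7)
The plan is to deduce the corollary directly from Lemma~\ref{lem:decomp} and Corollary~\ref{cor:omega-cat} via duality: I would apply the decomposition lemma not to $\Phi$ itself but to its negation $\neg \Phi$, which transposes the homomorphism-closure hypothesis into the form required by Lemma~\ref{lem:decomp}.

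Concretely, I first observe that both GSO and MSO are closed under negation (in GSO this is explicitly admitted by clause~(2) of Definition~\ref{def:guarded}), so $\neg \Phi$ is again a GSO (respectively MSO) sentence with $\llbracket \neg \Phi \rrbracket$ equal to the complement of $\llbracket \Phi \rrbracket$ within the class of finite $\tau$-structures. The hypothesis that $\llbracket \Phi \rrbracket$ is closed under homomorphisms is then exactly the assumption of Lemma~\ref{lem:decomp} applied to $\neg \Phi$, yielding GSO (respectively MSO) sentences $\Phi_1,\dots,\Phi_m$, each describing a CSP, such that
$$\neg \Phi \; \equiv \; \Phi_1 \vee \cdots \vee \Phi_m.$$
Applying Corollary~\ref{cor:omega-cat} to each $\Phi_i$ provides a countable $\omega$-categorical structure $\bB_i$ with $\llbracket \Phi_i \rrbracket = \Csp(\bB_i)$. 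Negating once more gives
$$\Phi \; \equiv \; \neg \Phi_1 \wedge \cdots \wedge \neg \Phi_m,$$
and each $\neg \Phi_i$ is a GSO (respectively MSO) sentence whose class of finite models is precisely the complement of $\Csp(\bB_i)$, as required.

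There is no substantive obstacle here: the argument is essentially bookkeeping, and the only points to verify are that negation preserves membership in GSO and in MSO, and that the two variants ``$\llbracket \Phi \rrbracket$ closed under homomorphisms'' and ``the complement of $\llbracket \Phi \rrbracket$ closed under homomorphisms'' are genuinely dual under passage from $\Phi$ to $\neg \Phi$. Both are immediate from the definitions, so the corollary follows as a direct combination of the two previously established results and requires no further model-theoretic work.
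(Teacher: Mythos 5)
Your argument is correct and is precisely the reasoning the paper leaves implicit when it asserts that Lemma~\ref{lem:decomp} together with Corollary~\ref{cor:omega-cat} yields the corollary: one passes to $\neg \Phi$ (whose complement is $\llbracket \Phi \rrbracket$ and hence closed under homomorphisms), decomposes it as a disjunction of sentences each describing a CSP, upgrades each CSP to the CSP of an $\omega$-categorical structure via Corollary~\ref{cor:omega-cat}, and negates back to obtain the conjunction. The bookkeeping points you flag (closure of GSO and MSO under negation, and the duality of the two homomorphism-closure conditions) are exactly the ones that need checking, and they hold as you say.
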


At this point, it seems worthwhile to stress that this property of GSO does not extend to full SO: Not every homomorphism-closed class of structures that can be expressed in Second-order Logic is a finite intersection of complements of CSPs. 
We present an example of a class of finite $\tau$-structures that can even be expressed in Datalog 
but cannot be written in this form. 

\begin{example}\label{expl:datalog-completement}
Let $S$ and $T$ be unary, and let $R$ be a binary relation symbol. 
Let ${\mathcal C}$ be the class of all finite $\{S,T,R\}$-structures $\bA$ such that the following Datalog program $\Pi$ 
with the binary IDB $E$ derives $\goal$ on $\bA$. 
\begin{align*}
E(x,y) \; & {:}{-} \;  S(x),S(y) \\
 E(x,y) \; & {:}{-} \; E(x',y'),R(x',x),R(y',y) \\
 \goal \; & {:}{-} \; T(x),E(x,x'),R(x',y) 
\end{align*} 
For $n \in {\mathbb N}$, let $\bP_n$ be the $\{S,T,R\}$-structure on
the domain $\{1,\dots,n\}$ with
\begin{align*}
S^{\bP_n} & \ceq \{1\} & T^{\bP_n} & \ceq \{n\} & R^{\bP_n} & \ceq \big \{(i,i+1) \mid i \in \{1,\dots,n-1\} \big \} . 
\end{align*}
It is easy to see that each of the structures in $\{\bP_n \mid n \geq 1\}$ is not contained in $\mathcal C$,
and that the disjoint union of $\bP_i$ and $\bP_j$, for $i \neq j$, is contained in ${\mathcal C}$. 
It follows that $\mathcal C$ is not a finite intersection of complements of CSPs (and, by Corollary~\ref{cor:main}, cannot be expressed in GSO).
\enex
\end{example}

\section{Canonical Datalog Programs}
\label{sect:main}
A remarkable fact about the expressive power of Datalog
for constraint satisfaction problems over finite domains is the existence of \emph{canonical Datalog programs} and the characterisation of Datalog expressivity in terms of an existential pebble game~\cite{FederVardi}; this has been generalised to CSPs for $\omega$-categorical structures~\cite{BodDalJournal}. In this section,
we build on these results to characterise Datalog expressibility of GSO queries. 

\subsection{The Canonical Datalog Program of Width $(\ell,k)$}
A Datalog program $\Pi$ is called \emph{sound} for a class of $\tau$-structures ${\mathcal C}$ if 
 $\llbracket \Pi \rrbracket \subseteq {\mathcal C}$. It is called a 
 \emph{canonical Datalog program of width $(\ell,k)$ for ${\mathcal C}$}, for $\ell,k \in {\mathbb N}$, if   if 
 \begin{itemize}
 \item $\Pi$ is sound for ${\mathcal C}$, and 
\item $\llbracket  \Pi' \rrbracket  \subseteq \llbracket \Pi \rrbracket$ for 
every Datalog program $\Pi'$ of width $(\ell,k)$
which is sound for ${\mathcal C}$. 
\end{itemize}

The following definition is due to Feder and Vardi~\cite{FederVardi} for finite structures and has been generalised to $\omega$-categorical structures by Bodirsky and Dalmau~\cite{BodDalJournal}. 

\begin{definition}\label{def:can}
Let $\bB$ be an $\omega$-categorical structure with a finite relational signature $\tau$. Let $\bB'$ be the expansion of $\bB$ by all primitively positively definable relations of arity at most $\ell$ and let $\tau'$ be the (finite) signature of $\bB'$. Then \emph{the canonical $(\ell,k)$-Datalog program for $\bB$} has IDBs $\tau' \setminus \tau$ and EDBs $\tau$. The empty $0$-ary relation serves as $\goal$. There is a finite number of inequivalent formulas $\psi(\bar x,\bar y)$ of the form 
$$(\psi_1(\bar x,\bar y) \wedge \cdots \wedge \psi_j(\bar x,\bar y)) \Rightarrow R(\bar x)$$
having at most $k$ variables and where $\psi_1,\dots,\psi_j$ are atomic 
$\tau'$-formulas and 
$R \in \tau' \setminus \tau$. 
For each of the inequivalent formulas
$\psi(\bar x,\bar y)$ such that $\bB' \models \forall \bar x , \bar y. \psi(\bar x, \bar y)$ 
we introduce a rule 
$$R( \bar x) {:}{-} \psi_1(\bar x,\bar y),\dots,\psi_j(\bar x,\bar y) .$$
\end{definition}

\begin{theorem}[Bodirsky and Dalmau~\cite{BodDalJournal}, Theorem 4]\label{thm:BD}
Let $\bB$ be a countable $\omega$-categorical $\tau$-structure. 
Let $\ell,k \in {\mathbb N}$ and let $\Pi$ be 
the canonical Datalog program of width $(\ell,k)$ for $\bB$. 
Then for every finite $\tau$-structure $\bA$ the following are equivalent:
\begin{itemize}
\item $\Pi$ derives \goal\ on $\bA$;
\item Spoiler has a winning strategy for the existential $(\ell,k)$-pebble game on $(\bA,\bB)$. 
\end{itemize}
Moreover, $\Pi$ is a canonical Datalog program of width $(\ell,k)$ 
for the complement of $\Csp(\bB)$. 
\end{theorem}

These results extend 
to finite unions of complements of CSPs for $\omega$-categorical structures (and therefore
to all homomorphism-closed GSO-expressible classes of structures 
by Corollary~\ref{cor:main}) because of the following well-known fact. 

\begin{lemma}\label{lem:unions}
If ${\mathcal C}_1$ and ${\mathcal C}_2$
are in Datalog, then so are ${\mathcal C}_1 \cup {\mathcal C}_2$ and ${\mathcal C}_1 \cap {\mathcal C}_2$. If 
$\Pi_1$ and $\Pi_2$ are Datalog programs of width $(\ell,k)$, then there is a Datalog program $\Pi$
of width $(\ell,k)$ for $\llbracket \Pi_1 \rrbracket \cup \llbracket \Pi_2 \rrbracket$
and for $\llbracket \Pi_1 \rrbracket \cap \llbracket \Pi_2 \rrbracket$. 
\end{lemma}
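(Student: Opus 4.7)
The plan is to construct $\Pi$ by placing renamed disjoint copies of $\Pi_1$ and $\Pi_2$ side by side and then linking their goal predicates by a simple combining rule. First I would rename the IDBs of $\Pi_1$ and $\Pi_2$ so that their IDB signatures $\rho_1$ and $\rho_2$ become pairwise disjoint and disjoint from the EDB signature $\tau$, and so that neither of them contains the symbol \goal; let $g_1 \in \rho_1$ and $g_2 \in \rho_2$ denote the renamed nullary goal predicates of the two programs. For the union, I let $\Pi$ consist of all rules of the renamed $\Pi_1$ and $\Pi_2$ together with the two additional rules $\goal \; {:}{-} \; g_1$ and $\goal \; {:}{-} \; g_2$; for the intersection, I instead add the single rule $\goal \; {:}{-} \; g_1, g_2$.

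The correctness argument hinges on the observation that, since no rule of $\Pi$ uses an IDB from $\rho_1$ to derive an IDB from $\rho_2$ or vice versa, the restriction of the minimal fixed point $\Pi(\bA)$ to $\rho_i$ coincides with the corresponding relations of $\Pi_i(\bA)$, for both $i \in \{1,2\}$. I would verify this by a routine induction on the stages of the naive bottom-up evaluation, observing that derivations of $\rho_1$-atoms and of $\rho_2$-atoms proceed completely independently, sharing only the EDB atoms which are fixed by $\bA$. Consequently $g_i^{\Pi(\bA)} = \{()\}$ iff $\bA \in \llbracket \Pi_i \rrbracket$, and the combining rule(s) then guarantee that $\goal^{\Pi(\bA)} = \{()\}$ iff $\bA \in \llbracket \Pi_1 \rrbracket \cup \llbracket \Pi_2 \rrbracket$, respectively iff $\bA \in \llbracket \Pi_1 \rrbracket \cap \llbracket \Pi_2 \rrbracket$. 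This proves both the unconstrained statement (existence of \emph{some} Datalog program) and is all that needs to be shown for the width-preserving part.

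For the width claim, every rule inherited from $\Pi_1$ or $\Pi_2$ keeps its original variable count (at most $k$) and maximal IDB arity (at most $\ell$) after renaming, while the three added combining rules contain no variables and use only nullary predicates; hence they respect the $(\ell,k)$-bound trivially. Therefore $\Pi$ has width $(\ell,k)$. There is no substantive obstacle in the argument; the only point requiring mild care is maintaining disjointness of the renamed IDB signatures, which is precisely what ensures that the two subprograms do not interact during the fixed-point computation.
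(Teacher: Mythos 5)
Your proposal is correct and takes essentially the same approach as the paper: disjoint renaming of the IDBs, juxtaposition of the two rule sets, and a combining rule for the goal (the paper simply keeps \goal\ shared for the union rather than adding the two propagation rules $\goal \; {:}{-} \; g_1$ and $\goal \; {:}{-} \; g_2$, a purely cosmetic difference). Your explicit fixed-point non-interaction argument is a slightly more detailed justification of what the paper treats as immediate.
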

\begin{proof}
For union, let $\Pi$ be obtained
by taking the union of the rules of $\Pi_1$ and of $\Pi_2$, possibly after renaming IDB predicate names to make them disjoint except for $\goal$. 
For intersection, 
we proceed similarly, but we first rename the symbol $\goal$ in $\Pi_1$ to $\goal_1$ and the symbol $\goal$ in $\Pi_2$ to $\goal_2$. Finally we add the new rule $\goal \; {:}{-} \; \goal_1, \goal_2$ to the union of $\Pi_1$ and $\Pi_2$. It is clear that these constructions preserve the width. 
\end{proof}

\begin{theorem}\label{thm:canonical}
Let $\Phi$ be a GSO sentence such that $\llbracket \Phi \rrbracket$ is closed under homomorphisms. 
Let $\ell,k \in {\mathbb N}$. Then there exists a canonical Datalog program 
$\Pi$ of width $(\ell,k)$ for $\llbracket \Phi \rrbracket$.  
\end{theorem}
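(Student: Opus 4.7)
The plan is to reduce the general GSO case to the known $\omega$-categorical case via the decomposition result already established in the paper. Since $\llbracket \Phi \rrbracket$ is closed under homomorphisms, Corollary~\ref{cor:main} yields GSO sentences $\Psi_1,\dots,\Psi_m$ such that $\Phi$ is equivalent to $\Psi_1 \wedge \cdots \wedge \Psi_m$ and each $\llbracket \Psi_i \rrbracket$ is the complement of $\Csp(\bB_i)$ for some countable $\omega$-categorical $\tau$-structure $\bB_i$. This is the conceptual heart of the reduction: it replaces an arbitrary GSO-definable homomorphism-closed class by a finite intersection of complements of CSPs of $\omega$-categorical targets, which is exactly the setting handled by Bodirsky and Dalmau.

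Next, for each $i \in \{1,\dots,m\}$, I invoke Theorem~\ref{thm:BD} to obtain a canonical Datalog program $\Pi_i$ of width $(\ell,k)$ for $\llbracket \Psi_i \rrbracket$. Using the intersection construction from Lemma~\ref{lem:unions}, I combine $\Pi_1,\dots,\Pi_m$ into a single Datalog program $\Pi$ of width $(\ell,k)$ with $\llbracket \Pi \rrbracket = \bigcap_{i=1}^m \llbracket \Pi_i \rrbracket$. The only delicate point here is that the width bound $(\ell,k)$ is preserved under intersection; this is precisely what Lemma~\ref{lem:unions} provides, so there is nothing extra to verify.

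It remains to check that $\Pi$ is canonical for $\llbracket \Phi \rrbracket$. For soundness, we have $\llbracket \Pi \rrbracket = \bigcap_i \llbracket \Pi_i \rrbracket \subseteq \bigcap_i \llbracket \Psi_i \rrbracket = \llbracket \Phi \rrbracket$. For maximality, let $\Pi'$ be any width-$(\ell,k)$ Datalog program sound for $\llbracket \Phi \rrbracket$. Then for each $i$ we have $\llbracket \Pi' \rrbracket \subseteq \llbracket \Phi \rrbracket \subseteq \llbracket \Psi_i \rrbracket$, so $\Pi'$ is sound for $\llbracket \Psi_i \rrbracket$, and the canonicity of $\Pi_i$ gives $\llbracket \Pi' \rrbracket \subseteq \llbracket \Pi_i \rrbracket$. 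Intersecting over $i$ yields $\llbracket \Pi' \rrbracket \subseteq \llbracket \Pi \rrbracket$, as required.

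The argument is essentially bookkeeping once the two substantial inputs are in place: Corollary~\ref{cor:main}, which itself depends on the disjoint-union analysis of Section~\ref{sect:mgso} (in particular Theorem~\ref{thm:disjoint-union} and Lemma~\ref{lem:decomp}), and Theorem~\ref{thm:BD}. I do not foresee a real obstacle; the only conceptually nontrivial step is recognising that canonicity automatically transfers under finite intersection, because soundness for an intersection forces soundness for each individual component, which allows each $\Pi_i$'s canonicity to be applied separately and then recombined.
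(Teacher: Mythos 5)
Your proposal is correct and follows essentially the same route as the paper's own proof: decompose $\Phi$ via Corollary~\ref{cor:main} into a conjunction of complements of CSPs of $\omega$-categorical structures, take the canonical programs from Theorem~\ref{thm:BD}, intersect them with Lemma~\ref{lem:unions}, and observe that soundness for $\llbracket \Phi \rrbracket$ implies soundness for each $\llbracket \Psi_i \rrbracket$ so that canonicity transfers. No gaps.
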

\begin{proof}
By Corollary~\ref{cor:main} there are GSO sentences $\Phi_1,\dots,\Phi_m$ and $\omega$-categorical structures $\bB_1,\dots,\bB_m$ such that
$\Phi$ is equivalent to $\Phi_1 \wedge \cdots \wedge \Phi_m$ and $\llbracket \neg \Phi_i \rrbracket = \Csp(\bB_i)$. Let $\Pi_i$ be a canonical Datalog program for $\Csp(\bB_i)$ which exists by Theorem~\ref{thm:BD}. Then Lemma~\ref{lem:unions} implies that there exists a Datalog program $\Pi$ such that $\llbracket \Pi \rrbracket = \llbracket \Pi_1\rrbracket \cap \cdots \cap  \llbracket \Pi_m \rrbracket$. 
It is clear that $\Pi$ is sound for $\llbracket \Phi \rrbracket$. 
To see that $\Pi$ is a canonical Datalog program for $\llbracket \Phi \rrbracket$, suppose that $\bA$ is such that some Datalog program $\Pi'$ of width $(\ell,k)$ which is sound for $\llbracket \Phi \rrbracket$
derives \goal\ on $\bA$. 
Since, for every $i \in \{1,\dots,m\}$, the program $\Pi'$ is also sound for $\llbracket \Phi_i \rrbracket$, and $\Pi_i$ is a canonical Datalog program for $\llbracket \Phi_i \rrbracket$, 
the program $\Pi_i$ derives \goal\ on $\bA$. 
Hence, $\bA \in \llbracket \Pi \rrbracket  = \llbracket \Pi_1 \rrbracket \cap \cdots \cap  \llbracket \Pi_m \rrbracket$. 
\end{proof} 

If a GSO sentence is in Datalog, then 
whether or not the problem describes a CSP 
can be characterised by a syntactic description of the Datalog program (Proposition~\ref{prop:CSP-connected-datalog}).  
We say that a Datalog rule $\psi_0 {:}{-} \psi_1,\dots,\psi_m$
is \emph{connected} if 
$\{\psi_1,\dots,\psi_m\}$ cannot be partitioned into two non-empty subsets with disjoint sets of variables. 
The following holds for Datalog in general. 

\begin{proposition}\label{prop:connected-datalog-CSP}
Let $\Pi$ be a connected Datalog program. Then $\llbracket \Pi \rrbracket$ equals the complement of a CSP. 
\end{proposition}
\begin{proof}
By Remark~\ref{rem:CSP}, 
it suffices to show that the complement of $\llbracket \Pi \rrbracket$ is closed under disjoint unions. So let $\bA$ and $\bB$ be structures 
where $\Pi$ does not derive the goal predicate. 
Let $\bA'$ and $\bB'$ be the structure computed by $\Pi$ on $\bA$ and on $\bB$, respectively. 
If the body of a rule of $\Pi$ holds on 
$\bA' \uplus \bB'$, then all variables must take values from $\bA'$, or all variables must take values from $\bB'$, because the rule is connected. 
Hence, $\Pi$ does not derive new facts on $\bA' \uplus \bB'$, and in particular does not derive the goal predicate. Hence $\bA \uplus \bB$ belongs to the complement of $\llbracket \Pi \rrbracket$ we well. 
\end{proof}

In the specific situation that a Datalog program describes a problem in GSO, 
Proposition~\ref{prop:connected-datalog-CSP} has a converse. 

\begin{proposition}\label{prop:CSP-connected-datalog}
Suppose that $\Pi$ is a Datalog program such that 
$\llbracket \Pi \rrbracket$ is in GSO and equals the complement of a CSP. 
Then there exists a \emph{connected} Datalog program $\Pi'$ defining $\llbracket \Pi \rrbracket$. 
In fact, if $\bB$ is the $\omega$-categorical structure such that $\llbracket \Pi \rrbracket$ is the complement of $\Csp(\bB)$ (which exists by Corollary~\ref{cor:omega-cat}), and 
if $\Pi$ has width $(\ell,k)$, then such a program $\Pi'$ can be obtained from the canonical Datalog program of width $(\ell,k)$ for $\bB$ by removing all rules that are not connected. 
\end{proposition}
\begin{proof}
Let $\Pi$ be the canonical Datalog program of width $(\ell,k)$ for $\bB$ (Definition~\ref{def:can}). Suppose that $\Pi$ contains a rule which is not connected, e.g., a rule of the form
 $\psi_0 {:}{-} \psi_1,\psi_2$ where $\psi_1$ and $\psi_2$ are conjunctions of atomic formulas with disjoint sets of variables. We claim that then 
 $\psi_0 {:}{-} \psi_1$ or $\psi_0 {:}{-} \psi_2$ must be a rule of $\Pi$ as well. Otherwise,  for each $i \in \{1,2\}$ there would be a substructure of $\bA_i$ of $\bB$ that satisfies $\psi_i \wedge \neg \psi_0$. Then the disjoint union of $\bA_1$ and $\bA_2$ satisfies $\psi_1 \wedge \psi_2 \wedge \neg \psi_0$, a contradiction to the definition of the canonical Datalog program of $\bB$. 
\end{proof}

\subsection{The Existential Pebble Game}
\label{sect:pebble} 
The existence of canonical Datalog programs for GSO can be be shown via a characterisation of Datalog based on existential pebble games. 
An informal definition of the existential pebble game was given in the introduction; we now present a formal definition of the concept of \emph{(positional) winning strategies for Duplicator} in this game. 

\begin{definition}[\cite{BodDalJournal}]
\label{def:winningstrat}
A \emph{winning strategy for Duplicator} for the existential $(\ell,k)$-pebble game on a pair of relational $\tau$-structures $(\bA,\bB)$ is a non-empty set $\mathcal H$ of partial homomorphisms from $\bA$ to $\bB$ such that 
\begin{enumerate}
\item $\mathcal H$ is closed under restrictions;
\item for all $h \in {\mathcal H}$ with $|\dom(h)| = d \leq \ell$ and for all $a_1,\dots,a_{k-d} \in A$ there is an extension $h' \in {\mathcal H}$ of $h$ such that $h'$ is also defined on $a_1,\dots,a_{k-d}$. 
\end{enumerate}
\end{definition}

We say that \emph{Duplicator wins the existential $(\ell,k)$-pebble game for 
$\llbracket \Phi \rrbracket$ on $\bA$ if there exists a countable $\tau$-structure $\bB$ such that 
$\Csp(\bB) \cap \llbracket \Phi \rrbracket = \emptyset$ and Duplicator has a winning strategy on $(\bA,\bB)$ as in Definition~\ref{def:winningstrat}, and otherwise we say that \emph{Spoiler wins the game}.} 
In all our arguments about the existential pebble game, we only need this definition; in particular, we do not need the presentation of the existential pebble game as we have presented it in the introduction. 

\begin{theorem}\label{thm:main}
Let $\Phi$ be a GSO sentence.
Then $\llbracket \Phi \rrbracket$ can be defined in Datalog if and only if 
\begin{enumerate}
\item $\llbracket \Phi \rrbracket$ is closed under homomorphisms, and 
\item there exist $\ell,k \in {\mathbb N}$ such that
for all finite structures $\bA$,  
Spoiler wins the existential $(\ell,k)$-pebble game for $\llbracket \Phi \rrbracket$ on $\bA$ if and only if $\bA \models \Phi$. 
\end{enumerate}
\end{theorem}
\begin{proof}
If $\llbracket \Phi \rrbracket$ is in Datalog, then it is closed under homomorphisms. Hence, for both directions of the proof we may apply   
Corollary~\ref{cor:main} to $\Phi$, which implies that there are GSO sentences $\Phi_1,\dots,\Phi_m$ and $\omega$-categorical structures $\bB_1,\dots,\bB_m$ such that
$\Phi$ is equivalent to $\Phi_1 \wedge \cdots \wedge \Phi_m$ and $\llbracket \neg \Phi_i \rrbracket = \Csp(\bB_i)$.  
First suppose that $\llbracket \Phi \rrbracket$ is in Datalog. That is, there exist $\ell,k \in {\mathbb N}$ and a Datalog program $\Pi$ of width $(\ell,k)$ such that
$\llbracket \Phi \rrbracket = \llbracket \Pi \rrbracket$. 
Suppose that 
$\bA$ is a finite $\tau$-structure such that $\bA \models \Phi$. Then Spoiler wins the $(\ell,k)$-game as follows. Suppose that 
$\bB$ is a countable structure such that $\Csp(\bB) \cap \llbracket \Phi \rrbracket = \emptyset$.
Then 
$\Csp(\bB) \cap \llbracket \Phi_i \rrbracket = \emptyset$
for some $i \in \{1,\dots,m\}$; otherwise, 
if there is a structure $\bA_i \in \Csp(\bB) \cap \llbracket \Phi_i \rrbracket$ for every $i \in \{1,\dots,m\}$, then the disjoint union of $\bA_1,\dots,\bA_m$ satisfies $\Phi_i$ since $\Phi_i$ is closed under homomorphisms, and is in
$\Csp(\bB)$ since $\Csp(\bB)$ is closed under disjoint unions; but this is in contradiction to our assumption that $\Csp(\bB) \cap \llbracket \Phi \rrbracket = \emptyset$. 
Hence, $\Csp(\bB) \subseteq \Csp(\bB_i)$ and there is a homomorphism $h$ from $\bB$ to $\bB_i$ (see~\cite[Lemma 2]{BodDalJournal}). Note that $\Pi$ is sound for the complement of $\Csp(\bB_i)$, and $\Pi$ derives \goal\ on $\bA$, and thus Theorem~\ref{thm:BD} implies that Spoiler wins the existential $(\ell,k)$-pebble game on $(\bA,\bB_i)$. But since $\bB$ homomorphically maps to $\bB_i$, this implies that Spoiler wins the existential $(\ell,k)$-pebble game on $(\bA,\bB)$.

Now suppose that $\bA \models \neg \Phi$. 
Hence, there exists $i \in \{1,\dots,m\}$ such that $\bA \models \neg \Phi_i$. 
Then Duplicator wins the $(\ell,k)$-game as follows. She starts by playing $\bB_i$. Then $\bA$ homomorphically maps to $\bB_i$, and Duplicator can win the existential $(\ell,k)$-pebble game on $(\bA,\bB_i)$ by always playing along the homomorphism. 
 
For the converse implication, suppose that
1.~and 2.~hold. 
By Theorem~\ref{thm:BD}, for every $i \in \{1,\dots,m\}$ 
there exists a canonical 
Datalog program $\Pi_i$ of width
$(\ell,k)$ for $\llbracket \Phi_i \rrbracket$. 
Then Lemma~\ref{lem:unions} implies that there exists a Datalog program $\Pi$ such that $\llbracket \Pi \rrbracket = \llbracket \Pi_1 \rrbracket \cap \cdots \cap  \llbracket \Pi_m \rrbracket$. 
Since each $\Pi_i$ is sound for $\llbracket \Phi_i \rrbracket$, it follows that $\Pi$ is sound for
$\llbracket \Phi \rrbracket$. 
Hence, it suffices to show that if $\bA$ is a finite $\tau$-structure such that $\bA \models \Phi$, then $\Pi$ derives \goal\ on $\bA$. Since $\bA \models \Phi_i$ for all $i \in \{1,\dots,m\}$, 
the assumption implies that 
Spoiler wins the existential $(\ell,k)$-pebble game on $(\bA,\bB_i)$.  
By Theorem~\ref{thm:BD}, it follows that $\Pi_i$ derives \goal\ on $\bA$. 
Hence, $\Pi$ derives \goal\ on $\bA$. 
\end{proof} 

An example application of Theorem~\ref{thm:main} for showing that a certain MSO sentence cannot be expressed in Datalog can be found in the proof of Theorem~\ref{thm:appl}. 

\section{Nested Guarded Queries}
\label{sect:GQ}
In this section we revisit logics that express problems that are contained in both GSO and Datalog, and then prove that all of them fail to express all problems that lie in the intersection of GSO and Datalog.  
\emph{Frontier-guarded Datalog} is a fragment of Datalog that is contained in GSO.
\emph{Guarded queries (GQ)} and 
the more expressive \emph{Nested guarded queries (GQ$^+$)} are extensions of frontier-guarded Datalog that have been introduced by Bourhis, Kr\"otzsch, and Rudolph~\cite{BourhisKroetzschRudolph};  the definitions will be recalled below. 
GQ is also strictly more expressive than \emph{Nested Monadically Defined Queries (Nemodeq)}, \emph{Monadically defined queries (Modeq)}~\cite{RK2013}, 
and the recently introduced \emph{almost monadic queries}~\cite{CateDalmauOprsal}. Both GQ and GQ$^+$ are contained in the intersection of Datalog and GSO. We show that GQ$^+$ does not contain all queries from the intersection of Datalog and MSO (and hence, neither do GQ, frontier-guarded Datalog, Nemodeq, Modeq, and almost monadic Datalog):
there are CSPs 
whose complement is both in MSO and in Datalog, but not in GQ$^+$ (Corollary~\ref{cor:gqplus}). 

\subsection{Frontier-Guarded Datalog}
A rule of a Datalog program is called \emph{frontier-guarded} if all variables of the head appear in a single EDB atom in the rule body.%
\footnote{The term \emph{frontier-guarded} was originally introduced for existential rules (also referred to as tuple-generating dependencies or Datalog$^\pm$), an expressive extension of Datalog. Frontier-guarded Datalog is then simply the syntactic intersection of frontier-guarded existential rules and Datalog. The name \emph{guarded} is reserved for rules where all variables of the rule (and not just the variables of the head) appear in a single EDB atom in the rule body.} 
 A Datalog program is called \emph{frontier-guarded} if all its rules are frontier-guarded. 
Note that every monadic Datalog program is frontier-guarded. Also the Datalog program from the proof of Proposition~\ref{prop:not-mso} is frontier-guarded. 
 
\begin{proposition}\label{prop:gdatalog-gso}
Every problem in frontier-guarded Datalog is in GSO. 
\end{proposition}
\begin{proof}
Let $\Pi$ be a Datalog program. 
Let $\Phi_\Pi$ be the SO sentence obtained by 
\begin{itemize}
\item existentially quantifying over the IDBs of $\Pi$, 
\item replacing each rule $\psi \; {:}{-} \; \phi_1,\dots,\phi_m$ of $\Pi$ by the conjunct $\forall \bar x (\psi \vee \neg \phi_1 \vee \cdots \vee \neg \phi_m)$ of $\Phi$, 
\item additionally adding the conjunct $\neg \goal$ to $\Phi_\Pi$. 
\end{itemize}
Clearly, $\Pi$ derives \goal\ on a finite structure $\bA$ if and only if $\bA$ satisfies $\neg \Phi_\Pi$.

We will now show that, if $\Pi$ is frontier-guarded, then $\Phi_\Pi$ can be expressed by a GSO sentence. 
To this end, we establish that $\Phi_\Pi$ has the same meaning under standard and under guarded semantics, from which the desired result follows via \cref{prop:guarded-semantics-equiv}.
Toward a contradiction, assume some structure $\bA$ satisfies $\Phi_\Pi$ under standard but not under guarded semantics (as $\Phi_\Pi$ is preceded only by existentially quantified relation symbols, this is the only possibility).
Assume that instantiating the IDBs with the relations $R_1, \ldots, R_k$ over $A$ witnesses that $\bA$ satisfies $\Phi_\Pi$ under standard semantics.
Obtain now the guarded relations $R'_1, \ldots, R'_k$ by removing all unguarded tuples from  $R_1, \ldots, R_k$, respectively. 
The desired contradiction is now obtained by arguing that instantiating the IDBs with the relations $R'_1, \ldots, R'_k$ witnesses that $\bA$ satisfies $\Phi_\Pi$ under guarded semantics:
First, we note that $\neg \goal$ is still satisfied. Second, for any rule $\psi \; {:}{-} \; \phi_1,\dots,\phi_m$ of $\Pi$, fixing any variable assignment of $\bar x$ with elements of $A$, we obtain that the truth of $\psi \vee \neg \phi_1 \vee \cdots \vee \neg \phi_m$ under the instantiation with $R_1, \ldots, R_k$ implies its truth under the instantiation with $R'_1, \ldots, R'_k$: On the one hand, should any $\neg \phi_i$ be true, this is immediate (both for EDB and IDB atoms). On the other hand, in case all $\neg \phi_i$ are false, then so is the one $\neg \phi_j$ where $\phi_j=\phi_j(\bar y)$ is the rule's ``frontier guard'', i.e., it consists of an EDB atom where $\bar y$ contains at least all the variables of $\bar z$ from $\psi(\bar z) = \psi$. Truth of $\phi_j(\bar y)$ under the chosen assignment implies that the tuple assigned to $\bar y$ is guarded and therefore also the tuple assigned to $\bar z$ must be guarded. Yet, then, truth of $\psi(\bar z)$ under the two IDB instantiations coincides, hence $\psi(\bar z)$ must hold.
\end{proof}

However, not every problem in GSO which is in Datalog can be expressed by a frontier-guarded Datalog program. 
To prove this we need the following definition (in contrast to Definition~\ref{def:can}, we work with a single parameter, since this is sufficient for our purposes). 

\begin{definition}
Let $\bB$ be an $\omega$-categorical structure with finite relational signature $\tau$ and $k \in {\mathbb N}$. Let $s$ be the maximal arity of $\tau$. Then the \emph{canonical frontier-guarded Datalog program of width $k$ for $\bB$} is the subset of the canonical Datalog program of width $(k,k)$ for $\bB$ which contains all rules that are frontier-guarded. 
\end{definition}

If $\Pi$ is the canonical frontier-guarded Datalog program of width $k$ for $\bB$ then it follows from Proposition~\ref{prop:gdatalog} below that 
$\llbracket  \Pi' \rrbracket  \subseteq \llbracket \Pi \rrbracket$ for 
every frontier-guarded Datalog program $\Pi'$ of width $(k,k)$
which is sound for the complement of $\Csp(\bB)$. 
We need to adapt the existential pebble game on $\bA$ and $\bB$ to the guarded setting as well. In our informal definition of the existential pebble game, we only change the rules for Spoiler: we additionally require that when Spoiler removes pebbles from $\bA$, then
all the pebbles that remain on $\bA$ must be \emph{guarded}, i.e., 
there must exist a tuple in a relation of $\bA$ such that these pebbles must be from the entries of this tuple. 
The resulting game will be called the \emph{existential guarded $k$-pebble game}. 
Formally, we again work exclusively with 
the concept of a winning strategy ${\mathcal H}$ for Duplicator (similarly as in Definition~\ref{def:winningstrat}). 
A partial homomorphism from $\bA$ to $\bB$ is called \emph{guarded} if there exists a tuple in a relation of $\bA$ whose entries contain $\dom(h)$. 

\begin{definition}
Let $k \in {\mathbb N}$ and let $\tau$ be a finite relational signature. 
A \emph{winning strategy for Duplicator for the existential guarded $k$-pebble game on two relational $\tau$-structures $\bA,\bB$}
is a set $\mathcal H$ of partial homomorphisms from $\bA$ to $\bB$ such that ${\mathcal H}$ is closed under restrictions, and for 
every $S \subseteq A$ with $|S| \leq k$
and every guarded $h \in {\mathcal H}$ with $\dom(h) \subseteq S$, there 
is an extension $h' \in {\mathcal H}$ of $h$ with domain $S$.
\end{definition}

The following can be shown analogously to the proof of Theorem 4 in~\cite{BodDalJournal}. 
We omit the proof because all the ideas are 
present in the proof 
of a more interesting theorem that we present in full detail in Section~\ref{sect:multi} (Theorem~\ref{thm:multi}). 

\begin{proposition}\label{prop:gdatalog}
Let $\bB$ be $\omega$-categorical with finite relational signature $\tau$ 
and let $\bA$ be a finite $\tau$-structure. 
Then for all $k \in {\mathbb N}$ the following statements are equivalent. 
\begin{itemize}
\item Every sound frontier-guarded Datalog program of width $k$ for the complement of $\Csp(\bB)$ does not derive \goal\ on $\bA$. 
\item The canonical frontier-guarded Datalog program of width $k$ for $\bB$   
does not derive \goal\ on $\bA$.
\item Duplicator has a winning strategy for the existential guarded $k$ pebble game on $\bA,\bB$. 
\end{itemize}
\end{proposition}

Recall from Example~\ref{expl:Q} that
$\Csp({\mathbb Q};<)$ is expressible in MSO; it is easy to see that its complement can be defined by a Datalog program (of width (2,3)). 

\begin{proposition}\label{prop:Q}
The complement of $\Csp({\mathbb Q};<)$ cannot be defined by a frontier-guarded Datalog program. 
\end{proposition}
\begin{proof}
Let $k \in {\mathbb N}$. 
By Proposition~\ref{prop:gdatalog}, it suffices to show that there exists a structure $\bA$ which has no homomorphism to $({\mathbb Q};<)$ but 
Duplicator has a winning strategy for the existential guarded $k$ pebble game on $\bA,({\mathbb Q};<)$. 
Let $\bA$ be a directed cycle of length at least $k+1$. Let ${\mathcal H}$ be the set of \emph{all} partial homomorphisms $h$ from $\bA$ to $({\mathbb Q};<)$ with a domain of size at most $k$.
We claim that $\mathcal H$ 
is a winning strategy for the existential guarded 
$(\ell,k)$-pebble game.  
Clearly, ${\mathcal H}$ is closed under restrictions. 
Now, let $h$ be a partial homomorphism whose domain is contained in a tuple from a relation of $\bA$, that is, contained in $\{a_1,a_2\}$ for two subsequent elements of $A$ on the cycle. 
Let $a_3,\dots,a_k$ be elements of $A$. Since $|A| \geq k+1$ there must be some $b \in A \setminus \{a_1,\dots,a_k\}$. Let $\prec$ be the order on $a_1,\dots,a_k$ in which these elements are traversed on the cycle, starting with $b$. 
Note that $\prec$ is a linear extension of $<^{\bA}$. 
We may then extend $h$ to a partial homomorphism $h'$ 
from $\bA$ to $({\mathbb Q};<)$ such that
$h(a_i) < h(a_j)$ if $a_i \prec a_j$, for all $i,j \in \{1,\dots,k\}$. Note that
$h' \in {\mathcal H}$; this finishes the proof. 
\end{proof} 

\subsection{Flag and Check}
A general mechanism, called \emph{flag and check}, to turn a query language into a potentially more powerful query language has been introduced by Bourhis, Kr\"otzsch, and Rudolph~\cite{BourhisKroetzschRudolph}, generalising earlier work of Rudolph and Kr\"otzsch~\cite{RK2013}. 
The idea is best described for some fundamental constraint satisfaction problems. It is easy to see 
that there is no $k$ such that $\Csp(\{0,1\};\neq)$ (i.e., graph 2-colorability) is defined by a Datalog program of width $(1,k)$. However, to decide whether a graph is 2-colorable, it suffices to show that there is no vertex $v$ which can reach $v$ via a path of odd length. Note that such a  
\emph{`check'} can be performed by a Datalog program of width $(1,2)$ once a \emph{`flag'} has been put on $v$. 
Similarly, there is no $k \in {\mathbb N}$ such that $\Csp({\mathbb Q};<)$ (i.e., digraph acyclicity) can be defined by a Datalog program of width $(1,k)$~\cite{Bodirsky}. However, to decide whether a graph contains a directed cycle, it suffices to find a vertex $v$ which can reach $v$ by a directed path with at least one edge. Again, after $v$ has been found and \emph{`flagged'} such a computation can be performed by a Datalog program of width $(1,2)$. 

We now present the formal definition of flag-and-check programs
of Bourhis, Kr\"otzsch, and Rudolph~\cite{BourhisKroetzschRudolph}. 

\begin{definition}\label{def:fnc}
Let $\tau$ be a finite set of relation and constant symbols. 
A \emph{flag-and-check $\tau$-program ($\tau$-FCP) of arity $m$} is a set of Datalog rules $\Pi$ with EDBs from $\tau$ and the IDBs 
$\{\goal,P_1,\dots,P_k\}$ where \goal\ is a distinguished predicate of arity $0$.
Moreover, on top of constants from $\tau$, $\Pi$ may use extra distinguished constant symbols $\lambda_1,\dots,\lambda_m$. 
If $\bA$ is a $\tau$-structure, then 
$\Pi^{\bA}$ is the set of all tuples $(a_1,\dots,a_m) \in A^m$ such that $\Pi$ derives \goal\ on the
$\tau \cup \{\lambda_1,\dots,\lambda_m\}$-expansion $\bA'$ of $\bA$ where $\lambda^{\bA'}_i = a_i$ for all $i \in \{1,\dots,m\}$. 
\end{definition}

Several formalisms in the literature 
are based on flag-and-check programs:
\begin{itemize}
\item A \emph{guarded $\tau$-query (GQ)} is a query with free variables $y_1,\dots,y_k$, for some $k \in {\mathbb N}$, 
of the form $\exists x_1,\dots,x_{\ell}. \Pi(\bar x, \bar y)$ where $\Pi$
is a frontier-guarded $\tau$-FCP of arity $m = k+\ell$, with the obvious semantics~\cite{BourhisKroetzschRudolph}. 
\item A \emph{monadically defined query (Modeq)} is  a query with free variables $y_1,\dots,y_k$, for some $k \in {\mathbb N}$, of the form $\exists x_1,\dots,x_{\ell}. \Pi(\bar x, \bar y)$ where 
$\Pi$ is a \emph{monadic $\tau$-FCP} of arity $m = k + \ell$, i.e., all IDBs of $\Pi$ have arity at most one~\cite{RK2013} (we follow the presentation in~\cite{BourhisKroetzschRudolph}). 
\end{itemize}

We omit the reference to $\tau$ in the notation if the reference is clear from the context. 
It has been shown that every problem in Modeq can be expressed in MSO~\cite{RK2013}. 
Similarly, one can show that GQ is contained in GSO. This result is implicit in~\cite{BourhisKroetzschRudolph}, which is why we present the proof here for the convenience of the reader.

\begin{proposition}\label{prop:GQ-GSO}
Every problem in GQ is in GSO. 
\end{proposition}
\begin{proof}
Let $k,\ell \in {\mathbb N}$ and 
let $\Pi$ be a guarded flag-and-check program of arity $k+\ell$. 
We have to find a GSO sentence which is equivalent to $\exists x_1,\dots,x_{\ell}. \Pi(\bar x,\bar y)$.  
Let $\Phi_{\Pi}$ be the GSO sentence
that is equivalent to $\Pi$ (Proposition~\ref{prop:gdatalog-gso}). 
Then we define $\Psi$
to be the GSO sentence obtained from $\Phi_{\Pi}$ as follows. 
\begin{itemize}
\item Replace each occurrence of $\lambda_i$, for $i \in \{1,\dots,m\}$, by a fresh variable $x_i$. 
\item Existentially quantify  $x_1,\dots,x_{\ell}$. \qedhere
\end{itemize}
\end{proof}

An example of a query that can be expressed in GQ, but not in Modeq can be found in~\cite{BourhisKroetzschRudolph} (Example 2). 
An example of a Datalog query in MSO which cannot be expressed in Modeq 
has already been found in~\cite{RK2013} (Example 6); in fact, the same query is not even expressible in GQ, as we will show in Section~\ref{sect:multi} (Example~\ref{prop:inexpr}), so we present it in more detail in the following. 

\begin{example}\label{expl:MSO-nemodeq}
Let $\tau$ be the  EDB  signature consisting  of the binary relation symbols $C$, $L$, and $D$. 
Let $\Phi_1(x,y)$ be the MSO $\tau$-formula
\begin{align*}
	\forall z \exists U_2,U_3 \forall u,v \big (  
	(C(u,z) & \Rightarrow U_3(u)) \\
	\wedge \quad (U_3(v) \wedge C(u,v) & \Rightarrow U_3(u)) \\
	\wedge \quad  (L(x,u) & \Rightarrow U_2(u)) \\
	\wedge \quad  (U_2(u) \wedge U_3(u) \wedge L(u,v) & \Rightarrow U_2(v)) 
	\wedge \; \neg U_2(y)  
	\big );
\end{align*}
Figure~\ref{fig:CLD} depicts  a $\tau$-structure that satisfies $\neg \Phi_1(x,y)$ for the elements of the structure labelled with $x$ and $y$.  
\begin{figure}
\begin{center}
\includegraphics[scale=.4]{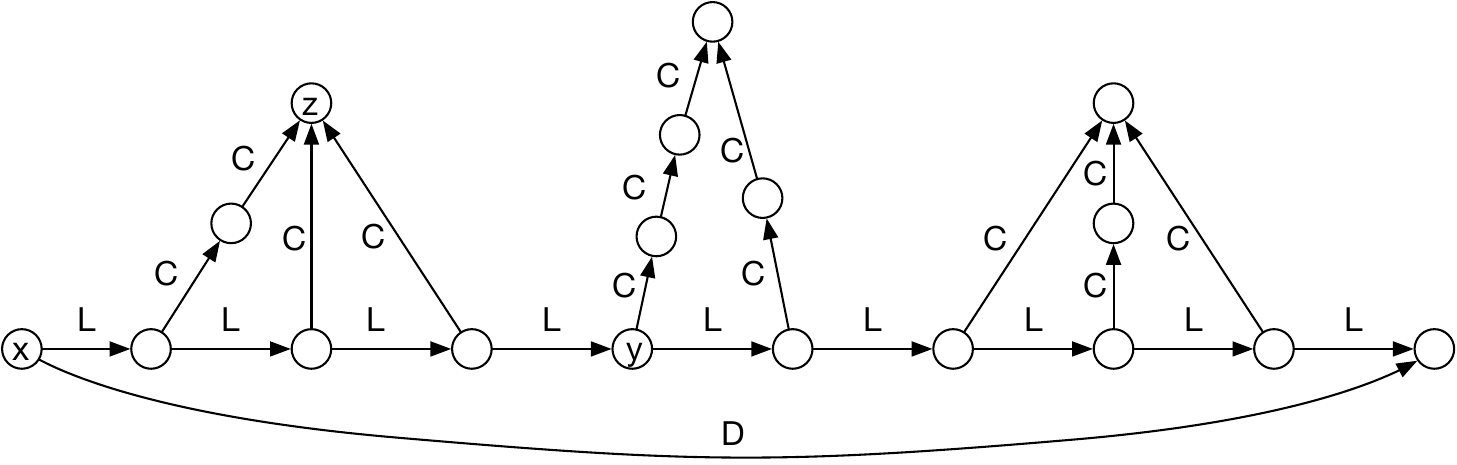}
\end{center}
\caption{An illustration of a structure that satisfies the sentence $\Psi$ in Example~\ref{expl:MSO-nemodeq}.} 
\label{fig:CLD}
\end{figure}
Let $\Phi_2(u,v)$ be the MSO formula 
\begin{align*}
\exists U_1 \forall x,y \big ( U_1(u) 
\wedge (U_1(x) \wedge \neg \Phi_1(x,y) \Rightarrow U_1(y) ) 
\wedge \neg (U_1(x) \wedge \neg \Phi_1(x,v)) 
\big ) . 
\end{align*}
Finally, let $\Psi$ be the MSO sentence
$\exists u,v \big (D(u,v) \wedge \neg \Phi_2(u,v) \big)$. 
See Figure~\ref{fig:CLD} for an illustration of a $\tau$-structure that satisfies $\Psi$. 
It follows from the results of Rudolph and Kr\"otzsch~\cite{RK2013} (Example~\ref{expl:ctd}) that $\Psi$ can be expressed by a Datalog program. \enex
\end{example}

\subsection{The Expressive Power of Modeqs} 
The Modeq formalism generalises natural concepts that have been discovered independently in the constraint satisfaction literature; this section can be skipped by impatient readers, and consulted by those interested in the relationship to constraint satisfaction. 

Certain flag-and-check programs have been studied in the context of the complexity of constraint satisfaction under the name \emph{peek arc consistency (PAC)}~\cite{DBLP:journals/tcs/BodirskyC10}, extending the famous (hyper-) arc consistency procedure (AC; see, e.g., \cite[Section 8.4]{Book}). In the following, $\tau$ denotes a finite relational signature. 

\begin{definition}
Let $\bB$ be a countable $\omega$-categorical $\tau$-structure with the orbits
$O_1,\dots,O_n$ and let $b_1,\dots,b_n$ be representatives from these orbits. 
Then the PAC procedure for $\Csp(\bB)$ derives  \goal\  on a $\tau$-structure $\bA$ if there exists an $a \in A$ such that for all $i \leq n$ the hyperarc consistency procedure for $\Csp(\bB,\{b_i\})$ derives  \goal\  on $(\bA,\{a\})$. 
\end{definition}
 
 For example, the PAC procedure for $\Csp(\{0,1\};\neq)$ derives 
 \goal\ on a finite structure $\bA$ if and only if
 $\bA$ has no homomomorphism to $(\{0,1\};\neq)$; similarly, the PAC procedure solves
$\Csp({\mathbb Q};<)$. 
 We prove that the PAC procedure can be expressed as a Modeq. 

\begin{lemma}
Let $\bB$ be a countable $\omega$-categorical $\tau$-structure with maximal arity $p$. Then there exists a Boolean Modeq 
$\Theta$ of width $(1,p)$
such that $\llbracket \Theta \rrbracket$ 
equals the class of all finite $\tau$-structures
where the PAC procedure for $\Csp(\bB)$ derives  
\goal. 
\end{lemma}

\begin{proof}
By the theorem of Ryll-Nardzewski, the automorphism group of $\bB$
has finitely many orbits $O_1,\dots,O_n$;
choose representatives $b_1,\dots,b_n$ from these orbits. 
For every $i \leq n$, let $\Pi_i$ be the Datalog program for the hyperarc consistency procedure for $\Csp(\bB,\{b_i\})$ (which has width $(1,p)$; see Section 8.4 in~\cite{Book}). 
Let $C_i$ be the unary IDB of $\Pi_i$ for the relation $\{b_i\}$, and let $\Pi'_i$ be the Datalog program obtained from $\Pi_i$ by adding the rule 
$C_i(\lambda_1) \; {:}{-}$ (without precondition).
Since expressibility by Datalog programs of width $(1,p)$ is closed under intersection (Lemma~\ref{lem:unions}), there is 
a Datalog program $\Pi$ of width $(1,p)$  expressing
$\Pi'_1 \wedge \cdots \wedge \Pi'_n$.
Also note that $\Theta := \exists x. \Pi$ holds on a $\tau$-structure $\bA$
if and only if there exists an element $a \in A$ such that 
for every $i \leq n$ the program $\Pi'_i$ derives
\goal\ on $(\bA,\{a\})$, which by definition is the case if and only if the PAC procedure derives \goal\  on $\bA$. This shows the statement. 
\end{proof}

The following example demonstrates that the expressivity of Modeq extends well beyond that of PAC. 

\begin{example}
The following is based on an example from~\cite{ACandFriends} of a structure such that the complement of the CSP of this structure cannot be solved by PAC, but can be solved by \emph{singleton arc consistency (SAC)}.\footnote{We do not need the notion of singleton arc consistency in this article and therefore refrain from giving a full definition and refer the interested reader to~\cite{ACandFriends}.} Below we will show that the mentioned class is even in Modeq. 
Let $\bB$ be the structure with domain
$\{-1,0,1,2\}$ and relations
\begin{align*}
R^{\bB} & \coloneqq \{-1,0,1,2\}^2 \setminus \{(-1,-1)\} \\
S^{\bB} & \coloneqq \{(0,1),(1,2),(2,0),(-1,-1)\}
\end{align*}
and let $\bA$ be the structure
with the domain $\{a,b,c,a',b',c'\}$ 
and the relations 
\begin{align*}
R^{\bA} & \coloneqq \{(a,a')\} \\
S^{\bA} & \coloneqq \{(a,b),(b,c),(a,c),(a',b'),(b',c'),(a',c')\}
\end{align*}
Then there is no homomorphism from $\bA$ to $\bB$, but the PAC procedure for $\Csp(\bB)$ does not derive \goal on $\bA$~\cite{ACandFriends}. 
However, the problem can be defined by the Modeq $\exists x_1,x_2.\, \Pi$ where $\Pi$ is the following monadic flag-and-check program of arity two. 
\begin{align*}
U_0(\lambda_1) & \; {:}{-}  & U_0(\lambda_2) & \; {:}{-}  \\
U_1(z) & \; {:}{-} \;  U_0(y) , S(y,z) 
& U_2(y) & \; {:}{-} \; U_0(z) , S(y,z) \\
U_2(z) & \; {:}{-} \;   U_1(y) , S(y,z) 
& U_0(y) & \; {:}{-} \; U_1(z) , S(y,z) \\
U_0(z) & \; {:}{-} \;  U_2(y) , S(y,z) 
& U_1(y) & \; {:}{-} \; U_2(z) , S(y,z) \\
\goal & \; {:}{-} \;   U_0(y), U_1(y), R(y,z), U_0(z), U_1(z)
\end{align*}
{\bf Claim.} $\exists x_1,x_2.\, \Pi$
 evaluates to true on a finite $\{R,S\}$-structure $\bA$ if and only if there is no homomorphism from $\bA$ to $\bB$. First 
suppose that there are $a_1,a_2 \in A$ such that 
$\Pi$ derives \goal\ on the 
$\{\lambda_1,\lambda_2\}$-expansion $\bA'$ of $\bA$ where $\lambda_i^{\bA_1} = a_i$ for $i \in \{1,2\}$. This means that there are $b_1,b_2 \in A$ such that $(b_1,b_2) \in R^{\bA}$,
and there are paths of net length 0 and 1 from $a_1$ to $b_1$,
 and paths of net length 0 and 1 from $a_2$ to $b_2$. In general, the existence of paths of net length $0$ and $1$ between two vertices implies that both have to be mapped to $-1$. Therefore, any homomorphism from $\bA$ to $\bB$ must 
map both $b_1$ and $b_2$ to $-1$, which is impossible
since $(b_1,b_2) \in R^{\bA}$. 

Now suppose that for any $a_1,a_2 \in A$ the program $\Pi$ does not derive \goal\  on the structure $\bA'$ defined above. That is, for every $(b_1,b_2) \in R^{\bA}$ there exists $i \in \{1,2\}$ such that the for all $u \in A$ that are connected to $b_i$ in the graph defined by $S^{\bA}$ there is no path of net length $0$ and no path of net length $1$ to $b_i$. Hence, there must exist a path of net length $2$ from $u$ to $b_i$. We then define $h(u) := b_i + 2 \mod 3$. For all other $u \in A$, we define $h(u) := -1$. It is straightforward to verify that $h$ is a homomorphism from $\bA$ to $\bB$. 
\enex
\end{example}

\subsection{Nested Queries}
The expressive power of Datalog fragments can sometimes be increased by considering \emph{nesting}, 
which is a concept introduced in~\cite{RK2013} and further studied in~\cite{BourhisKroetzschRudolph}. The idea is that  in nested queries we allow the use of other nested queries as if they were atomic formulas. 
Depending on the query language we start with, nesting may or may not increase the expressive power. Nested Datalog programs, for example, can be rewritten into equivalent Datalog programs without nesting. Similarly, MSO and GSO are closed under nesting. 
On the other hand, nested monadic queries (\emph{Nemodeq})
are strictly more expressive than monadic queries~\cite{RK2013}, and, similarly, 
nested guarded queries (\emph{GQ$^+$}) are strictly more expressive than guarded queries, as we will see below (Proposition~\ref{prop:inexpr}). 
We mention that the \emph{layered tree programs} studied in~\cite{CarvalhoDalmauKrokhin} are
in fact the queries obtained from linear monadic Datalog programs with at most one EDB per rule via nesting.

\begin{definition}[\cite{BourhisKroetzschRudolph}]
\label{def:nested}
Let $q,m \in {\mathbb N}$ and let $\tau$ be a finite relational signature. 
A \emph{$q$-nested (monadic, guarded) flag-and-check $\tau$-program of arity $m$} is defined inductively as follows. A $1$-nested (monadic, guarded) $\tau$-FCP of arity $m$ is the same as a (monadic, guarded) $\tau$-FCP of arity $m$ as defined in Definition~\ref{def:fnc}. A $q+1$-nested (monadic, guarded) $\tau$-FCP of arity $m$ is a (monadic, guarded) $\tau$-FCP $\Pi$ of arity $m$ that
may use in rule bodies $q$-nested (monadic, guarded) $\tau$-FCPs in addition to the symbols of $\tau$. 
These queries are called the \emph{subqueries of $\Pi$ at nesting level $q$};
the subqueries of $\Pi$ at nesting level $j$ are
inductively defined as the subqueries of the subqueries at nesting level $j+1$. 
In the case of guarded $q+1$-nested FCPs, the $q$-nested FCPs may not serve as guards. 
An FCP has \emph{width $m$} if all its subqueries have arity at most $m$. 

A \emph{($q$-nested, monadic, guarded, width $m$) $\tau$-query} is a query of the form $\exists \bar x. \Pi$ where $\Pi$ is a ($q$-nested, monadic, guarded, width $m$) $\tau$-FCP. 
\end{definition}

\begin{remark}\label{rem:disjunction} 
Note that all the classes of queries defined in Definition~\ref{def:nested} are closed under disjunction by appropriately taking the union of
the rule sets, just as in the proof of Lemma~\ref{lem:unions} for Datalog. 
\enrem
\end{remark}

The class of nested guarded queries is denoted by 
GQ$^+$, and the class of nested monadically defined queries by Nemodeq. It is known that every nested monadically defined query is equivalent to a Datalog program~\cite{RK2013}; the same is even true
for every nested guarded query~\cite[Figure 1]{BourhisKroetzschRudolph}. The following fact is implicit in~\cite{BourhisKroetzschRudolph}. 

\begin{theorem}\label{thm:gq-to-datalog}
Let $\tau$ be a finite relational signature with maximal arity $s$. 
Let $\Pi$ be a $q$-nested guarded FCP of arity $m$ and EDBs $\tau$ such that all rules have at most $k$ variables. 
Then there exists a 
Datalog program $\Pi'$ of width $(m+s,m+k)$ 
with a distinguished IDB $P_{\Pi}$ of arity $m$ 
such that for every $\tau$-structure $\bA$,
a tuple ${\bar a} \in A^m$ satisfies
$\Pi(\bar a)$
if and only if $\bar a \in P_{\Pi}^{\Pi'(\bA)}$.
The program $\Pi'$ can be computed in polynomial time from $\Pi$. 
\end{theorem}
\begin{proof}
The proof follows closely the corresponding statement for Nemodeq~\cite{RK2013}. For $q=1$, 
the Datalog program $\Pi'$ contains for each rule in $\Pi$ a new rule obtained
by 
\begin{itemize}
\item replacing each constant $\lambda_i$ by a new variable $x_i$, 
\item replacing each occurrence of $\goal$ by $P_{\Pi}(x_{1},\dots,x_{m})$
where $P_{\Pi}$ is a new IDB of arity $m$;
\item replacing each atomic formula $R(\bar y)$ by a new atomic formula $R'(x_1,\dots,x_{m},\bar y)$ where $R'$ is a new IDB.
\end{itemize}
For $q > 1$, 
the translation is defined recursively: 
let $\Pi_1,\dots,\Pi_s$ be the subqueries
of $\Pi$ at nesting level $q-1$. Let $\Pi'_1,\dots,\Pi'_s$ be the Datalog programs that can be associated to these FCPs by the inductive assumption; we may assume that the IDBs for all these programs are pairwise disjoint. Then $\Pi'$ is defined to be the union over the rules of $\Pi_1,\dots,\Pi_s$ together 
with the rules of $\Pi$ where we replace
each subquery $\Pi_i$ at nesting level $q$ by $P_{\Pi_i}$. 

To verify that the Datalog program $\Pi'$ defined in this way satisfies the required properties, let $\bA$ a $\tau$-structure and suppose that $\bar a \in A^m$ satisfies $\Pi(\bar a)$. 
This means that $\Pi$ derives $\goal$ 
on the $\tau \cup \{\lambda_1,\dots,\lambda_m\}$-expansion $\bA'$ of $\bA$ where $\lambda_i^{\bA'} = a_i$ for all $i \in \{1,\dots,m\}$,
that is, $\goal^{\Pi(\bA')} = \{()\}$. 
It can be shown by induction over
the evaluation of Datalog programs (see Remark~\ref{rem:op-dat}) that this is the case if and only if
$\bar a \in P_\Pi^{\Pi'(\bA)}$. 
\end{proof}

We therefore use some of the terminology that we introduced for Datalog also for Nemodeq and for GQ$^+$. 
It is also known that Nemodeq is contained in MSO~\cite{RK2013}. 

\begin{example}\label{expl:ctd}
Example~\ref{expl:MSO-nemodeq} can be expressed in Nemodeq~\cite{RK2013}, and hence is in Datalog and in MSO. 
\enex
\end{example}

\begin{proposition}\label{prop:gqpgso}
Every problem in GQ$^+$ is contained in GSO. 
\end{proposition}
\begin{proof}
This is an immediate consequence of
Proposition~\ref{prop:GQ-GSO},  because
GSO is closed under nesting. 
\end{proof}

\subsection{The Nested Guarded Game}
\label{sect:multi}
In this section we present a modification of the existential guarded $k$-pebble game that allows us to capture the expressive power of GQ and GQ$^+$. 
In particular, we will prove that the query in Example~\ref{expl:MSO-nemodeq} is not in GQ. We call this game the \emph{nested guarded game}; it will also be used  in the next section to prove that there are problems in the intersection of Datalog and GSO that cannot be expressed in GQ$^+$. 

Again we start with an informal description of the game. Let $\tau$ be a finite relational signature. The game has three parameters, $q$, $m$, and $k$. 
There are two players, Spoiler and Duplicator, that play on a pair of $\tau$-structures $(\bA,\bB)$. Each player has $k$ labelled pebbles, out of which $m$ pebbles are blue, the others are red. 
Spoiler places the blue pebbles on $\bA$, Duplicator answers by placing her blue 
pebbles on elements of $\bB$. 
Then the two players play the 
existential guarded $(k - m)$-pebble game
with the red pebbles, with the difference 
that Duplicator loses if the map between all the pebbled vertices (blue and red alike) is not a partial homomorphism. At most $q$ times, 
Spoiler can relocate the red pebbles (without the guard restriction).  
We present a formal definition of winning strategies for Duplicator in the nested guarded game; all statements about the game only involve such winning strategies for Duplicator. 

\begin{definition}\label{def:duplicator}
Let $k,q,m \in {\mathbb N}$. A \emph{winning strategy for Duplicator for the $q$-nested width $m$ guarded $k$-pebble game on two $\tau$-structures $(\bA,\bB)$} is a sequence ${\mathcal H}_0,\mathcal H_1,\dots,{\mathcal H}_q$ of non-empty sets of partial homomorphisms from $\bA$ to $\bB$ such that 
\begin{enumerate}
\item each of ${\mathcal H}_0,{\mathcal H}_1,\dots,{\mathcal H}_q$ is closed under restriction;
\item if $h \in {\mathcal H}_i$, for $i \in \{1,\dots,q\}$, is such that there exists a tuple in a relation of $\bA$ whose entries contain $\dom(h)$, 
then ${\mathcal H}_i$ contains for every $S \subseteq A$ with $|S| \leq k$ and $\dom(h) \subseteq S$ an extension of $h$ with domain $S$;
\item if $h \in {\mathcal H}_{i}$ for $i \in \{1,\dots,q\}$ 
and $S \subseteq A$ with $|S| \leq k$ 
is such that $\dom(h) \subseteq S$ and $|\dom(h)| \leq m$, 
then ${\mathcal H}_{i-1}$ contains an extension of $h$ with domain $S$.
\end{enumerate}
\end{definition}

To prove the connection between 
GQ$^+$ and 
the $q$-nested guarded $k$-pebble game
we need an appropriate notion of canonical Boolean $q$-nested guarded queries. 

\begin{remark} 
We present the game characterisation of  GQ$^+$ only for such GQ$^+$ sentences whose sets of finite models are  complements of CSPs of 
$\omega$-categorical structures $\bB$; however, 
in analogy to Section~\ref{sect:main},
one can define a game that characterises the general case, using 
Corollary~\ref{cor:main}. Since this is notationally and technically heavier, but does not add new ideas or new applications in the following, we only present the special case. 
\enrem
\end{remark}

\begin{definition}[Canonical nested guarded queries]
Let $\bB$ be an $\omega$-categorical structure with a finite relational signature $\tau$ and let $R \subseteq B^m$ be a relation with a primitive positive definition over $\bB$. 
The \emph{canonical $q$-nested guarded query $\Pi_R$ of width $(m,k)$ for $(\bB,R)$} is defined by induction over $q$ as follows. 
For $q = 1$, let $o^1 = (o^1_1,\dots,o^1_m),\dots,o^s = (o^s_1,\dots,o^s_m) \in B^m$ 
be representatives of all orbits of $m$-tuples of $\Aut(\bB)$ that are not contained in $R$. 
For $i \in \{1,\dots,s\}$ let $\Pi_{o^i}$ be the canonical frontier-guarded Datalog program of width $k$ of the $\omega$-categorical structure 
$(\bB;o^i_1,\dots,o^i_m)$.\footnote{Observe that if $(s_1,\dots,s_m)$ and $(s_1',\dots,s_m')$ have the same orbit in $\Aut(\bB)$, then 
$(\bB,s_1,\dots,s_m)$ and $(\bB,s_1',\dots,s_m')$ have the same canonical frontier-guarded Datalog program.} We transform $\Pi_{o^i}$ into an $m$-ary FCP by replacing literals of the form $o^i_j = x$ by 
$\lambda_j  = x$. Let $\Pi_R$ be obtained by taking 
the conjunction of all the resulting FCPs over all $i \in \{1,\dots,s\}$ (which is again an FCP; this can be shown analogously to   Lemma~\ref{lem:unions}
). 

For $q > 1$, we suppose inductively that for all $m \geq 0$ and relations $S$ with a primitive positive definition over $\bB$ 
the canonical $q-1$ nested guarded query $\Xi_S$ of width $(m,k)$ for 
$(\bB,S)$ is already defined.  
Replace in the canonical $1$ nested guarded query of width $(m,k)$ 
for $(\bB,R)$ every occurrence of an EDB $S$ by $\Xi_S$. We define $\Pi_R$ to be the resulting $q$-nested query. 

The \emph{canonical $q$ nested guarded query $\Pi$ of width $(m,k)$ for $\bB$} is obtained from the canonical $q$ nested guarded query $\Pi_R$ of width $(m,k)$ for $(\bB,R)$, where $R$ is the empty relation of arity $0$, by existentially quantifying all free variables. 
\end{definition}

\begin{definition}[$q$-nested $m$-bounded $k$-variable logic]
Let $\tau$ be a relational signature and let $q,m,k \in {\mathbb N}$. 
Then $\tau$-sentences in \emph{$q$-nested $m$-bounded $k$-variable logic $L^{m,k}_q$} are built inductively from atomic $\tau$-formulas 
and the following operation:
if $\phi_1,\dots,\phi_{s}$ are formulas from $L^{m,k}_q$, then 
$$\exists \bar y \; (\phi_1 \wedge \cdots \wedge \phi_{s})$$ is in 
$L^{m,k}_q$ if 
\begin{itemize} 
\item 
there exists an $i \in \{1,\dots,s\}$ and an atomic formula $\psi$ in $\phi_i$ whose variables contain all the variables that are not existentially quantified in $\bar y$, 
\item each of the formulas $\phi_1,\dots,\phi_s$ 
is from $L^{m,k}_{q-1}$.
\end{itemize}
\end{definition}


The following result and its proof are inspired by Theorem~\ref{thm:BD} and its proof. 

\begin{theorem}\label{thm:multi}
Let $\bB$ be an $\omega$-categorical structure 
with finite relational signature $\tau$ 
and let $\bA$ be a finite $\tau$-structure. 
Then for all $q,m,k \in {\mathbb N}$ the following statements are equivalent. 
\begin{enumerate}
\item Every $q$-nested guarded query of width $(m,k)$ which is sound for the complement of $\Csp(\bB)$ is false on $\bA$. 
\item The canonical $q$-nested guarded query of width $(m,k)$ for the complement of $\Csp(\bB)$ 
is false  
on $\bA$.
\item Duplicator has a winning strategy for the $q$-nested width $m$ guarded $k$-pebble game on $(\bA,\bB)$. 
\item Every sentence from $L^{m,k}_{q}$ which holds in $\bA$ also holds in $\bB$. 
\end{enumerate}
\end{theorem}

\begin{proof}
For the implication from 1.\ to 2.,
it suffices to prove that the canonical $q$-nested guarded query $\Pi$ of width $(m,k)$ is sound for the complement of $\Csp(\bB)$. Let $\bA$ be a finite $\tau$-structure. 
Let $R \subseteq B^m$ be primitively positively definable over $\bB$. 
 If $q=1$  
then $\Pi(t_1,\dots,t_m)$ holds in $\bA$ if and only if for every 
tuple $(s_1,\dots,s_m) \in B^m \setminus R$ the canonical frontier-guarded Datalog program for $(\bB,s_1,\dots,s_m)$ derives $\goal$.  
This in turn means that there exists no homomorphism from $\bA$ to $\bB$ which maps $(t_1,\dots,t_m)$ to $(s_1,\dots,s_m)$, because the canonical frontier-guarded Datalog program, which is a subset of the canonical Datalog program of width $k$, is sound by definition. By induction over the nesting depth $q$ it follows that if $\bA$ satisfies the canonical $q$ nested guarded query of with $(m,k)$ for $\bB$, then there is no homomorphism from $\bA$ to $\bB$, and hence $\Pi$ is sound for $\Csp(\bB)$. 

For the implication from 2.\ to 3., let $\bB'$ be the expansion of $\bB$ by all primitive positive definable relations of arity at most $m$; since $\bB$ is $\omega$-categorical, the structure $\bB'$ still has a finite relational signature $\tau'$. We compute a winning strategy ${\mathcal H}_0,{\mathcal H}_1,\dots,{\mathcal H}_q$ for Duplicator for the $q$ nested width $m$ guarded $k$ pebble game on $(\bA,\bB)$ 
 using the canonical $q$-nested guarded query of width $(m,k)$ for $(\bB',R)$ for every relation $R$ of arity $m$ which has a primitive positive definition in $\bB$. 
Define $\bA'_0$ be the expansion of $\bA$ with the same signature as $\bB'$ that contains
for every $m$-ary relation symbol $R \in \tau' \setminus \tau$ the empty relation, i.e., $R^{\bA'_0} = \emptyset$. 
For $i \in \{0,\dots,q-1\}$, suppose that $\bA_0',\dots,\bA_i'$  have already been defined. 
Let $\bA'_{i+1}$ be the expansion of $\bA$ with the same signature as $\bB'$ that contains for every $m$-ary relation symbol $R \in \tau' \setminus \tau$ 
the 
relation computed by 
the canonical $i$-nested guarded query of width $(m,k)$ for $(\bB',R)$
on $\bA'_{i-1}$. 
For $i \in \{0,\dots,q\}$, let ${\mathcal H}_i$ be the set of all partial homomorphisms $f$ from $\bA_i'$ to $\bB'$ with domain of size at most $k$. 
Note that since the canonical $q$ nested guarded query of width $(m,k)$ for $\bB$ is false on $\bA$, 
the relation symbol $F$ of arity $0$ which denotes the empty relation over $\bB$ (which is primitively positively definable) denotes the empty relation over $\bA'_q$ as well, and hence ${\mathcal H}_0,{\mathcal H}_1,\dots,{\mathcal H}_q$ are non-empty. Clearly, these sets are closed under restriction. 

We claim that
the sequence ${\mathcal H}_0,{\mathcal H}_1,\dots,{\mathcal H}_q$ also satisfies items 2.\ and 3.\ in the definition of winning strategies (Definition~\ref{def:duplicator}). 
Indeed, for item 2., let $h \in {\mathcal H}_i$, for $i \in \{1,\dots,q\}$, be such that there exists a tuple in a relation of $\bA$ whose entries contain $\dom(h) = \{a_1,\dots,a_j\}$, and let $S \subseteq A$ be a superset of $\dom(h)$ of size at most $k$. Consider the following frontier-guarded rule in the canonical Datalog program for $(\bB,h(a_1),\dots,h(a_j))$:
the body of the rule is the canonical query $\phi$ of the substructure of $\bA'_q$ induced on $S$. The head of the rule is $R(a_1,\dots,a_j)$ where $R \in \tau'$ is such that $R^{\bB'}$ is the projection of the relation defined by $\phi$ in $\bB'$ to $a_1,\dots,a_j$. By the assumption that there exists a tuple in a relation of $\bA$ whose entries contain $\{a_1,\dots,a_j\}$, this rule is indeed frontier guarded. Since the rule applies to $\bA'$ we obtain that $(a_1,\dots,a_j) \in R^{\bA'_q}$. 
By the definition of ${\mathcal H}_i$, we have that $(h(a_1),\dots,h(a_j)) \in R^{\bB'}$. By the definition of $R^{\bB'}$ we can find elements in $B'$ for the other variables of $\phi$ so that $\phi$ is satisfied, and this shows that $h$ can be extended to a partial homomorphism from $\bA'_q$ to $\bB'$ defined on all of $S$. 


We show item 3.\ by induction over $i \in \{1,\dots,q\}$. 
Let $h \in {\mathcal H}_{i}$ be such that $\dom(h)$ has size at most $m$ and let $S \subseteq A$ be a superset of $\dom(h) = \{a_1,\dots,a_j\}$ of size $j \leq k$. We have to show that ${\mathcal H}_{i-1}$ contains an extension of $h$ with domain $S$. 
Let $\phi$ be the primitive positive formula obtained from the canonical query of the substructure $\bA_S$ of $\bA'_i$ induced on $S$ by existentially quantifying all variables except the ones in the domain of $h$. Then $\phi$ is equivalent to a primitive positive formula over the signature of $\bB$;  let $R$ be the relation symbol of $\bB'$ for the relation defined by $\phi$ in $\bB$. 
For any $(t_1,\dots,t_j) \in B^j \setminus R^{\bB'}$ 
the canonical $q$-nested query derives $\goal$ 
since the canonical query of $\bA_S$ together with $a_1 = t_1,\dots,a_j = t_j$ is unsatisfiable in $(\bB,t_1,\dots,t_j)$. Hence, $(a_1,\dots,a_j) \in R^{\bA'}$. 
In this case, $(h(a_1),\dots,h(a_j)) \in R^{\bB'}$ since $h$ is a partial homomorphism on $\{a_1,\dots,a_j\}$. Therefore, $(h(a_1),\dots,h(a_j))$ satisfies $\phi$
and the witnesses for the existentially quantified variables of $\phi$ provide an extension of $h$ to a partial homomorphism from $\bA'_{i+1}$ to $\bB$ which is defined on all of $S$. 


3.\ implies 4.: 
We show by induction over the syntactic structure of $L^{m,k}_q$ formulas that 
if $\phi(v_1,\dots,v_m)$ is an $L^{m,k}_q$ formula, then for all $h \in {\mathcal H}_q$ and all elements $a_1,\dots,a_m$ from the domain of $h$, if $\bA$ satisfies $\phi(a_1,\dots,a_m)$, then
$\bB$ satisfies $\phi(h(a_1),\dots,h(a_m))$. For $m=0$ this implies 4. The base case of the induction is obvious because atomic formulas are preserved by homomorphisms. Now suppose that 
$\phi$ is of the form $\exists \bar y (\phi_1 \wedge \cdots \wedge \phi_s)$ where 
$\phi_1,\dots,\phi_s$ are formulas from $L^{m,k}_q$. Let $\bar c$ be a tuple providing witnesses for the variables $\bar y$ that shows that $\phi(a_1,\dots,a_m)$ holds in $\bA$. 

If there exists an $i \in \{1,\dots,s\}$ and an atomic formula $\psi(\bar z)$ in $\phi_i$ whose variables contain all the variables that are not existentially quantified in $\bar y$, then $h(\bar z)$ is a tuple in a relation of $\bA$ that contains the domain of $h$. Hence, $h$ has an extension $h'$ whose domain contains the entries of $\bar c$, because $h \in {\mathcal H}_q$. 
Then $h'(\bar c)$ provides the witnesses for $\bar y$ that show that $\phi(h(a_1),\dots,h(a_m))$ holds in $\bB$, because $\phi_i(h(a_1),\dots,h(a_m),h'(\bar c))$ holds by inductive assumption for every $i \leq s$. 
If each of the formulas $\phi_1,\dots,\phi_s$ is even from $L^{m,k}_{q-1}$, then we consider the extension $h'$ of $h$ from ${\mathcal H}_{q-1}$ whose domain also contains the entries of $\bar c$; such an extension exists because $h \in {\mathcal H}_q$. By the inductive assumption, $\bA$ satisfies $\phi_i(h'(a_1),\dots,h'(a_m),h'(\bar c))$ for every $i \leq s$. This concludes the induction. 

Finally, 4.\ implies 1.: suppose that there is a $q$ nested guarded query $\phi$ of width $(m,k)$
which is sound for the complement of $\Csp(\bB)$ and true on $\bA$. We translate the evaluation of $\phi$ on $\bA$ 
into a sentence from $L^{m,k}_q$ which is true on $\bA$ but false on $\bB$. 
Suppose that $\phi$ is of the form $\exists \bar x. \Pi$ where $\Pi$ is a $q$-nested guarded FCP, 
and let  $\bar a$ be elements of $A$ for the variables in $\bar x$ that show that $\phi$ is true in $\bA$. 

For each IDB $R$ of $\Pi$ of arity $k$ and every 
$k$-tuple $\bar a$ such that $\Pi$ derives $R(\bar a)$ on $\bA$ we define an $L^{m,k}_q$-formula $\psi(x_1,\dots,x_k)$ which holds on $\bar a$ in $\bA$ 
and such that $\Pi$ derives $R(\bar x)$ on the canonical database of $\psi(\bar x)$ (after transforming it into prenex normal form) 
by induction over the evaluation of $\Pi$ on $\bA$. 
Let $\theta(x_1,\dots,x_k,\bar y)$ be 
 the body of the rule of $\Pi$ that derived $R(\bar a)$ on $(\bar a,\bar b)$. 
For every atomic formula $S(\bar z)$ of $\theta$ we may assume by inductive assumption that there exists an $L^{m,k}_q$-formula $\chi(\bar z)$ which holds on the respective entries $\bar c$ of $(\bar a,\bar b)$ in $\bA$ and such that $\Pi$ derives $S(\bar c)$ on the canonical database of $\chi(\bar z)$. Then we define $\psi$ to be the conjunction of all these formulas where all variables $\bar y$ are existentially quantified. 
This formula clearly holds on $\bar a$ in $\bA$,
and $\Pi$ derives $R^(\bar x)$ on the canonical database of $\psi$. 
The rule of $\Pi$ that derives $R(\bar a)$ must be guarded, and hence contains  an atomic formula whose variables contain $x_1,\dots,x_k$, so the formula $\psi$ is indeed an $L^{m,k}_q$-formula. 
Then the formula that we obtain for the IDB $R$ of $\Pi$ that denotes $B^k$ in $\bB'$ by existentially quantifying all variables is true on $\bA$ but false on $\bB$. 
\end{proof} 

\begin{corollary}\label{cor:inexpr}
Let $\bB$ be an $\omega$-categorical structure with a finite relational signature $\tau$ and let $q,m \geq 1$.  
Then the complement of $\Csp(\bB)$ cannot be expressed by a $q$-nested width $m$ guarded query if and only if for all $k \geq 1$ there exists a finite $\tau$-structure $\bA \notin \Csp(\bB)$ such that Duplicator has a winning strategy for the $q$-nested width $m$ guarded $k$-pebble game on $(\bA,\bB)$. 
\end{corollary}
\begin{proof}
For the backwards direction, let 
$\Pi$ be a $q$-nested guarded query of width $(m,k)$ which is sound for the complement of $\Csp(\bB)$. 
By assumption, there exists 
$\bA \notin \Csp(\bB)$ such that Duplicator has a winning strategy for the $q$ nested width $m$ guarded $k$ pebble game on $(\bA,\bB)$. By the implication from 3.\ to 1.\ in Theorem~\ref{thm:multi}, the query $\Pi$ is false on $\bA$. But then $\Pi$ does not express the complement of $\Csp(\bB)$. 

For the forwards direction, we show the contraposition. We assume that there exists a $k \geq 1$ such that for every 
$\bA \notin \Csp(\bB)$ Duplicator has no winning strategy 
for the $q$ nested width $m$ guarded $k$ pebble game on $(\bA,\bB)$. It suffices to prove that the canonical $q$ nested guarded query $\Pi$ of width $(m,k)$ for the complement of $\Csp(\bB)$ expresses the complement of $\Csp(\bB)$. 
Let $\bA$ be an instance of $\Csp(\bB)$. 
If $\bA$ is satisfiable then $\Pi$ is false on $\bA$, because $\Pi$ is sound for the complement of $\Csp(\bB)$ as shown in the proof of the 
implication 1.\ implies 2.~in
Theorem~\ref{thm:multi}. If $\bA$ is unsatisfiable then the contraposition of the implication from 2.\ to 3.\ in  Theorem~\ref{thm:multi} implies that 
$\Pi$ is true on $\bA$. Hence, $\Pi$ expresses  the complement of $\Csp(\bB)$. 
\end{proof}


To illustrate the power of our game characterisation of GQ, we show that there are problems in the intersection of Datalog and MSO that are inexpressible in GQ. 

\begin{proposition}\label{prop:inexpr}
The class of finite structures described by the MSO sentence $\Psi$ from Example~\ref{expl:MSO-nemodeq} is (in Datalog, but) not expressible in GQ. 
\end{proposition}
\begin{proof}
It is easy to see that $\Psi$ describes a class $\mathcal C$ of finite structures whose complement is closed under disjoint unions, and hence is the complement of a CSP by Remark~\ref{rem:CSP}. By Corollary~\ref{cor:main} 
there exists an $\omega$-categorical structure $\bB$ 
such that the complement of $\Csp(\bB)$ equals
${\mathcal C}$. 

To prove the inexpressibility in GQ, we use Corollary~\ref{cor:inexpr} for the special case $q=1$. Let $m,k \geq 1$. 
Let $\phi_1(x_0,x_{k})$ be the formula 
$$\exists x_1,\dots,x_{k-1} \bigwedge_{i \in \{0,\dots,k-1\}} C(x_i,x_{i+1}).$$ 
Let $\phi_2(y_0,y_k)$ be 
formula $\exists z,y_1,\dots,y_{k-1} \bigwedge_{i \in \{1,\dots,k-1\}} \big (\phi_1(y_i,z) \wedge L(y_i,y_{i+1}) \big).$
Let 
$\phi$ be the formula
$\bigwedge_{i \in \{0,\dots,m\}} \phi_2(z_i,z_{i+1}) \wedge D(z_0,z_{m+1})$, rewritten in prenex normal form, and let $\bA$ be the canonical database of $\phi$. 

\medskip 
{\bf Claim 1.} $\bA$ satisfies $\Psi$, and hence is an unsatisfiable instance of $\Csp(\bB)$. 
First note that $(z_0,z_1),(z_1,z_2),\dots,(z_m,z_{m+1})$ are precisely the pairs in $A^2$
that do not satisfy the formula $\Phi_1$ from definition of $\Psi$ in Example~\ref{expl:MSO-nemodeq}. Then 
the vertices $z_0,z_{m+1}$ play the role of $u$ and $v$: to see that $\Phi_2(u,v)$ does not hold in $\bA$, let $U_1 \subseteq A$ be such that it satisfies the first two conjuncts of $\Phi_2$ for all $x,y \in A$. Then $U_1$ must contain $z_0$ by the first conjunct, and $z_1$ by the second conjunct, and inductively by the second conjunct it must contain all of $z_0,z_1,\dots,z_{m+1}$, and hence $v$. But then the last disjunct $\neg (U_1(x) \wedge \neg \Phi_1(x,v))$ is not satisfied for $x=z_m$. 

\medskip 
{\bf Claim 2.} Duplicator has a winning strategy  
${\mathcal H}_0,{\mathcal H}_1$ for the $1$-nested width $m$ guarded $k$ pebble game on $\bA$ and $\bB$: we set ${\mathcal H}_0$ to be the set of all partial homomorphisms $h$ from $\bA$ to $\bB$ such that $\dom(h) \leq m$. Set ${\mathcal H}_1$ to be the set of all partial homomorphisms $h$ from $\bA$ to $\bB$ such that if $\dom(h)$ contains the variable $z$ from a conjunct $\phi_2(y_0,y_k)$ of $\phi$, and $\phi_2'(\bar u)$ is the formula obtained from the rewriting of $\phi_2$ into prenex normal form and existentially quantifying all variables except for the ones that are in the domain of $h$, 
then $h(\bar u)$ satisfies $\phi_2'$ in $\bB$. 
Note that $\dom(h)$ cannot contain the variable $z$ for all of the conjuncts of $\phi$ of the form $\phi_2(y_0,y_k)$, and hence ${\mathcal H}_1$ is non-empty. It is straightforward to verify that 
${\mathcal H}_0,{\mathcal H}_1$ satisfies
the three items 
 from 
Definition~\ref{def:duplicator}. 

The statement is an immediate consequence of
Corollary~\ref{cor:inexpr} for $q=1$ and the two claims. 
\end{proof}

\subsection{Separation in GSO}
\label{sect:sep}
In this section we present an example of a GSO sentence which is in Datalog,
but cannot be expressed in GQ$^+$. 
The GSO sentence describes the complement of the CSP of the following structure $\bB$, which is fixed throughout this section. 
Define $\bB := ({\mathbb N};D,R)$ where 
\begin{align*}
D & \coloneqq \{(x,y) \in {\mathbb N}^2 \mid x \neq y\} \\
 R & \coloneqq \{(x,y,z)  \in {\mathbb N}^3 \mid x=y \Rightarrow y = z\}.
 \end{align*}
Clearly, the structure $\bB$ is $\omega$-categorical.
The structure $\bB$ plays a prominent role in the classification of structures preserved by all permutations~\cite{BodChenPinsker} and the complexity of quantified constraint satisfaction problems for equality constraints~\cite{qecsps,MartinZhuk}. 
The proof of the following proposition uses 
ideas from Example~\ref{expl:eq-mso}. 

\begin{proposition}\label{prop:eq-gso}
There is a GSO sentence that expresses
$\Csp(\bB)$. 
\end{proposition}
\begin{proof}
Let $\tau = \{D,R\}$ be the signature of $\bB$. 
Let $S$ be a new ternary and 
$X$ a new unary relation symbol. 
Let $\chi_X$ be the formula which specifies
that 
\begin{itemize}
\item $X$ is non-empty, 
\item if $x,y,z$ are such that $S(x,y,z)$, then $y \in X$ if and only if $z \in X$, and
\item $X$ is minimal with respect to the properties above (the details are similar as in  Example~\ref{expl:eq-mso}). 
\end{itemize} 
Let $\phi$ be the $\tau \cup \{S,X\}$-sentence which 
expresses that there exists $S \subseteq R$ such that for all $X$ that satisfy $\chi_X$ we have
\begin{enumerate}
\item for all $x,y$, if  $D(x,y)$ then $x \notin X$ or $y \notin X$, and 
\item for all $x,y,z$, if $R(x,y,z)$ but not $S(x,y,z)$, 
then $x \notin X$ or $y \notin X$.  
\end{enumerate}
We claim that a finite structure $\bA$ satisfies $\phi$ if and only if $\bA$ has a homomorphism to $\bB$. 
If $h$ is a homomorphism from $\bA$ to $\bB$, 
then we define $S \subseteq R$ to be the relation that contains $(x,y,z) \in R$ if and only if $h(x) = h(y)$. Let $X \subseteq A$ be such that $\chi_X$ is satisfied. 
We have to show that 1.\ and 2.\ above are satisfied. Let $\bA'$ be the $\{E,D\}$-structure with domain $A$ and the relations
\begin{align*}
E^{\bA'} & \coloneqq \bigcup_{(x,y,z) \in S} \{(x,y),(y,z)\} \\
D^{\bA'} & \coloneqq D^{\bA} \cup \{(x,y) \mid \exists z (R(x,y,z) \wedge \neg S(x,y,z)) \} 
\end{align*}
Note that $h$ is a homomorphism from 
$\bA'$ to
structure $({\mathbb N};E,D)$ 
from Example~\ref{expl:eq-mso}.
Recall from Example~\ref{expl:eq-mso} that 
any homomorphism from $\bA'$ to $({\mathbb N};E,D)$ must be constant on $X$. 
Hence, if $(x,y) \in D$ then $h(x) \neq h(y)$ and hence we must have $x \notin X$ or $y \notin X$. 
To see that 2.~is satisfied as well, note that if
$(x,y,z) \in R \setminus S$, then $(x,y) \in D^{\bA'}$, and hence $h(x) \neq h(y)$ by the definition of $S$, which again implies that $x \notin X$ or $y \notin X$. 

Now suppose that $\bA$ satisfies $\phi$. Let
$S \subseteq R$ be the ternary relation witnessing this. Define the structure $\bA'$ as above. 
As in Example~\ref{expl:eq-mso} we may argue that there exists a homomorphism from $\bA'$ to 
$({\mathbb N};E,D)$, which is a homomorphism from $\bA$ to $\bB$ as well. 
\end{proof}

\begin{proposition}\label{prop:eq-gq}
The complement of $\Csp(\bB)$ cannot be expressed in GQ$^+$. 
\end{proposition}
\begin{proof}
Let $q,m \geq 1$. 
We use Corollary~\ref{cor:inexpr} to show that $\Csp(\bB)$ cannot be expressed by a $q$-nested width $m$ guarded query. Let $k \geq 1$. 
We construct an unsatisfiable instance $\bA$ of $\Csp(\bB)$ 
such that Duplicator wins the $q$ nested width $m$ guarded $k$-pebble game on $(\bA,\bB)$.  
Without loss of generality we may assume that $m \leq k$. 
It will be convenient to specify $\bA$ as the canonical database of a primitive positive 
$\{D,R\}$-formula $\phi$.
Let $\phi_0(x,y)$ be the formula $R(x,x,y)$, and for $i > 0$  let $\phi_i(x,y)$ be the formula 
$$\exists x_0,\dots,x_k \left (x = x_0 \wedge  \bigwedge_{j = 0}^{k-1} \phi_{i-1}(x_j,x_{j+1}) \wedge R(x_0,x_k,y) \right ).$$


Let $\phi$ be the formula $\phi_{s}(x,y) \wedge D(x,y)$ where 
$s := q+1$. 
Let $\bA$ be the canonical database of $\phi$ after transforming 
$\phi$ into prenex normal form. 
Note that every variable of $\phi_i(x,y)$ except $x$ and $x_0$ appears exactly once as the final argument in a conjunct that involves $R$. 



\medskip 
{\bf Claim.} $\bA$ is an unsatisfiable instance of $\Csp(\bB)$. This follows from the fact that $\phi_i(x,y)$ implies $x=y$, for every $i \leq s$, which can be shown by a straightforward induction over $i$. 

\medskip 
{\bf Claim.} 
Duplicator has a winning strategy 
in the $q$ nested width $m$ guarded $k$-pebble game 
on $(\bA,\bB)$. 

Let ${\mathcal H}_0$ be the set of all partial homomorphisms $h$ from 
substructures of $\bA$ to $\bB$ 
with domain size at most $k$, and 
let ${\mathcal H}_i$, for all $i \in \{1,\dots,q\}$, 
be the set of all partial homomorphisms $h$ from 
substructures of $\bA$ to $\bB$ 
with domain size at most $k$ 
such that $h(u)=h(v)$, for $u,v \in \dom(h)$, if 
and only if both $u$ and $v$ 
\begin{itemize}
\item are part of 
the same subformula of $\phi$ of the form $\phi_{i-1}$, or 
\item are elements of
different subformulas of $\phi$ of the form $\phi_{i-1}$ but $\dom(h)$ contains all the variables
$x_r,\dots,x_s,x_{s+1}$ from $\phi_i$ for some $0 \leq r < s < k$ and $u$ appears in the subformula 
$\phi_{i-1}(x_{r},x_{r+1})$ and $v$ appears in the subformula $\phi_{i-1}(x_{s},x_{s+1})$. 
\end{itemize}
Since $\dom(h) \leq k$ it is impossible that $r=0$ and $s=k-1$; hence, if $x_0$ and $x_k$ are from $\dom(h)$, then 
we will have $h(x_0) \neq h(x_k)$. 

Clearly, ${\mathcal H}_i$ is closed under restrictions and non-empty. 
Now suppose that $h \in {\mathcal H}_i$ is such that there exists a tuple $a$ in a relation of $\bA$ whose entries contain $\dom(h)$. Let $S \subseteq A$ be a superset of $\dom(h)$ of cardinality at most $k$. 
If this relation is $D$, then
$a$ equals $(x,y)$, 
we may clearly extend $h$ to some 
$h' \in  {\mathcal H}_i$ with $h'(x) \neq h'(y)$. 
Otherwise, the relation must be $R$ and $a$ equals $(x_0,x_k,y)$ for some variables $x_0,x_k,y$ from a subformula of $\phi$ of the form $\phi_j(x,y)$ for some $j \leq i$.  
Note that $h(x_0) = h(x_k) = h(y)$ if $j < i$, 
and $h(x_0) \neq h(x_k)$ and $h(x_k) \neq h(y)$ otherwise. It is easy to see that in both cases we may extend $h$ to some $h' \in {\mathcal H}_i$ with $\dom(h') = S$. 
Finally,  if $h \in {\mathcal H}_{i}$ for $i \in \{1,\dots,q\}$ and $S \subseteq A$ with $|S| \leq k$ is such that $\dom(h) \subseteq S$,  
then ${\mathcal H}_{i-1}$ contains an extension $h'$ of $h$ with domain $S$: 
if $x \in S$ appears in a copy of $\phi_{i-1}$ 
whose variables intersect $\dom(h)$, we set $h'(x)$ to the value taken by these variables under $h$.
If $\dom(h)$ contains all the variables $x_r,\dots,x_s,x_{s+1}$ from $\phi_i$ for some $0 \leq r < s < k$ and $x$ appears in the subformula $\phi_{i-1}(x_s,x_{s+1})$ and some variable of $\phi_{i-1}(x_r,x_{r+1})$ appears in $\dom(h)$, then
we set $h'(x)$ to the value taken by this variable under $h$.  Otherwise, pick $h'(x)$ to be a new value. 
Hence, ${\mathcal H}_0,{\mathcal H}_1,\dots,{\mathcal H}_q$ is indeed a winning strategy for Duplicator for the $q$ nested width $m$ guarded $k$ pebble game on $(\bA,\bB)$. 
\end{proof}

\begin{proposition}\label{prop:ex-datalog}
The complement of $\Csp(\bB)$ can be expressed in Datalog. 
\end{proposition}
\begin{proof}
It is easy to see that the complement of $\Csp(\bB)$ can be defined by a Datalog program of width $(2,3)$, because for a given $\tau$-structure $\bA$ it suffices to 
\begin{itemize}
\item 
compute the smallest reflexive, symmetric, and transitive binary relation $T$ on all elements participating in $R$ such that for every $(x,y,z) \in R^{\bA}$ with $(x,y) \in T$ we have $(y,z) \in T$:\\
	\begin{minipage}{0.35\textwidth}
		\begin{align*}
			T(y,z) & \; {:}{-}  \; T(x,y), R(x,y,z)  \\
			T(y,x) & \; {:}{-}  \; T(x,y)  \\
			T(x,z) & \; {:}{-}  \; T(x,y), T(y,z)  
		\end{align*}
	\end{minipage}
	\begin{minipage}{0.3\textwidth}
		\begin{align*}
			T(x,x) & \; {:}{-}  \; R(x,y,z)  \\
			T(y,y) & \; {:}{-}  \; R(x,y,z)  \\
			T(z,z) & \; {:}{-}  \; R(x,y,z),  
		\end{align*}
	\end{minipage}
\\
\item 
and then to check whether $T$ and $D$ contain a common pair of elements or $D$ has a loop:\\[1ex]
$\qquad\ \  \goal  \; {:}{-} \; T(x,y), D(x,y)  \qquad \qquad \ 	\goal  \; {:}{-} \; D(x,x).$ \qedhere
\end{itemize} 
\end{proof}

\begin{corollary}\label{cor:gqplus}
The intersection of GSO and Datalog contains problems that cannot be expressed in 
GQ$^+$. 
\end{corollary}
\begin{proof} 
Since GSO is closed under negation, 
Proposition~\ref{prop:eq-gso} implies that 
the complement of $\Csp(\bB)$ is in GSO, and  Proposition~\ref{prop:eq-gq} shows that
$\Csp(\bB)$ is not in GQ$^+$. 
\end{proof} 

We will see a strengthening of this result in 
Corollary~\ref{cor:gqplus+}.

\subsection{Separation in MSO}
\label{sect:sep-mso}
In this section, we will prove that there are problems in MSO that can be expressed in Datalog, but that cannot be expressed in Nemodeq. The MSO sentence describes 
the complement of the CSP of a structure $\bC$ which is closely related to the structure $\bB$ from Section~\ref{sect:sep}. 
Throughout this section, let $\bC$ be the 
$\{D,R\}$-structure 
$(\Dom;D^\bC,R^\bC)$ where 
\begin{align*}
    \Dom & \coloneqq \mathbb{N} \ \dot{\cup} \ 
    {\mathbb N}^3 \\   
	D^\bC & \coloneqq \{(x,y) \in \Dom^2 \mid x,y \in \mathbb{N},\  x \neq y\} \\
	R^\bC & \coloneqq \{(x,y,z,t)  \in \Dom^4 \mid x,y,z \in \mathbb{N}, \ t=\atuple{x,y,z}, \  x=y \Rightarrow y = z\}.
\end{align*}

That is, the domain $\Dom$ contains two sorts of elements: all natural numbers \emph{and} all triples of natural numbers.\footnote{For the triples \emph{inside} $\Dom$, we will use the notation $\atuple{n_1,n_2,n_3}$, to better distinguish them from tuples \emph{over} $\Dom$ which we will continue to enclose in round parentheses $(\ldots)$.} In words, the binary relation $D^\bC$ expresses disequality between natural numbers, whereas the quaternary relation $R^\bC$ consists of quadruples where the first three components are natural numbers while the last component is the triple built from the first three components. 
Among all such quadruples, $R^\bC$ contains those where the first three components are identical whenever the first two components are. An alternative formulation would be to say that $R^\bC$ contains all quadruples of the form $(x,y,z,\atuple{x,y,z})$ except those where $x=y$ and $y \neq z$.

\begin{proposition}\label{prop:datalog} 
The complement of $\Csp(\bC)$ is in Datalog. 
\end{proposition}
\begin{proof} 
The idea of the proof is analoguous to that of Proposition~\ref{prop:ex-datalog}. Consider the following Datalog program $\Pi$\\ 
\begin{minipage}{0.45\textwidth}
	\begin{align*}
		T(y,z) & \; {:}{-}  \; T(x,y), R(x,y,z,t)  \\
		T(y,x) & \; {:}{-}  \; T(x,y)  \\
		T(x,z) & \; {:}{-}  \; T(x,y), T(y,z) \\ 
		\goal & \; {:}{-} \; T(x,y), D(x,y) \\
		\goal & \; {:}{-} \; T(t,t), R(x,y,z,t) 
	\end{align*}
\end{minipage}
\begin{minipage}{0.3\textwidth}
\begin{align*}
T(x,x') & \; {:}{-} \; R(x,y,z,t), R(x',y',z',t) \\
T(y,y') & \; {:}{-} \; R(x,y,z,t), R(x',y',z',t) \\
T(z,z') & \; {:}{-} \; R(x,y,z,t), R(x',y',z',t) \\
T(x,x) & \; {:}{-}  \; D(x,y)  \\
T(y,y) & \; {:}{-}  \; D(x,y)  
\end{align*}
\end{minipage}\\[2ex]
Now assume $\bA$ is such that $\Pi$ does not derive $\goal$ and let $\bA' = \Pi(\bA)$.
Let $$A^* = \{a_1,a_2,a_3 \in A \mid (a_1,a_2,a_3,a_4) \in R^{\bA}\} \cup \{a_1,a_2 \in A \mid (a_1,a_2) \in D^{\bA}\}.$$ Thanks to $\bA' = \Pi(\bA)$ and $\Pi(\bA)$ not deriving $\goal$, we observe the following: 
\begin{enumerate}
\item $T^{\bA'}$ is an equivalence relation on $A^*$ and is disjoint from $D^\bA$,  
\item $A^*$ is disjoint from $\{a_4 \in A \mid (a_1,a_2,a_3,a_4) \in R^{\bA}\}$,
\item for any $(a_1,a_2,a_3,a) \in R^{\bA}$ and $(a'_1,a'_2,a'_3,a) \in R^{\bA}$, we have that $a_i$ and $a'_i$ are in the same $T^{\bA'}$-equivalence class for any $i \in \{1,2,3\}$.   
\end{enumerate}
We now define a homomorphism $h$ from $\bA$ to $\bC$. Pick an injective function $h^* \colon {A^*}_{\!\!/T^{\bA'}} \to \mathbb{N}$ mapping $T^{\bA'}$-equivalence classes to natural numbers. For every $a \in A^*$, we let $h(a)=h^*([a]_{T^{\bA'}})$ and for every $a \in A \setminus A^*$, we let $h(a) \in \mathbb{N}^3$ be such that, whenever $(a_1,a_2,a_3,a) \in  R^\bA$ for some $a_1,a_2,a_3 \in A^*$ then $h(a) = \langle h(a_1),h(a_2),h(a_3) \rangle$ (thanks to item 3 above, this choice is always possible). It can now be readily checked that $h$ is a homomorphism from $\bA$ to $\bC$.     
\end{proof} 


We will now argue that $\Csp(\bC)$ can be expressed in MSO.
We first 
provide a characterisation for a given $\bA$ being in $\Csp(\bC)$.

\begin{lemma}\label{lem:sepchar}
A finite $\{D,R\}$-structure $\bA$ is in $\Csp(\bC)$ if and only if the following conditions are satisfied:
\begin{enumerate}
\item[(1)] There is a partition $(A',A'')$ of $A$ such that $A'$ consists of exactly those elements of $A$ occurring in the fourth position of some tuple from $R^{\bA}$, while $A''$ contains at least all elements of $A$ occurring in $D^{\bA}$ or in any of the the first three positions of some tuple from $R^{\bA}$.
\item[(2)] The binary relation $D^{\bA}$ is disjoint from the smallest equivalence relation $\approx$ over $A''$ that satisfies the following:
\begin{itemize}
\item[(2a)] For any $(a_1,b_1,c_1,d) \in R^{\bA}$ and $(a_2,b_2,c_2,d) \in R^{\bA}$, we have $a_1 \approx a_2$ as well as $b_1 \approx b_2$ and $c_1 \approx c_2$.
\item[(2b)] For any $(a,b,c,d) \in R^{\bA}$ with $a \approx b$, we also have $b \approx c$.   
\end{itemize}
\end{enumerate}
\end{lemma}
\begin{proof}
For the ``if'' part, we provide a homomorphism $h:A \to \Dom$ defined by
$$
a \mapsto \left\{
\begin{array}{ll}
f([a]_\approx) & \text{ if } a \in A''\\
\atuple{f([b]_\approx),f([c]_\approx),f([d]_\approx)} & \text{ whenever } (b,c,d,a) \in R^{\bA}\\ 
\end{array}
\right.
$$
where $f \colon {A''}_{\!\!/\approx} \to \mathbb{N}$ is an arbitrary injection from the $\approx$-equivalence classes of $A''$ to the natural numbers. Note that (2a) ensures well-definedness of the second case. 

For the ``only if'' part, assume a homomorphism $h$ from $\bA$ to $\bC$. Without loss of generality we may assume $h$ to be such that it maps any $a \in A$ not participating in any of the relations to an arbitrary element of $\mathbb{N}$.
We then let $A'' = h^{-1}(\mathbb{N})$ and $A' = A \setminus A''$, which by assumption satisfy Condition (1). We define an equivalence relation $\dot{\approx}$ over $A''$ by letting $a\ \dot{\approx}\ b$ iff $h(a) = h(b)$. We observe that $\dot{\approx}$ must be disjoint from $D^\bA$ and that ${\approx} \subseteq \dot{\approx}$. Therefore $\approx$ must be disjoint from $D^\bA$, warranting Condition (2).    
\end{proof}

We will construct an MSO sentence that describes $\Csp(\bC)$ in a stepwise fashion.
As a ``witness'' of the existence of a homomorphism $h$ from $\bA$ to $\bC$, we will employ a set variable $X$ which is supposed to correspond to the set
$$ 
\{a \in A \mid h(a) = \atuple{n,n,n}\in \mathbb{N}^3 \}. 
$$
If $X$ represented this set, $\bA$ would necessarily need to satisfy the MSO formula
$$\varphi_\mathrm{well}(X) \coloneqq \varphi_\mathrm{sort}(X) \wedge \varphi_\mathrm{triag}(X) \wedge \varphi_\mathrm{compat}(X)$$
whose components are specified as follows:

The first conjunct $\varphi_\mathrm{sort}$ prevents ``type clashes'' between numbers and triples (where the set variable $T$ is used to identify the elements that $h$ maps to triples): 
\begin{align*}
	\varphi_\mathrm{sort}(X) \coloneqq \exists T. X\subseteq T \wedge \forall x,y,z,t & \big(R(x,y,z,t) \Rightarrow \neg T(x) \wedge \neg T(y) \wedge \neg T(z) \wedge T(t)\big) \\
	& \wedge \big(D(x,y) \Rightarrow \neg T(x) \wedge \neg T(y)\big).
\end{align*}

The second conjunct $\varphi_\mathrm{triag}$ ensures that $X$ covers those $a \in A$ with $h(a)=\atuple{n_1,n_2,n_3}$ where $n_1=n_2=n_3$ is warranted because of $n_1 = n_2$ being known: 
$$\varphi_\mathrm{triag}(X) \coloneqq \forall x,y,z,t. R(x,y,z,t) \wedge \varphi_\mathrm{Eq}(x,y,X) \Rightarrow X(t)$$
where $\varphi_\mathrm{Eq}(x,y,X)$ is an MSO formula with two additional free first-order variables which (presuming the choice of $X$ as above) holds for every pair $(a_1,a_2) \in A^2$ for which the above Datalog program would derive the $Eq$ predicate:
$$\varphi_\mathrm{Eq}(x,y,X) \coloneqq \forall Y. \big(Y(x) \wedge \varphi_\mathrm{propag}(Y,X) \Rightarrow Y(y)\big)$$
with the following formula $\varphi_\mathrm{propag}(Y,X)$ enforcing that $Y$ is a set which is closed under ``derived equalities'':
\begin{align*}
\forall x_1,\ldots,x_6,t & \big( R(x_1,x_2,x_3,t) \wedge R(x_4,x_5,x_6,t) \big)
 \Longrightarrow \bigwedge_{\mathclap{1\leq i\leq 3}} \big( Y(x_i) \Leftrightarrow Y(x_{i+3}) \big) \\ 
  \wedge \ \forall x_1,x_2,x_3,t & \big( R(x_1,x_2,x_3,t) \wedge X(t) \big)
\Longrightarrow \bigwedge_{\mathclap{1 \leq i < j \leq 3}}\big( Y(x_i) \Leftrightarrow Y(x_j) \big).
\end{align*}

Finally, the third conjunct $\varphi_\mathrm{compat}(X)$ ensures that the ``derivable equalities'' are not in conflict with the disequalities imposed through $D$:
$$\varphi_\mathrm{compat}(X) \coloneqq \forall x,y \big (\varphi_\mathrm{Eq}(x,y,X) \Rightarrow \neg D(x,y) \big ).$$

\begin{proposition}\label{prop:eq-mso}
A finite $\{D,R\}$-structure $\bA$ is in $\Csp(\bC)$ if and only if it satisfies the MSO sentence
$\exists X. \varphi_\mathrm{well}(X)$.
\end{proposition}
\begin{proof}
For the ``only if'' direction, we make a formal argument along the intuitions provided above. Assume a homomorphism $h$ from $\bA$ to $\bC$. We show that the assignment $X \mapsto A^=$ with $$A^= \coloneqq \{a \in A \mid (n,n,n,h(a)) \in R^{\bA}, n\in \mathbb{N} \}$$ validates $\varphi_\mathrm{well}(X)$: 

It validates $\varphi_\mathrm{sort}(X)$ because one can choose $T \mapsto h^{-1}(\mathbb{N}^3)$ whereas in order to satisfy the premises of any implication inside $\varphi_\mathrm{sort}(X)$, all first order variables but $t$ must be instantiated with elements of $h^{-1}(\mathbb N)$. 

Validation of $\varphi_\mathrm{triag}(X)$ can be obtained from the fact that, for any $(a_x, a_y, a_z, a_t) \in R^{\bA}$ satisfying $h(a_x)=h(a_y)$ it must also hold that $h(a_y)=h(a_z)$ by virtue of $h$ being a homomorphism into $\bC$ -- where it remains to show that $\varphi_\mathrm{Eq}(a_x,a_y,A^=)$ indeed implies $h(a_x)=h(a_y)$. To see that this is true, consider the case where $Y$ is mapped to the set $A_x = h^{-1}(h(a_x))$. Then the implication's consequence indeed ensures $h(a_x)=h(a_y)$, provided that the premise is found to be true. The premise's first conjunct holds by construction, whence it remains to show that $\varphi_\mathrm{propag}(A_x)$ also holds true, which -- in view of our chosen assignment for $X$ -- again follows from the fact that $h$ is a homomorphism into $\bC$ .

Finally, validation of $\varphi_\mathrm{compat}(X)$ is established by recalling that $\varphi_\mathrm{Eq}(a_x,a_y,A^=)$ implies $h(a_x)=h(a_y)$ (see above) and therefore $(h(a_x),h(a_y)) \not\in D^{\bC}$ by definition.

\medskip

For the ``if'' direction, assume that some assignment $X \mapsto A_X \subseteq A$ validates $\varphi_\mathrm{well}(X)$ for $\bA$. We will use \Cref{lem:sepchar} to argue that $\bA$ is in $\Csp(\bC)$. Let $A'' = \{a_4 \mid (a_1,a_2,a_3,a_4) \in R^\bA\}$ and let $A' = A \setminus A''$.
Then, by virtue of $\varphi_\mathrm{sort}$, the partition $(A',A'')$ satisfies Condition (1) of  \Cref{lem:sepchar}. Satisfaction of the Condition (2) would be an immediate consequence of $\varphi_\mathrm{compat}(X)$ if one could establish that the binary relation $Eq_{A_X} \coloneqq \{(a_1,a_2) \mid \varphi_\mathrm{Eq}(a_1,a_2,A_X)\}$ contains $\approx$. This is achieved by arguing that $Eq_{A_X}$ is an equivalence relation and satisfies the two subitems in \Cref{lem:sepchar} defining $\approx$. The former is a result of the structure of  $\varphi_\mathrm{Eq}$ as an undirected reachability check.
The latter is enforced through $\varphi_\mathrm{propag}$ and $\varphi_\mathrm{triag}$ which together ensure the two subconditions (2a) and (2b).  
\end{proof}

The proof of the following proposition is similar to the proof of Proposition~\ref{prop:eq-gq}. 

\begin{proposition}\label{prop:eq-c-gq}
	The complement of $\Csp(\bC)$ cannot be expressed in GQ$^+$. 
\end{proposition}
\begin{proof}
Let $q,m,k \in {\mathbb N}$ be such that $q,m \geq 1$ and $k \geq m$. We construct an unsatisfiable instance $\bA$ of $\Csp(\bC)$ such that Duplicator wins the $q$ nested width $m$ guarded $k$-pebble game on $(\bA,\bC)$. Let $\phi'_i$ be as defined in the proof of Proposition~\ref{prop:eq-gq}, 
but replacing $R(x,y,z)$ by $R(x,y,z,t)$ where $t$ is a fresh variable. Let $\bA$ be the canonical database of $\phi := \phi_{q+1}(x,y) \wedge D(x,y)$ after transforming it into prenex normal form. Similarly as in the proof of 
Proposition~\ref{prop:eq-gq}, it is straightforward to show that $\bA$ does not have a homomorphism to $\bC$. We need to show that Duplicator has a winning strategy in the $q$ nested width $m$ guarded $k$-pebble game on $(\bA, \bC)$. 
Let ${\mathcal H}_0$ be the set of all partial homomorphisms $h$ from substructures of $\bA$ to $\bC$ with domain size at most $k$. Let ${\mathcal H}_i$, for all $i \in \{1,\dots,q\}$, be the set of all partial homomorphisms $h$ from substructures of $\bA$ to $\bC$ with domain size at most $k$ defined as follows. 
If $u,v \in \dom(h)$ are variables of $\phi$ that do \emph{not} appear as the last argument of any atomic formula $R(x,y,z,t)$ in $\phi$, then we require that $h(u) = h(v)$  
 if and only if both $u$ and $v$ are part of the same subformula of $\phi_{q+1}(x,y) \wedge D(x,y)$ of the form $\phi_{i-1}$, or $\dom(h)$ contains all the variables $x_r,\dots,x_s,x_{s+1}$ from $\phi_i$ for some $0 \leq r < s < k$ and $u$ appears in the subformula $\phi_{i-1}(x_r,x_{r+1})$ and $v$ appears in the subformula $\phi_{i-1}(x_s,x_{s+1})$. 
If $t,t' \in \dom(h)$ are such that $\phi$ contains conjuncts $R(x,y,z,t)$ and $R(x',y',z',t')$, then these conjuncts are unique by the construction of $\phi$; in this case, $h(t) = h(t')$ if and only if $h(x)=h(x')$, $h(y)=h(y')$, and $h(z) = h(z')$. Otherwise, 
$h(t)$ must be distinct from all other points in the image of $h$. 

 Since $\dom(h) \leq k$ it is impossible that $r=0$ and $s = k-1$; hence, 
if $x_0$ and $x_k$ are from $\dom(h)$, then $h(x_0) \neq h(x_k)$. Clearly, ${\mathcal H}_i$ is closed under restrictions and non-empty. Now suppose that $h \in {\mathcal H}_i$ is such that there exists a tuple $a$ in a relation of $\bA$ whose entries contain $\dom(h)$. Let $S \subseteq A$ be a superset of $\dom(h)$ of cardinality at most $k$. If this relation is $D$, then $a$ equals $(x,y)$ and we may clearly extend $h$ to some $h' \in {\mathcal H}_i$ with $h'(x) \neq h'(y)$. 
Otherwise, the relation must be $R$ and $a$ equals $(x_0,x_k,y,t)$ for some variables $x_0,x_k,y,t$ from a subformula $\phi$ of the form $\phi_j(x,y)$ for some $j \leq i$. Note that $h(x_0) = h(x_k) = h(y)$ if $j < i$, and $h(x_0) \neq h(x_k)$ and $h(x_k) \neq h(y)$ otherwise. It is easy to see that in both cases we may extend $h$ to some $h' \in {\mathcal H}_i$ with domain $S$. 
Finally, if $h \in {\mathcal H}_i$ for $i \in \{1,\dots,q\}$ and $S \subseteq A$ with $|S| \leq k$ is such that $\dom(h) \subseteq S$, then ${\mathcal H}_{i-1}$ contains an extension $h'$ of $h$ with domain $S$: 
if $x \in S$ appears in a copy of $\phi_{i-1}$ whose variables intersect $\dom(h)$, we set $h'(x)$ to the value taken by these variables under $h$. If $\dom(h)$ contains all the variables $x_r,\dots,x_{s},x_{s+1}$ from $\phi_i$ for some $0 \leq r < s < k$ and $x$ appears in the subformula $\phi_{i-1}(x_s,x_{s+1})$ and some variable of $\phi_{i-1}(x_r,x_{r+1})$ appears in $\dom(h)$, then we set $h'(x)$ to the value taken by this variable under $h$. Otherwise, pick $h'(x)$ to be a new value. Hence, ${\mathcal H}_0,{\mathcal H}_1,\dots,{\mathcal H}_q$ is indeed a winning strategy for Duplicator for the $q$ nested width $m$ guarded $k$ pebble game on $(\bA,\bC)$. 
\end{proof}

Note that the following is a strengthening of Corollary~\ref{cor:gqplus}. 

\begin{corollary}\label{cor:gqplus+}
	The intersection of MSO and Datalog contains problems that cannot be expressed in GQ$^+$. 
\end{corollary}
\begin{proof}
 We already know that 
 the complement of $\Csp(\bC)$ is in Datalog (Proposition~\ref{prop:datalog}), but not in GQ$^+$ (Proposition~\ref{prop:eq-c-gq}). 
	Proposition~\ref{prop:eq-mso} shows that $\Csp(\bB)$ is in MSO; then, as MSO is closed under negation, the complement of $\Csp(\bB)$ is MSO-expressible as well.
\end{proof}
\color{black}

\section{A coNP-complete CSP in MSO}
\label{sect:coNP}
How tame is the intersection of MSO and Datalog? We have seen that every CSP which is in MSO and in Datalog can be expressed as the CSP of an $\omega$-categorical structure (Corollary~\ref{cor:omega-cat}).  
It would be interesting to know 
whether this can be strengthened further by 
showing that these problems can even be expressed as CSPs for a particularly well-behaved class of structures, namely for 
\emph{reducts of finitely bounded homogeneous structures} (the definition of this class can be found below). 
It has been conjectured that every CSP 
for a structure from this class is 
 in P or NP-complete~\cite{BPP-projective-homomorphisms}; we will refer to this conjecture as the \emph{infinite-domain dichotomy conjecture}. 
 This conjecture generalises the finite-domain complexity dichotomy that was conjectured by Feder and Vardi~\cite{FederVardi} and proved by Bulatov~\cite{BulatovFVConjecture} and by Zhuk~\cite{ZhukFVConjecture}.


\begin{definition}\label{def:homog}
A relational structure is called 
\emph{homogeneous} if every 
isomorphism between finite substructures can be extended to an automorphism.
\end{definition}

Homogeneous structures with a \emph{finite} relational signature (and all of their reducts) are $\omega$-categorical. 
Reducts $\bB$ of homogeneous
structures with finite relational signature are a particularly well-behaved subclass of the class of $\omega$-categorical structures. See~\cite{BKOPP} for an example of 
a result that only holds for such $\bB$, but not for $\omega$-categorical structures in general. 
However, there are uncountably many countable homogeneous digraphs with pairwise distinct CSP, and it follows that there are homogeneous digraphs with undecidable CSPs~\cite{BodirskyNesetrilJLC}. 

To define a class of structures whose CSPs have better computational properties we need the following concept. 
The \emph{age} of a structure $\bB$ is the class of all finite structures that embed into $\bB$. 
A structure $\bB$ is called
\emph{finitely bounded} if there exists a 
finite set $\mathcal F$ of finite structures 
such that a finite structure $\bA$ belongs to the age of $\bB$ if and only if no structure in ${\mathcal F}$ embeds into $\bA$. 

\begin{example}
The structure $\bB$ from Section~\ref{sect:sep} is finitely bounded and homogeneous.
\enex
\end{example}

Clearly, the CSP of finitely bounded structures $\bB$ is in NP. 
It is also easy to see that every CSP of a reduct of a finitely bounded structure is in NP as well. 


\begin{question}\label{quest:fin-bd-hom}
Is it true that every CSP in MSO is 
the CSP for a reduct of a finitely bounded homogeneous structure? 
\end{question} 

In this section, we do not answer this question, but at least we can show that the class of CSPs in MSO is (under complexity-theoretic assumptions) larger than the class of CSPs for reducts of finitely bounded structures (see Section~\ref{sect:csps}). 
In particular, the class of CSPs that can be expressed in MSO does not fall into the scope of the mentioned infinite-domain CSP dichotomy conjecture.

A \emph{tournament} is a directed graph such that for any pair of vertices $u$ and $v$, there is an edge from $u$ to $v$ or an edge from $v$ to $u$, but not both. 
Let $\mathcal T = \{\bT_2,\bT_3,\dots\}$ be the set of \emph{Henson tournaments}: 
the tournament $\bT_n$, for $n \geq 2$, has vertices $0,1,\dots,n+1$ and the following edges:
\begin{itemize}
\item $(i,i+1)$ for $i \in \{0,\dots,n\}$;
\item $(0,n+1)$;
\item $(j,i)$ for $i+1 < j$ and $(i,j) \neq (0,n+1)$. 
\end{itemize}
The class $\mathcal C$ of all finite loopless digraphs that do not embed any of the digraphs from $\mathcal T$
is an amalgamation class, and hence there
exists a homogenous structure $\bH$ with age $\mathcal C$. It has been shown in~\cite{BodirskyGrohe} that $\Csp(\bH)$ is coNP-complete. 


\begin{proposition}\label{prop:couterexpl}
$\Csp(\bH)$ can be expressed in MSO. 
\end{proposition}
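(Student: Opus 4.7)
My plan is to prove Proposition~\ref{prop:couterexpl} by exhibiting an MSO sentence $\Phi$ over the digraph signature $\{E\}$ such that $\llbracket \Phi \rrbracket = \Csp(\bH)$.

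The first step is to characterise $\Csp(\bH)$ combinatorially. Using homogeneity of $\bH$ and universality for its age $\mathcal{C}$, a finite digraph $\bA$ satisfies $\bA \to \bH$ if and only if there exist an equivalence relation $\sim$ on $A$ and a digraph $\bE$ on $A/{\sim}$ containing the quotient $\bA/{\sim}$ as a subgraph (possibly with additional edges) such that $\bE \in \mathcal{C}$, i.e., $\bE$ is loopless and embeds no Henson tournament $\bT_n$. One direction composes the quotient map $\bA \to \bA/{\sim}$ with the inclusion $\bA/{\sim} \to \bE$ and an embedding $\bE \hookrightarrow \bH$ provided by universality; the other takes $\sim$ to be the kernel of a hypothetical homomorphism $\bA \to \bH$ and $\bE$ to be the induced substructure of $\bH$ on its image.

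The crux is translating this characterisation into MSO, which permits only unary set quantification whereas both $\sim$ and the edges of $\bE$ are binary data on $A$. My strategy is to exploit the rigidity of Henson tournaments: each $\bT_n$ is a tournament in which every pair of distinct vertices is already joined by an edge, so any non-trivial identification inside a $\bT_n$-subgraph produces a loop. This rigidity should allow the relevant binary data to be captured via a constant-size vertex labelling by unary predicates together with first-order constraints enforcing that no Henson tournament survives in the extended quotient. Since $\Csp(\bH)$ is coNP-complete by Bodirsky--Grohe, $\Phi$ cannot lie in monadic SNP (which by Feder--Vardi would force the CSP to be finite-domain, hence in NP); consequently $\Phi$ must use a universal set quantifier or an alternation thereof.

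The main technical obstacle is the explicit construction of $\Phi$. The family $\{\bT_n\}_{n \geq 2}$ of forbidden substructures is infinite, but it admits a uniform description (each $\bT_n$ is a tournament on $n+2$ vertices whose forward edges with respect to a Hamiltonian path are exactly the path edges together with the single forward chord from start to end), and leveraging this uniformity is the heart of the proof. Once $\Phi$ is written down, I would verify both inclusions of $\llbracket \Phi \rrbracket = \Csp(\bH)$ by appeal to the combinatorial characterisation from the first step.
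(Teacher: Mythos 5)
Your overall strategy points in the same direction as the paper's: reduce $\Csp(\bH)$ to the condition ``loopless and containing no Henson tournament'' (your quotient characterisation plus the rigidity observation collapses to exactly this, since a homomorphism into a loopless structure is injective on any tournament), and then express that condition in MSO by exploiting the uniform description of the family $\{\bT_n\}_{n \geq 2}$. The problem is that the proposal stops precisely where the proof begins. The entire technical content of the proposition is the explicit MSO formula $\Phi$ asserting that the substructure induced on a quantified set $X$ is isomorphic to some $\bT_n$, together with the verification of both inclusions; you yourself label this construction ``the main technical obstacle'' and ``the heart of the proof'' and then defer it (``once $\Phi$ is written down, I would verify both inclusions''). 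As it stands, no formula is produced and nothing is proved.

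The missing idea is how to recover the linear (Hamiltonian-path) structure of $\bT_n$ using only monadic quantification. The paper does this by existentially quantifying, besides $X$, two designated vertices $s,t$ and a partition of $X \setminus \{s\}$ into two sets $A$ and $B$ (intended as the odd- and even-indexed vertices of the path); the edge relation of a Henson tournament then makes $E$ a linear order on each of $A$ and $B$ separately, and first-order ``zig-zag'' conditions (for every edge from $x \in A$ to $y \in B$ with $y \neq t$ there is exactly one $z \in A$ with $(y,z),(z,x) \in E$, and symmetrically) force the two chains to interleave into a single path, pinning the induced tournament down to $\bT_{|X|-1}$. Without this or an equivalent device, the claim that ``a constant-size vertex labelling by unary predicates together with first-order constraints'' suffices is an unsupported assertion rather than an argument. (A minor further point: your aside that membership in monadic SNP ``would force the CSP to be finite-domain'' overstates Feder--Vardi, who show only a polynomial-time equivalence between MMSNP and finite-domain CSP; but that remark plays no role in the proof in any case.)
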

\begin{proof}
We have to find an MSO sentence $\Phi$  that holds
on a given digraph $(V;E)$ if and only if
$(V;E)$ does not embed any of the tournaments from $\mathcal T$. 
We specify an MSO $\{X,E\}$-sentence $\Phi$, for a unary relation symbol $X$, 
that is
true on a finite $\{X,E\}$-structure $\bS$ if and only if
$(X^\bS;E^\bS)$ is isomorphic to $\bT_n$, for some $n \geq 2$. 
In $\Phi$, we existentially quantify over 
\begin{itemize}
\item two vertices $s,t \in X$ (that stand for the
vertex $0$ and the vertex $n+1$ in $\bT_n$). 
\item a partition of $X \setminus \{s\}$ 
into two sets $A$ and $B$ (they stand for the set of even and the set of odd numbers in $\{1,\dots,n+1\}$). 
\end{itemize}
The formula $\Phi$ has the following conjuncts:
\begin{enumerate}
\item a first-order formula that states that 
$E$ defines a tournament on $X$; 
\item a first-order formula that expresses that $E$ is a linear order on $A$ with maximal element $a$; 
\item a first-order formula that expresses that $E$ is a linear order on $B$ with maximal element $b$; 
\item $E(s,t)$, $E(s,a)$, $E(a,b)$, and $E(x,s)$ for all $x \in X \setminus \{a,t\}$;  
\item a first-order formula that states that if there is an edge from an element $x \in A$ to an element $y \in B$ then there is precisely one
element $z \in A$ such that $(y,z),(z,x) \in E$, 
unless $y=t$; 
\item a first-order formula that states that if there is an edge from an element $x \in B$ to an element $y \in A$ then there is precisely one
element $z \in B$ such that $(y,z),(z,x) \in E$, 
unless $y=t$. 
\end{enumerate}
We claim that the MSO sentence $\forall x \colon \neg E(x,x) \wedge \forall X \colon \neg \Phi$ 
holds on a finite digraph if and only if the digraph  is loopless and does not embed $\bT_n$, for all $n \geq 3$. 
The forwards implication easily follows from the observation that if $(X;T)$ is isomorphic to $\bT_n$, for some $n \geq 2$, then $\phi$ holds; this is straightforward from the construction of $\Phi$: we set $s$ to $0$, $t$ to $n+1$, we set $A$ to the set of even and $B$ to the set of odd numbers in $\{1,\dots,n+1\}$. 
Conversely, suppose that $\Phi$ holds. Then
$(X;T)$ is a tournament. We construct an isomorphism $f$ from $(X;T)$ to $\bT_{|X|-1}$ as follows. Define $f(s) \ceq 0$, $f(a) \ceq1$, and $f(b) = 2$. 
Since $E(a,b)$, by item 5 there exists exactly one $a' \in A$ such that $E(b,a')$ and $E(a',a)$. 
Define $f(a') \ceq 3$. 
If $a' = t$ then we have found an isomorphism with $\bT_2$. Otherwise, the partial map $f$ defined so far is an embedding into $\bT_n$ for some $n \geq 3$. 
Item 6 and $E(b,a')$ imply that there exists exactly one $b' \in B$ such that $E(a',b')$ and $E(b',b)$,
and we define $f(b') \ceq 4$. Continuing in this manner, we eventually define $f$ on all of $X$ and find an isomorphism with $\bT_{|X|-1}$. 
\end{proof}

This shows that, unless NP = coNP, $\Csp(\bH)$ cannot be expressed as $\Csp(\bB)$ for some reduct of a finitely bounded  structure, because such CSPs are in NP.  

\begin{corollary}
If NP $\neq$ coNP, then there
are CSPs in MSO that cannot be expressed as CSPs of finitely bounded structures. 
\end{corollary}

We do not know how to show this statement without complexity-theoretic assumptions, even if we just want to rule out that $\Csp(\bH)$ can be expressed as $\Csp(\bB)$ for some reduct of a finitely bounded  \emph{homogeneous} structure. 
However, using the tools from Section~\ref{sect:pebble} we can rule out that the complement of $\Csp(\bH)$ can be expressed in Datalog. Note that this insight is obtained without resorting to any complexity-theoretic assumptions.  

\begin{theorem}\label{thm:appl}
Let $\Phi$ be the MSO sentence from the proof of Proposition~\ref{prop:couterexpl}. 
Then 
$\llbracket \Phi \rrbracket$  
cannot be expressed in Datalog. 
\end{theorem} 
\begin{proof}
We use Theorem~\ref{thm:main};
it suffices to find for all $\ell,k$ a finite model $\bA$ of $\Phi$ such that Duplicator wins the existential 
$(\ell,k)$-pebble game for $\llbracket \Phi \rrbracket$ 
on $\bA$. Simply choose $\bA$ to be the tournament $\bT_{k+1}$; then clearly $\bA \models \neg \Phi$. Note that $\Csp(\bH) \cap \llbracket \Phi \rrbracket = \emptyset$, 
so it suffices to find a winning strategy ${\mathcal H}$ for Duplicator 
for the existential $(\ell,k)$-pebble game on $(\bA,\bH)$. 

Let $\mathcal H$ be the set of all partial embeddings $h$ from $\bA$ to $\bB$ such that $\dom(h) \leq k$. 
Then $\mathcal H$ is clearly 
 non-empty and closed under restrictions. 
To verify that $\mathcal H$ is a winning strategy for Duplicator we still need to show that for all $h \in {\mathcal H}$ with $\dom(h) = \{a_{k-d+1},\dots,a_k\}$ and for all $a_1,\dots,a_{k-d} \in A$ there is an extension $h' \in {\mathcal H}$ of $h$ such that $h'$ is also defined on $a_1,\dots,a_{k-d}$. Let $\bA'$ be the substructure of $\bA$ with the domain $\{a_1,\dots,a_k\}$. 
Since $|A'| < |A|$, there exists an embedding of $\bA'$ into $\bH$; we identify the elements of $A'$ with their image under this embedding, so that $\bA'$ is now a substructure of $\bH$, and $h$ is an isomorphism between finite substructures of $\bH$. 
 By the homogeneity of $\bH$, there exists an automorphism $\alpha$ of
$\bH$ which extends $h$. Then the restriction of $\alpha$ to $a_1,\dots,a_k$ is the desired extension $h'$ of $h$ in ${\mathcal H}$. 
\end{proof}


\section{Conclusion, Open Problems, and Prospect}
In this article, we provided a game-theoretic characterisation of the problems in Guarded Second-order Logic that are equivalent to a Datalog program. We also proved the existence of canonical Datalog programs for GSO sentences whose models are closed under homomorphisms. 
To prove these results, we showed 
that every class of finite $\tau$-structures in GSO whose complement is closed under homomorphisms is a finite union of CSPs. 

Our results imply that the so-called universal-algebraic 
approach, which has eventually led to the classification of finite-domain CSPs in Datalog~\cite{BoundedWidthJournal}, can be applied to study problems that are simultaneously in GSO and in Datalog (see~\cite{Collapses}). 
They might also pave the way towards a syntactic characterisation of GSO $\cap$ Datalog and of  MSO $\cap$ Datalog;
however, we showed that the class of nested guarded queries from~\cite{BourhisKroetzschRudolph} is not expressive enough for this purpose. 
See Figure~\ref{fig:summary} for an overview of the considered logics, their inclusions, and references to examples that show that the inclusions are strict. 

\begin{figure}
\begin{center}
\begin{tikzpicture}
    \matrix (A) [matrix of nodes, row sep=1.2cm]
    { 
        SO & & \\
	Datalog & GSO &  \\
        & GQ$^+$ & MSO \\
        & GQ & Nemodeq \\
        & Frontier-Guarded Datalog & Modeq \\
        & & Monadic Datalog \\
    };
    \draw (A-1-1) -- (A-2-2) 
    node[midway, above] {\quad\quad\red{Expl.~\ref{expl:datalog-completement}}};
    \draw (A-1-1)--(A-2-1) 
  node[pos=.3, left] {\red{CSP$(K_3)$}}
  node[pos=.7, right] {Thm.~\ref{thm:gq-to-datalog}};
    \draw (A-2-1)--(A-3-2)
    node[midway, above] {\red{\quad\quad Expl.~\ref{expl:datalog-completement}}};
    \draw (A-2-2)--(A-3-2) 
    node[pos=.7, right] {Prop.~\ref{prop:gqpgso}};
    \draw (A-2-2)--(A-3-3) 
     node[midway, above] {\quad\red{Prop.~\ref{prop:not-mso}}};
   \draw (A-3-2)--(A-4-3) 
     node[midway, above] {\quad\red{Prop.~\ref{prop:not-mso}}};
   \draw (A-4-2)--(A-5-3)
     node[midway, above] {\quad\red{Prop.~\ref{prop:not-mso}}};
   \draw (A-5-2)--(A-6-3)
     node[midway, above] {\quad\red{Prop.~\ref{prop:not-mso}}};
    \draw (A-3-2)--(A-4-2) 
    node[midway, left] {\red{Prop.~\ref{prop:inexpr}}};
    \draw (A-4-2)--(A-5-2)
    node[midway, left] {\red{Prop.~\ref{prop:Q}}};
    \draw (A-3-3)--(A-4-3)
    node[pos=.3, right] {\red{CSP$(K_3)$}}
    node[pos=.7, right] {\cite{RK2013}};
    \draw (A-4-3)--(A-5-3)
    node[midway, right] {\red{Prop.~\ref{prop:inexpr}}};
    \draw (A-5-3)--(A-6-3)
    node[midway, right] {\red{Prop.~\ref{prop:Q}}};
\end{tikzpicture}
\end{center}
\caption{A Hasse Diagram of the logical formalisms considered in this article, ordered by expressive power. The red labels give references to propositions or examples in this text that show that the respective inclusion is strict.} 

\label{fig:summary}
\end{figure}

We state some concrete open problems in the context. 
\begin{enumerate}
\item Is there a \emph{finite} structure whose CSP is in Datalog, but not in Nemodeq? 
Also the expressive power of frontier-guarded Datalog, GQ, and GQ$^+$ for finite-domain CSPs appears to be unexplored. 
\item It is known that \emph{singleton linear arc consistency (SLAC)}~\cite{Kozik16} captures the intersection of finite-domain CSPs with Datalog. It would be interesting to define an appropriate notion of  singleton linear arc consistency for 
CSPs for $\omega$-categorical structures; can it be used to precisely characterise the intersection of MSO and Datalog? 
\item Is there a CSP of a reduct of a finitely bounded homogeneous structure which is not expressible in GSO?
\item Question~\ref{quest:fin-bd-hom}: Is every CSP in MSO $\cap$ Datalog the CSP of a reduct of a finitely bounded homogeneous structure? 
\item Is every CSP in existential MSO also in 
MMSNP? 
\end{enumerate}

We are also confident that our results will advance the understanding of CSPs (the complements of) which are obtained as the homomorphism-closure of the set of some theory's finite models. For example, the homomorphism-closures of the model sets of guarded- and guarded-negation-theories have recently been found to be GSO-expressible~\cite{BodFelKnaRud} so, by virtue of our results, we immediately know they must be (complements of) $\omega$-categorical CSPs.

\bibliography{local.bib}

\end{document}